\documentclass{article}
 
\RequirePackage[margin=1.0in]{geometry} 

\RequirePackage[american]{babel}
\RequirePackage{csquotes}
\RequirePackage[T1]{fontenc}

\RequirePackage{amsmath, mathtools}
\RequirePackage{amssymb, mathrsfs, stmaryrd}

\RequirePackage[bbgreekl]{mathbbol}
\DeclareSymbolFontAlphabet{\mathbbm}{bbold}
\DeclareSymbolFontAlphabet{\mathbb}{AMSb}

\RequirePackage{amsthm}
\RequirePackage{thm-restate} 

\newcommand{\eqdef}{\stackrel{\text{\tiny def}}{=}} 

\newcommand{\from}{\colon} 

\DeclarePairedDelimiter{\xpar}{\lparen}{\rparen} 

\DeclarePairedDelimiterX{\oointerval}[2]{\lparen}{\rparen}{#1,\,#2} 
\DeclarePairedDelimiterX{\ocinterval}[2]{\lparen}{\rbrack}{#1,\,#2} 
\DeclarePairedDelimiterX{\cointerval}[2]{\lbrack}{\rparen}{#1,\,#2} 
\DeclarePairedDelimiterX{\ccinterval}[2]{\lbrack}{\rbrack}{#1,\,#2} 

\DeclarePairedDelimiter{\ceil}{\lceil}{\rceil} 

\newcommand{\aset}[1]{\mathcal{#1}}

\providecommand{\given}{}
\DeclarePairedDelimiterX{\set}[1]{\{}{\}}{%
  \renewcommand{\given}{\nonscript\;\delimsize\vert\nonscript\;\mathopen{}} #1%
} 

\DeclarePairedDelimiter{\range}{\llbracket}{\rrbracket} 

\providecommand{\given}{}
\DeclarePairedDelimiterX{\meanbrackets}[1]{\lbrack}{\rbrack}{%
  \renewcommand{\given}{\nonscript\,\delimsize\vert\nonscript\,\mathopen{}} #1%
}
\DeclareMathOperator*{\meansymbol}{\mathbb{E}}
\NewDocumentCommand{\mean}{e{_}}{%
  \meansymbol%
  \IfNoValueF{#1}{\sb{#1}}%
  \meanbrackets%
} 

\DeclarePairedDelimiterX{\probparen}[1]{\lparen}{\rparen}{%
  \renewcommand{\given}{\nonscript\,\delimsize\vert\nonscript\,\mathopen{}} #1%
}
\DeclareMathOperator*{\probsymbol}{Pr}
\NewDocumentCommand{\prob}{e{_}}{%
  \probsymbol%
  \IfNoValueF{#1}{\sb{#1}}%
  \probparen%
} 

\DeclareMathOperator{\simplex}{\Delta} 

\newcommand{\param}[1]{%
  \texorpdfstring{\ensuremath{#1}\protect\nobreakdash}{#1}%
} 

\DeclarePairedDelimiter{\abs}{\lvert}{\rvert} 

\newcommand{\amatrix}[1]{\mathbf{#1}} 

\DeclareMathOperator{\rank}{rank} 

\newcommand*{\T}{\mkern-1.5mu\mathsf{T}} 

\newcommand{\conj}[1]{\overline{#1}} 

\newcommand*{\A}{\dagger} 

\newcommand*{\inv}{-1} 

\DeclareMathOperator{\lin}{span} 

\DeclarePairedDelimiterX{\inner}[2]{\langle}{\rangle}{#1,\,#2} 
\DeclarePairedDelimiter{\norm}{\lVert}{\rVert} 

\newcommand{\agraph}[1]{\mathscr{#1}}

\newcommand{\bigoh}{\mathcal{O}} 
\newcommand{\bigtheta}{\mathbbm{\Theta}} 

\DeclareMathOperator{\polytime}{poly} 

\newcommand{\aproblem}[1]{\mathcal{#1}} 

\DeclarePairedDelimiter{\ket}{\lvert}{\rangle} 

\DeclarePairedDelimiterX{\braket}[3]{\langle}{\rangle}{%
  #1\,\delimsize\vert\,\mathopen{}%
  \ifblank{#2}{}{#2\,\delimsize\vert\,\mathopen{}}%
  #3%
} 

\DeclarePairedDelimiterX{\ketbra}[2]{\lvert}{\rvert}{%
  #1\,\delimsize\rangle\!\delimsize\langle\,\mathopen{}%
  #2%
} 

\DeclarePairedDelimiterX{\comm}[2]{\lbrack}{\rbrack}{#1,\,#2} 

\DeclarePairedDelimiterX{\anticomm}[2]{\lbrace}{\rbrace}{#1,\,#2} 

\providecommand{%
  \naturals}{\mathbb{N}}            

\providecommand{%
  \integers}{\mathbb{Z}}            

\providecommand{%
  \nonnegativeintegers
}{%
  \integers_{\geq 0}}               

\providecommand{%
  \rationals}{\mathbb{Q}}           

\providecommand{%
  \reals}{\mathbb{R}}               

\providecommand{%
  \simplex}{\Delta}                 

\providecommand{%
  \nonnegativereals
}{%
  \reals_{\geq 0}}                  
\let\nnreals\nonnegativereals       

\providecommand{%
  \complex}{\mathbb{C}}             

\providecommand{%
  \complextorus}{\mathbb{T}}        

\providecommand{%
  \polynomials}{\reals}             

\providecommand{%
  \apolynomial}{p}                  

\providecommand{%
  \sumofsquares}{\mathbbm{\Sigma}}  

\providecommand{%
  \quadraticmodule}{Q}              

\providecommand{%
  \asumofsquares}{\sigma}           
\let\asos\asumofsquares             

\providecommand{%
  \unitvector}{e}                   

\providecommand{%
  \identitymatrix}{\amatrix{I}}     

\providecommand{%
  \hessian}{\amatrix{H}}            

\providecommand{%
  \jacobian}{\amatrix{J}}           

\providecommand{%
  \symmetrizedjacobian%
}{\amatrix{SJ}}                     

\providecommand{%
  \mineigenvalue%
}{%
  \lambda_{\mathrm{min}}%
}                                   

\providecommand{%
  \maxeigenvalue%
}{%
  \lambda_{\mathrm{max}}%
}                                   

\providecommand{%
  \im}{i}                           

\RequirePackage[algoruled, vlined]{algorithm2e}
\SetAlgoSkip{bigskip}

\RequirePackage{graphicx}
\DeclareGraphicsExtensions{.pdf, .png}

\RequirePackage{float}

\RequirePackage{tikz}
\RequirePackage{tikz-3dplot}
\usetikzlibrary{calc, positioning, angles, quotes, arrows.meta}

\RequirePackage{quantikz}
\usetikzlibrary{quantikz2}

\RequirePackage{xcolor}

\definecolor{OIblack}{RGB}{0,0,0}
\definecolor{OIorange}{RGB}{230,159,0}
\definecolor{OIskyblue}{RGB}{86,180,233}
\definecolor{OIbluegreen}{RGB}{0,158,115}
\definecolor{OIyellow}{RGB}{240,228,66}
\definecolor{OIblue}{RGB}{0,114,178}
\definecolor{OIvermillion}{RGB}{213,94,0}
\definecolor{OIredpurple}{RGB}{204,121,167}

\RequirePackage[numbers]{natbib}
\RequirePackage{hyperref}
\definecolor{MyLinkColor}{HTML}{800006}
\definecolor{MyCiteColor}{HTML}{2E7E2A}
\definecolor{MyFileColor}{HTML}{131877}
\definecolor{MyURLColor}{HTML}{8A0087}
\definecolor{MyMenuColor}{HTML}{727500}
\definecolor{MyRunColor}{HTML}{137776}
\hypersetup{
  colorlinks,
  linkcolor=MyLinkColor,
  citecolor=MyCiteColor,
  filecolor=MyFileColor,
  urlcolor=MyURLColor,
  menucolor=MyMenuColor,
  runcolor=MyRunColor
} 

\RequirePackage[
  prefix,
  abbreviations,
  symbols,
  nogroupskip, 
  style = long,
  stylemods = bookindex,
  automake
]{glossaries-extra}

\GlsXtrEnableEntryCounting{abbreviation}{0} 
\GlsXtrEnableInitialTagging{abbreviation}{\initial}

\glsaddkey{dashed}{}
  {\glsentrydashed}
  {\Glsentrydashed}
  {\glsdashed}
  {\Glsdashed}
  {\GLSdashed}   

\ExplSyntaxOn 

\cs_new_protected:Npn \jglsentryprefixfirstorprefix:nn #1#2
  {\ifglshasfield{prefixfirst}{#2}%
    {\csname #1lsentryprefixfirst\endcsname{#2}}%
    {\csname #1lsentryprefix\endcsname{#2}}}%
\newcommand*{\glsentryprefixfirstorprefix}[1]
  {\jglsentryprefixfirstorprefix:nn {g}{#1}}
\newcommand*{\Glsentryprefixfirstorprefix}[1]
  {\jglsentryprefixfirstorprefix:nn {G}{#1}}
\glsmfuaddmap{\glsentryprefixfirstorprefix}{\Glsentryprefixfirstorprefix}

\cs_new_protected:Npn \newjgls:nnnnn #1#2#3#4#5 {%
  \newjglsstarred:nnnn {#1}{#3}{#4}{#5}%
  \newjglsstarred:nnnn {#2}{#3}{#4}{#5}%
  \exp_args:Ncc \glsmfuaddmap {#1#4}{#2#4}%
}
\cs_new_protected:Npn \newjglsstarred:nnnn #1#2#3#4
  {\expandafter\NewDocumentCommand%
    \csname #2#1#3\endcsname {s O{} m O{}}%
    {\IfBooleanTF{##1}%
      {#4{#1}{*}{##2}{##3}{##4}}%
      {#4{#1}{}{##2}{##3}{##4}}}}

\cs_new_protected:Npn \jglsdashedorshort:nnnnn #1#2#3#4#5
  {\csname glslink#2\endcsname[#3]{#4}{\ifglsused{#4}%
    {\csname #1lsentryshort\endcsname{#4}#5}%
    {\csname #1lsentrydashed\endcsname{#4}#5~(\csname glsentryshort#2\endcsname{#4})}\glsunset{#4}}}
\newjgls:nnnnn {g}{G}{}{lsdashedorshort}{\jglsdashedorshort:nnnnn}
 
\cs_new_protected:Npn \jpglsdashedorshort:nnnnn #1#2#3#4#5
  {\ifglsused{#4}
    {\csname #1lsentryprefix\endcsname{#4}}
    {\csname #1lsentryprefixfirstorprefix\endcsname{#4}}%
  \csname glslink#2\endcsname[#3]{#4}{\ifglsused{#4}%
    {\glsentryshort{#4}#5}%
    {\glsentrydashed{#4}#5~(\csname glsentryshort#2\endcsname{#4})}\glsunset{#4}}}
\newjgls:nnnnn {g}{G}{p}{lsdashedorshort}{\jpglsdashedorshort:nnnnn}

\cs_new_protected:Npn \jpglsdashed:nnnnn #1#2#3#4#5
  {\ifglsused{#4}%
    {\csname #1glsentryprefix\endcsname{#4}}%
    {\csname #1lsentryprefixfirstorprefix\endcsname{#4}\glsunset{#4}}%
  \csname glsdashed#2\endcsname[#3]{#4}[#5]}
\newjgls:nnnnn {p}{P}{}{glsdashed}{\jpglsdashed:nnnnn}

\cs_new_protected:Npn \jdpgls:nnnnn #1#2#3#4#5
  {\ifglsused{dual.#4}%
    {\csname #1lsfield\endcsname{#4}{prefix}}%
    {\csname #1lsfield\endcsname{#4}{prefixfirst}\glsunset{dual.#4}}%
  \csname dgls#2\endcsname[#3]{#4}[#5]}
\newjgls:nnnnn {g}{G}{dp}{ls}{\jdpgls:nnnnn}

\ExplSyntaxOff

\RequirePackage[
  noabbrev,
  capitalise,
  nameinlink
]{cleveref}
\RequirePackage{crossreftools}

\theoremstyle{plain}
\newtheorem{theorem}{Theorem}
\newtheorem{lemma}[theorem]{Lemma}

\newtheorem{corollary}[theorem]{Corollary}

\theoremstyle{definition}
\newtheorem{definition}[theorem]{Definition}
\newtheorem{assumption}{Assumption}

\theoremstyle{remark}

\Crefname{assumption}{Assumption}{Assumptions}
\Crefname{claim}{Claim}{Claims}
\Crefname{question}{Question}{Questions}
\Crefname{statement}{Statement}{Statements}

\RequirePackage[inline]{enumitem} 
\setlist[enumerate, 1]{label={\arabic*)}}
\setlist[enumerate, 2]{label={(\alph*)}}

\RequirePackage{booktabs}

\RequirePackage{siunitx}
\RequirePackage{quoting}
\newenvironment{emphasized}{%
    \begin{quoting}[font={itshape}]
}{%
    \end{quoting}
} 

\newlength{\jobjectivesensewidth}
\newlength{\jtemplength}
\providecommand{\objectivesense}{}
\NewDocumentCommand{\setobjectivesense}{m O{}}{%
  \renewcommand{\objectivesense}{#1\IfNoValueF{#2}{\sb{#2}}}%
  \setlength{\jobjectivesensewidth}
    {\widthof{\ensuremath{\displaystyle\objectivesense}}}%
  \setlength{\jtemplength}
    {\widthof{\ensuremath{\displaystyle\max}}}%
  \setlength{\jobjectivesensewidth}
    {0.5\jobjectivesensewidth + 0.5\jtemplength}}
\newcommand{\subjectto}{%
  \mathmakebox[\jobjectivesensewidth][r]{\text{s.t.}}}

\AtBeginDocument{%
\begingroup
\setbox0=\hbox{%

}%
\endgroup
}

\newabbreviation[
  prefix      = {a~},
  description = {\initial{D}iscrete \initial{F}ourier \initial{T}ransform}
]{DFT}{DFT}{Discrete Fourier Transform}

\newabbreviation[
  prefix      = {an\space},
  prefixfirst = {a~},
  description = {\initial{F}ast \initial{F}ourier \initial{T}ransform}
]{FFT}{FFT}{Fast Fourier Transform}

\newabbreviation[
  prefix      = {an\space},
  prefixfirst = {a~},
  description = {\initial{F}ully \initial{P}olynomial \initial{R}andomized \initial{A}pproximation \initial{S}cheme}
]{FPRAS}{FPRAS}{fully polynomial randomized approximation scheme}

\newabbreviation[
  prefix      = {an\space},
  prefixfirst = {a~},
  description = {\initial{H}ermitian \initial{T}rigonometric \initial{P}olynomial}
]{HTP}{HTP}{Hermitian trigonometric polynomial}

\newabbreviation[
  prefix      = {a~},
  description = {\initial{N}oisy \initial{I}ntermediate-\initial{S}cale \initial{Q}uantum Device}
]{NISQ}{NISQ}{noisy intermediate-scale quantum}

\newabbreviation[
  prefix      = {a~},
  description = {\initial{P}arametrized \initial{Q}uantum \initial{C}ircuit}
]{PQC}{PQC}{parametrized quantum circuit}

\newabbreviation[
  shortplural = {PSDs},
  longplural  = {positive semidefinite matrices},
  dashed      = {positive-semidefinite},
  prefix      = {an\space},
  prefixfirst = {a~},
  description = {\initial{P}ositive \initial{S}emi-\initial{D}efinite Matrix}
]{PSD}{PSD}{positive semidefinite}

\newabbreviation[
  prefix      = {a~},
  description = {\initial{Q}uantum \initial{A}pproximate \initial{O}ptimization \initial{A}lgorithm}
]{QAOA}{QAOA}{Quantum Approximate Optimization Algorithm}

\newabbreviation[
  shortplural = {SOS},
  longplural  = {sums of squares},
  dashed      = {sum-of-squares},
  prefix      = {an\space},
  prefixfirst = {a~},
  description = {\initial{S}um \initial{o}f \initial{S}quares}
]{SOS}{SOS}{sum of squares}

\newabbreviation[
  prefix      = {a~},
  description = {\initial{V}ariational \initial{Q}uantum \initial{A}lgorithm}
]{VQA}{VQA}{variational quantum algorithm}

\newabbreviation[
  prefix      = {a~},
  description = {\initial{V}ariational \initial{Q}uantum \initial{E}igensolver}
]{VQE}{VQE}{variational quantum eigensolver}

\newabbreviation[
  shortplural = {HEA},
  longplural  = {hardware-efficient ans{\"a}tze},
  dashed      = {hardware-efficient-ansatz},
  prefix      = {an\space},
  prefixfirst = {a~},
  description = {\initial{H}ardware-\initial{E}fficient \initial{A}nsatz}
]{HEA}{HEA}{hardware-efficient ansatz}

\glsxtrnewsymbol[
  description = {bounded-error quantum polynomial time},
  prefix      = {a~}
]{BQP}{\ensuremath{\mathsf{BQP}}}

\glsxtrnewsymbol[
  description = {bounded-error probabilistic polynomial time with access to a BQP oracle},
  prefix      = {a~}
]{BPPBQP}{\ensuremath{\mathsf{BPP}^{\mathsf{BQP}}}}

\glsxtrnewsymbol[
  description = {class of promise problems induced by \textsc{Max-Cut} problems},
  prefix      = {a~}
]{GAPMAXCUT}{\ensuremath{\mathrm{GAP\text{-}MAXCUT}}}

\glsxtrnewsymbol[
  description = {class of promise problems induced by PQC optimization problems},
  prefix      = {a~}
]{GAPPQCO}{\ensuremath{\mathrm{GAP\text{-}PQCO}}}

\glsxtrnewsymbol[
  description = {nondeterministic polynomial time},
  prefix      = {an\space}
]{NP}{\ensuremath{\mathsf{NP}}}

\glsxtrnewsymbol[
  description = {class of PQC optimization problems},
  prefix      = {a~}
]{PQCO}{\ensuremath{\mathrm{PQCO}}}

\makeglossaries
\glsdisablehyper

\RequirePackage{authblk}
\RequirePackage{lineno}  

\title{Global Optimization for Parametrized Quantum Circuits}

\author[1]{Iosif Sakos}
\author[1,3,5]{Antonios Varvitsiotis}
\author[2,4,5]{Georgios Korpas}
\author[1]{Wayne Lin}

\affil[1]{Singapore University of Technology and Design, Singapore}
\affil[2]{Quantum Technologies Group, Innovation \& Ventures, HSBC, Singapore}
\affil[3]{Centre for Quantum Technologies, Singapore}
\affil[4]{Czech Technical University in Prague, Czechia}
\affil[5]{Archimedes Research Unit on AI, Marousi, Greece}

\begin{document}

\pagenumbering{roman}
\maketitle
\thispagestyle{empty}

\begin{abstract}

In the absence of error correction, \glsxtrlong{NISQ} devices are typically operated by training \glspl{PQC} so as to minimize a suitable loss function.
Finding the optimal parameters of those circuits is a hard optimization problem, where global guarantees are known only for highly structured cases of limited practical relevance, and, more broadly, zero- and first-order methods can fail to find even local minima due to the presence of barren plateaus.
In this work, we study the training of practical classes of \glspl{PQC}, namely polynomial-depth circuits with a constant number of trainable parameters. 
This captures widely used \gls{PQC} families, including fixed-depth \glsxtrshort{QAOA}, hardware-efficient ansätze with a controlled number of parameters per layer, and Fixed Parameter Count \glsxtrshort{QAOA}. 
Our main technical result is \pgls{FPRAS}, which, for every $\epsilon > 0$, returns an \param{\epsilon}\nobreakdash-approximate solution to the problem's global optimum with high probability, and has runtime and query complexity polynomial in $1 /\epsilon$ and the number of qubits. 
Unlike the standard hybrid quantum--classical training loop in variational algorithms, where the quantum device is queried repeatedly throughout the training, our approach separates the computation into two distinct stages:
\begin{enumerate*}[label=(\arabic*)]
  \item an initial quantum data-acquisition phase, followed by 
  \item a classical global-optimization phase based on the trigonometric moment/\glsdashed{SOS} hierarchies of semidefinite programs.
\end{enumerate*}
Under a standard flat-extension condition, which can be checked numerically, the method also supports the extraction of globally optimal circuit parameters.
The existence of \pgls{FPRAS} implies that the promise problem associated with the optimization of poly-depth constant-parameter \gls{PQC} is in \gls{BQP}. 
This imposes a limitation on the expressive power of the class, namely, it cannot encode combinatorial optimization problems whose objective values are separated by an inverse-polynomial gap.
\end{abstract}
\clearpage

\tableofcontents
\clearpage

\pagenumbering{arabic}
\setcounter{page}{1}
\glsresetall 

\section{Introduction}
\label{sec:Introduction}

Quantum computing has emerged as a revolutionary computational paradigm, exploiting the properties of quantum mechanics to perform tasks that are believed to be intractable for classical computers. 
Early theoretical breakthroughs, such as Shor's factoring algorithm~\citep{shor_1994_algorithms}, demonstrated the possibility of exponential quantum speedups and motivated intensive efforts in quantum hardware development. 
These efforts have culminated in the advent of \gls{NISQ} devices~\citep{preskill_2018_quantum}. 
These are digitally controlled quantum processors that operate without full fault-tolerant error correction and are therefore constrained by noise and finite coherence times.

\Pgls{NISQ} device, commonly, executes a sequence of fixed and parametrized unitary gates, giving rise to \pgls{PQC}. 
The choice of gates, qubit connectivity, and overall circuit architecture is typically dictated both by the available hardware and by the problem structure, and is commonly referred to as \emph{the ansatz}.
The parametrized gates depend on classical control variables $\theta \in \reals^{M}$, or in practice on a digitized domain $\mathbb{D}^M$; for example, a gate $\amatrix{U}_{i}(\theta_{i})$ may represent a single-qubit rotation by angle $\theta_{i}$ about a fixed Bloch-sphere axis.
Starting from an easily preparable reference state such as the basis state $\ket{0} \eqdef \ket{0}^{\otimes n}$, the \gls{PQC} is applied to prepare a state that encodes a solution to a computational task of interest.

Formally, let $\aset{H} \eqdef \complex^{2^{n}}$ denote the Hilbert space of an \param{n}-qubit quantum system, and $\aset{U}$ denote the set of unitary operators on $\aset{H}$.
\Pgls{PQC} $\amatrix{U} \from \reals^{M} \to \aset{U}$ with $M$ \emph{independent parameters} $\theta \in \reals^{M}$ is defined as
\begin{equation}
\label{eq:PQC}
  \amatrix{U}(\theta)
    = \amatrix{U}_{K}\xpar[\big]{\theta_{j_{K}}} \amatrix{C}_{K} \cdots \amatrix{U}_{1}\xpar[\big]{\theta_{j_{1}}} \amatrix{C}_{1},
  \qquad
  \amatrix{U}_{k}(\theta_{j_{k}})
    = \exp\xpar[\big]{-\im \theta_{j_{k}} \amatrix{V}_{k}},
    \quad \forall k \in \range{K}.
\end{equation}
Here, $K$ is the number of \emph{parametrized quantum gates}, and although, eventually, we assume that the number of trainable parameters $M$ is constant, we allow for a polynomially large $K$.
The Hermitian operators $\amatrix{V}_{1}$, \dots, $\amatrix{V}_{K} \from \aset{H} \to \aset{H}$ are called the \emph{generators} of the parametrized gates, while the operators $\amatrix{C}_{1}$, \dots, $\amatrix{C}_{K} \in \aset{U}$ are \emph{fixed quantum gates} of the circuit. 
Throughout, we refer to $\theta_{1}$, \dots, $\theta_{M} \in \reals$ as the \emph{independent (trainable) parameters} of the \gls{PQC}, and to the quantum gates $\amatrix{U}_{1}$, \dots, $\amatrix{U}_{K} \from \reals \to \aset{U}$ as the \emph{parametrized gates}. 
The indices $j_{1}$, \dots, $j_{K} \in \range{M}$ indicate which independent parameter $\theta_{j_{k}}$ parametrizes $\amatrix{U}_{k}$ for each $k \in \range{K}$.

In terms of tasks of interest, the majority of the literature~\citep{cerezo_2021_variational,bharti_2022_noisy,abbas_2024_challenges} revolves around two main motivating applications:
\begin{enumerate*}[label=(\roman*)]
  \item\emph{Ground-state problems}, which seek the minimum eigenvalue of a Hamiltonian $\amatrix{H}$, and are central to quantum chemistry and materials science applications.
  \item\emph{Combinatorial optimization problems}, which seek the minimum value of an objective $C$ over the hypercube $\set{0, 1}^{n}$. 
\end{enumerate*}
The starting point for both applications is \emph{the eigenvalue minimization problem}.
Given a Hermitian observable $\amatrix{O} \from \aset{H}\to \aset{H}$, we seek its minimum eigenvalue, defined variationally by
\begin{equation}
\setobjectivesense{\min}[\ket{\psi} \in \aset{H}]%
\begin{aligned}
  \mineigenvalue(\amatrix{O})
    &\eqdef \begin{aligned}[t]
      &\objectivesense%
        && \braket{\psi}{\amatrix{O}}{\psi} \\
      &\subjectto%
        &&\braket{\psi}{}{\psi} 
          = 1.
    \end{aligned}
\end{aligned}
\end{equation}
If $\amatrix{O}$ is a Hamiltonian $\amatrix{H}$ of a quantum system, then the minimum eigenvalue $\mineigenvalue(\amatrix{H})$ corresponds to the ground-state energy of the system, and the corresponding eigenvectors are the ground states; see, e.g., \citet{sakurai_2017_modern}.

Besides its fundamental physical importance, the eigenvalue minimization problem is expressive enough to also capture \gls{NP}\nobreakdash-hard discrete optimization problems for appropriate choices of the observable.
Specifically, consider an optimization problem
\begin{equation}
\label{eq:CombinatorialObjective}
  C_{\min}
    \eqdef \min_{x \in \set{0, 1}^{n}} C(x),
\end{equation}
with objective $C \from \set{0, 1}^{n} \to \reals$.
We can express \eqref{eq:CombinatorialObjective} as a combinatorial optimization problem
\begin{equation}
  C_{\min}
    = \min_{x \in \set{0, 1}^{n}} \braket{x}{\amatrix{H}_{C}}{x},
  \qquad
  \amatrix{H}_{C}
    \eqdef \sum_{x \in \set{0, 1}^{n}} C(x) \ketbra{x}{x},
\end{equation}
where the diagonal matrix $\amatrix{H}_{C}$ is called the \emph{cost Hamiltonian} of the problem.
Since $\amatrix{H}_{C}$ is diagonal, the minimum value of $C$ is equal to the minimum eigenvalue of $\amatrix{H}_{C}$, and the above formulation is a special case of the Hamiltonian optimization problem, i.e.,
\begin{equation}
  C_{\min}
    = \mineigenvalue(\amatrix{H}_{C})
    = \min_{\substack{\braket{\psi}{}{\psi} = 1}} \braket{\psi}{\amatrix{H}_{C}}{\psi}.
\end{equation}
A canonical example is the \textsc{Max-Cut} problem, where a Boolean vector $x \in \set{0, 1}^{n}$ encodes a cut in a graph, i.e., a partition of the node set into two parts, and $C(x)$ is the number of edges that cross the cut, namely those whose endpoints lie on different sides of the partition. 
In this setting we define
\begin{equation}
\label{eq:MaxCutHamiltonian}
  \amatrix{H}_{\text{MC}}
    \eqdef \sum_{(u, v) \in \aset{E}} \frac{1}{2}\xpar{\identitymatrix - \amatrix{Z}_{u} \amatrix{Z}_{v}},
\end{equation}
where $\amatrix{Z}_{u}$ is the Pauli-Z operator acting on qubit $u$, and the fact that this captures \textsc{Max-Cut} follows from the observation that 
\begin{equation}
  \frac{1}{2} \xpar{\identitymatrix - \amatrix{Z}_{u} \amatrix{Z}_{v}} \ket{x} 
    = \frac{1}{2} \xpar[\big]{1 - (-1)^{x_{u} + x_{v}}} \ket{x}.
\end{equation}

In both applications, namely the ground-state computation and combinatorial optimization, the main difficulty is that the underlying Hilbert space is exponentially large, i.e., $\aset{H} = \complex^{2^{n}}$. 
A standard way to cope with this is to use \pgls{PQC} defined by a problem-specific and/or hardware-aware ansatz, thereby restricting the search to the subset of quantum states reachable from an easily preparable reference state, such as $\ket{0}$. 
In this way, one avoids optimizing over the full Hilbert space and instead searches over a structured, parametrized family of states. 
This gives rise to the \gls{PQC} optimization~(\gls{PQCO}) problem, which seeks the minimum value
\begin{equation}\tag{\gls{PQCO}}
\label{eq:PQCO}
  f^{\star} 
    \eqdef \min_{\theta \in \reals^{M}} f(\theta),
  \qquad
  f(\theta)
    = \braket[\big]{0}{\amatrix{U}^{\A}(\theta) \amatrix{O} \amatrix{U}(\theta)}{0},
\end{equation}
where $f(\theta)$ is the expectation of the observable $\amatrix{O}$ under the state prepared by the \gls{PQC} with parameters $\theta$.
Clearly, we have $f^{\star} \geq \mineigenvalue(\amatrix{O})$, and the hope when using \glspl{PQC} to solve such problems is that: 
\begin{enumerate*}[label=(\arabic*)]
  \item the ansatz is expressive enough to provide a good approximation to the optimal value; and
  \item the optimization landscape of $f(\theta)$ is benign enough to allow for efficient optimization.
\end{enumerate*}
 
However, in the absence of full quantum error correction, only relatively shallow circuits can be executed on \gls{NISQ} hardware with meaningful reliability.
The predominant strategy for leveraging such devices despite noise and coherence limitations is to adapt the circuit parameters through iterative, task-driven optimization.
This is typically implemented in \emph{classical--quantum hybrid optimization loops} known as \glspl{VQA}; see, e.g., \citet{cerezo_2021_variational, bharti_2022_noisy} and references therein.
For a given parameter choice $\theta$, the quantum device prepares the state $\amatrix{U}(\theta)\ket{0}$ that is then used to estimate the value of $f(\theta)$.
Subsequently, a classical optimizer, such as stochastic gradient descent, updates $\theta$ so as to decrease $f(\theta)$.
This procedure is repeated until a chosen termination criterion (e.g., a sufficiently small gradient norm) is satisfied.

In the special case where the observable is a cost Hamiltonian $\amatrix{H}_{C}$, the training process admits a natural interpretation in terms of probability distributions over candidate solutions. 
Starting from an initial distribution, e.g., the uniform superposition $\frac{1}{\sqrt{2^{n}}} \sum_{x \in \set{0, 1}^{n}} \ket{x}$, the parameterized quantum circuit $\amatrix{U}(\theta)$ defines, for every parameter vector $\theta$, a probability distribution $p_{\theta}$ over Boolean vectors $x \in \set{0, 1}^{n}$.
Then, for each $p_{\theta}$, the expected cost is
\begin{equation}
  \mean_{x \sim p_{\theta}}[\big]{C(x)}
    = \sum_{x \in \set{0,1}^{n}} p_{\theta}(x)\, C(x),
\end{equation}
and training adjusts $\theta$ so as to progressively concentrate probability mass on better solutions.
While the parameter updates are carried out classically, the quantum device is used to estimate the expectation of the cost function under the exponentially large distribution $p_{\theta}$, and solve the optimization problem
\begin{equation}
  f^{\star}
    = \min_{\theta \in \reals^{M}} \mean_{x \sim p_{\theta}}[\big]{C(x)},
\end{equation}

From an algorithmic perspective, the \gls{PQC} optimization problem is challenging. 
On the one hand, it is strongly \gls{NP}\nobreakdash-hard in general, even under favorable assumptions, such as access to quantum expectation values via a quantum oracle, classical simulability of the objective (e.g., logarithmic-depth circuits), or even restriction to depth-one circuits~\citep{bittel_2021_training}. 
On the other hand, even computing a local minimum with gradient-based methods is often difficult in practice, because the objective landscape over the parameters $\theta$ may exhibit \emph{barren plateaus}, i.e., regions where the function is nearly flat and gradients carry essentially no useful optimization information; see, e.g., \citet{mcclean_2018_barren,cerezo_2025_provable} and references therein.

In view of this, several works have focused on specific subclasses of ansätze for which training can be shown to converge globally.
Existing results are largely limited either to very low-dimensional settings ($M = 1$)~\citep{ wang_2018_quantum, sureshbabu_2024_parameter} or to highly structured regimes~\citep{you_2023_convergence, wiedmann_2025_convergence}, and therefore do not fully capture practical settings of interest.
In particular, \citet{you_2023_convergence} show that global convergence of the \gls{VQE} can be guaranteed with high probability when the number of independent parameters $M$ exceeds a threshold that scales polynomially with the Hilbert-space dimension, i.e., as $\bigoh\xpar[\big]{\polytime(2^{n})}$.
\Citet{wiedmann_2025_convergence} further show that \gls{VQE} converges to the global optimum with high probability under a local surjectivity assumption on the ansatz, a condition that is quite restrictive and generally does not hold for practical ansätze.
Finally, \citet{wang_2018_quantum, sureshbabu_2024_parameter} derive analytic solutions to the \gls{PQCO} problem of the \gls{QAOA} ansatz at circuit depth $K = 1$ for the \textsc{Max-Cut} problem.

\paragraph{Contributions.} In this work, we depart from the restrictive settings considered in the existing global optimization literature for the \gls{PQC} optimization problem, and study a class of ansätze well aligned with practical implementations.
In particular, we consider ansätze with a constant (with respect to the number of qubits $n$) number of independent parameters $M$, but a polynomially large number of parametrized gates $K$; we refer to such circuits as \emph{poly-depth constant-parameter} \glspl{PQC}.
Our driving questions are:
\begin{emphasized}
  \begin{enumerate}[label=\textbf{Q\arabic*}]
    \item\label[question]{question:GlobalOptimality} Can we design methods that solve the \gls{PQCO} problem in this setting to global optimality?
    \item\label[question]{question:Expressivity} Are these ansätze expressive enough to capture hard problems, even approximately?
  \end{enumerate}
\end{emphasized}

We answer both questions by means of our main technical result, namely that, for poly-depth, constant-parameter \glspl{PQC}, and under mild assumptions on the observable and the generators, the \gls{PQCO} problem admits a range-scaled additive \gls{FPRAS}. 
Specifically, for every $\epsilon, \delta > 0$, there exists an algorithm whose runtime and query complexity are polynomial in $n$, $1 / \epsilon$, and $\log(1 / \delta)$, and which returns an estimate $\hat{f}^{\star}$ satisfying
\begin{equation}
  \prob[\Big]{\abs{\hat{f}^{\star} - f^{\star}} \leq \epsilon \xpar[\big]{\max_{\theta \in \reals^{M}} f(\theta) - f^{\star}}}
    \geq 1 - \delta.
\end{equation}
As a consequence, we obtain a strong positive answer to~\ref{question:GlobalOptimality}. 
At the same time, the existence of this approximation scheme imposes strong restrictions on the expressivity of poly-depth constant-parameter \glspl{PQC}.
Specifically, it implies that the gap-promise problem corresponding to this family of \glspl{PQC} is in the complexity class \gls{BQP}, i.e., the family of promise problems solvable in polynomial time in quantum computer, and thus, no hard promise problem can be reduced to it, unless $\gls{NP} \subseteq \gls{BQP}$; an inclusion widely believed not to be true.
This yields a corresponding negative answer to~\ref{question:Expressivity}. 

To prove our main technical result, namely the existence of \pgls{FPRAS}, we proceed in three steps. 
First, we show that the objective $f(\theta)$ can be formulated as \pgls{HTP}, i.e., a function of the form
\begin{equation}
  f(\theta)
    = \sum_{\alpha \in \integers^{M}} f_{\alpha}\, \exp\xpar[\big]{\im \inner{\alpha}{\theta}},
  \qquad \theta \in \ccinterval{-\pi}{\pi}^{M},
  \qquad \conj{f_{\alpha}} = f_{-\alpha},
\end{equation}
thereby casting the \gls{PQC} optimization problem as a trigonometric optimization problem, a class of problems that has been extensively studied in the optimization literature; see, e.g., \citep{dumitrescu_2007_positive,josz_2018_lasserre,schuld_2021_effect,fontana_2022_efficient,bach_2023_exponential} and references therein.
Although this representation has been derived in special cases in the literature, to the best of our knowledge the required degree bounds for further analysis have not been derived before for poly-depth constant-parameter \glspl{PQC}.
Second, we approximate the \gls{HTP} formulation of the \gls{PQCO} using the quantum hardware to compute estimates of $f(\theta)$ on a uniform sampling grid of the parameter space, and the multidimensional \gls{FFT} to compute the Fourier coefficients of $f(\theta)$ from these estimates. 
Finally, we apply the complex trigonometric moment/\glsdashedorshort{SOS} hierarchy~\citep{dumitrescu_2007_positive,josz_2018_lasserre,bach_2023_exponential} to the approximated \gls{PQCO}, obtaining a convergent sequence of monotone lower bounds that approach the optimum in the limit, with each level computable via semidefinite programming. 

Unlike the standard hybrid quantum--classical training loop in \glspl{VQA}, where the quantum device is queried repeatedly throughout the optimization process, our approach separates the computation into two distinct stages: an initial quantum data-acquisition phase, in which the required quantum measurements are performed, followed by a purely classical global-optimization phase based on the trigonometric \gls{SOS} hierarchy.

\paragraph{List of contributions.} Concretely, in this work we make the following contributions:
\begin{enumerate}
  \item We prove a range-scaled additive \gls{FPRAS} for poly-depth constant-parameter \glspl{PQC}, which is applicable for a broad range of practical ansätze, including fixed levels of the \gls{QAOA} and certain \glspl{HEA}.
  \item We establish conditions that can be verified numerically, and under which \emph{optimal \gls{PQC} parameters} can be extracted from the solution.
  \item We analyze the expressivity of poly-depth constant-parameter \glspl{PQC} and establish that unless $\gls{NP} \subseteq \gls{BPPBQP}$, they cannot represent \gls{NP}-hard problems, even approximately.
\end{enumerate}

\paragraph{Notation.}
The following notation is used throughout this work.
Let $\naturals$ denote the set of natural numbers; $\integers$ the set of integer numbers; $\reals$ the set of real numbers; and $\complex$ the set of complex numbers.
For $n \in \naturals$, let $\range{n} = \set{1, \dots, n}$.
For $\alpha \in \complex$, let $\conj{\alpha} \in \complex$ denote the complex conjugate of $\alpha$.
For $\amatrix{A} \in \complex^{m \times n}$, let $\amatrix{A}^{\A}$ denote the conjugate transpose of $\amatrix{A}$.
For a Hermitian operator $\amatrix{A} \from \aset{H} \to \aset{H}$, let $\norm{\amatrix{A}}$ denote the operator norm of $\amatrix{A}$, i.e., $\norm{\amatrix{A}} \eqdef \sup_{\ket{\psi} \in \aset{H} \setminus \set{0}} \frac{\norm{\amatrix{A}\ket{\psi}}}{\norm{\ket{\psi}}}$, where $\norm{\ket{\psi}} = \sqrt{\braket{\psi}{}{\psi}}$ is the Euclidean norm of $\ket{\psi}$.
Then, by the Rayleigh quotient, we also have $\norm{\amatrix{A}} = \max_{i \in \range{2^{n}}} \abs{\lambda_{i}}$, where $\lambda_{i}$ are the eigenvalues of $\amatrix{A}$, and $2\norm{\amatrix{A}} \geq \maxeigenvalue(\amatrix{A}) - \mineigenvalue(\amatrix{A})$.

\subsection{Assumptions and relevance to ansätze used in practice}
\label{sec:Assumptions}

Throughout this work, we make the following \lcnamecrefs{ass:IntegerSpectralDifferences} on the architecture of the \gls{PQC} and the observable of the \gls{PQC} optimization problem.

\begin{assumption}
\label{ass:IntegerSpectralDifferences}
  Consider \pgls{PQC} over an \param{n}-qubit system.
  We assume that the spectra of the generators $\amatrix{V}_{1}$, \dots, $\amatrix{V}_{K}$ of the \gls{PQC} have \emph{integer differences}.
\end{assumption}

\begin{assumption}
\label{ass:PQCComplexity}
  Consider \pgls{PQC} over an \param{n}-qubit system.
  We assume that:
  \begin{enumerate}[ref=\theassumption.\arabic*]
    \item\label[assumption]{ass:IndependentParameterCount} The number of independent parameters $M$ is \textbf{constant} with respect to $n$.
    \item\label[assumption]{ass:ParametrizedGatesCount} The number of parametrized gates $K$ is bounded by a known polynomial function of $n$.
    \item\label[assumption]{ass:GeneratorSpectralDiameterBound} The maximum spectral diameter among the generators $\amatrix{V}_{1}$, \dots, $\amatrix{V}_{K}$ is bounded by a known polynomial function of $n$.
  \end{enumerate}
\end{assumption}

\begin{assumption}
\label{ass:ObservableComplexity}
  Consider an instance of the \gls{PQC} optimization problem over an \param{n}-qubit system.
  We assume that the operator norm of the observable is bounded by a polynomial function of $n$.
\end{assumption}

\Cref{ass:IntegerSpectralDifferences} is the structural condition used in \cref{thm:HTPRepresentation} to ensure that the \gls{PQC} objective can be represented as \pgls{HTP}, while \cref{ass:PQCComplexity,ass:ObservableComplexity} define the computational bounds used to analyze the complexity of the hybrid quantum--classical algorithm in \cref{sec:GlobalPQCOptimization} and prove the \gls{FPRAS}. 
Only one of the aforementioned \lcnamecrefs{ass:IntegerSpectralDifferences} is the main tradeoff, namely $M = \bigoh(1)$ (\cref{ass:IndependentParameterCount}), while the remaining \lcnamecrefs{ass:IntegerSpectralDifferences} are standard in \gls{PQC} literature; we discuss their real-world applicability below.

\paragraph{Applicability.} \Cref{ass:IntegerSpectralDifferences} is stated with integer spectral differences for simplicity, but it can be extended via \emph{a parameter shift}.
In particular, if, for each independent parameter $\theta_{j}$, there exists $\alpha_{j} > 0$ such that every generator $\amatrix{V}_{k}$ with $j_{k} = j$ has integer spectral differences after rescaling (i.e., $\alpha_{j}\,\amatrix{V}_{k}$ has integer spectral differences), then the reparametrization $\theta_{j} \mapsto \theta_{j} / \alpha_{j}$ yields an equivalent \gls{PQC} formulation satisfying \cref{ass:IntegerSpectralDifferences}.
Importantly, it is \emph{not required to perform} such a reparametrization; it suffices to know that it holds.
In this sense, \cref{ass:IntegerSpectralDifferences} is a reasonably benign \lcnamecref{ass:IntegerSpectralDifferences}.
Moreover, various well-known architectures, such as \gls{QAOA}, and more generally \glspl{PQC} decomposable into Pauli-string generators, satisfy \cref{ass:IntegerSpectralDifferences} explicitly (cf.~\cref{sec:FixedLevelQAOA}).

\Cref{ass:ParametrizedGatesCount}, \ref{ass:GeneratorSpectralDiameterBound}, and \ref{ass:ObservableComplexity} are standard in the literature~\citep{cerezo_2021_variational,tilly_2022_variational}.
In particular, the generators $\amatrix{V}_{1}, \dots, \amatrix{V}_{K}$ and the observable $\amatrix{O}$ are typically implemented as \emph{sums of polynomially many local norm-bounded terms}~\citep{whitfield_2011_simulation,tilly_2022_variational}, e.g.,
\begin{equation}
  \amatrix{O}
    = \sum_{i} \alpha_{i} \amatrix{O}_{i},
\end{equation}
where each $\amatrix{O}_{i}$ acts on a constant number of qubits, and $\abs{\alpha_{i}}$ is bounded by a polynomial function of $n$.
For example, in the case of the \textsc{Max-Cut} Hamiltonian $\amatrix{H}_{\text{MC}}$ in \eqref{eq:MaxCutHamiltonian} on a graph with $n$ vertices, $\amatrix{H}_{\text{MC}}$ is decomposed as
\begin{equation}
  \amatrix{H}_{\text{MC}}
    = \sum_{(u, v) \in \aset{E}} \frac{1}{2}\xpar{\identitymatrix - \amatrix{Z}_{u} \amatrix{Z}_{v}},
\end{equation}
where each term $\identitymatrix - \amatrix{Z}_{u} \amatrix{Z}_{v}$ acts on qubits $u$ and $v$, and has operator norm at most $2$, while the number of terms is $\abs{\aset{E}} \leq \abs{\aset{V}}^{2} = n^{2}$, which is polynomial in $n$ (cf.~\cref{sec:FixedLevelQAOA} for a complete example).
It is not difficult to see that, in such cases, the operator norms, and by extension the spectral diameters, of the generators $\amatrix{V}_{1}, \dots, \amatrix{V}_{K}$ and the observable $\amatrix{O}$ are bounded by the sum of operator norms of the local terms, and therefore are bounded by polynomial functions of $n$.
Finally, $K$ being polynomial in $n$ corresponds to the standard assumption that the circuit size is polynomial in $n$~\citep{bernstein_1997_quantum}.
Importantly, these \lcnamecrefs{ass:PQCComplexity}  do not presume a Pauli-string decomposition, and therefore are applicable even for \glspl{HEA} (cf.~\cref{sec:HEAWithNonInvolutoryGenerators}).

On the other hand, \cref{ass:IndependentParameterCount} is more subtle.
The size of the trigonometric moment/\glsdashedorshort{SOS} relaxation scales exponentially in $M$, so fixing $M$ is the lever that enables polynomial-time guarantees in $n$.
Regimes where $M$ grows with $n$ are not captured by this framework.
This is the key nonstandard \lcnamecref{ass:IndependentParameterCount} in this work.
Relative to prior global guarantees, which are limited to very low-dimensional $M = 1$ or otherwise highly structured regimes~\citep{nakanishi_2020_sequential, you_2023_convergence, schatzki_2024_theoretical}, \cref{ass:IndependentParameterCount} broadens the scope while keeping the problem tractable.
Importantly, a constant number of independent parameters $M$ does not imply that the circuit is shallow, since the number of gates $K$ is allowed to scale polynomially in the number of qubits $n$, and the generators $\amatrix{V}_{1}$, \dots, $\amatrix{V}_{K}$ do not, in general, commute.

\subsection{Examples of \glsfmtshort{PQC} architectures}
\label{sec:Examples}

We provide several motivating examples of \glspl{PQC} architectures that satisfy the above \lcnamecrefs{ass:IntegerSpectralDifferences}.
We begin with a fixed level of the celebrated \gls{QAOA} ansatz~\cite{farhi_2014_quantum} in \cref{sec:FixedLevelQAOA}.
Then, we extend the scope to more general \gls{PQC} architectures.
In particular, in \cref{sec:HEAWithNonInvolutoryGenerators} we provide an example of \pgls{HEA} for ion-trapped computers, which can be optimized efficiently by solving \pgls{PQC} optimization problem satisfying \cref{ass:IntegerSpectralDifferences,ass:PQCComplexity,ass:ObservableComplexity}.
These examples serve to illustrate the applicability of our \lcnamecrefs{ass:IntegerSpectralDifferences} in practical scenarios.

\subsubsection{A fixed-level \glsfmtshort{QAOA}}
\label{sec:FixedLevelQAOA}

Consider an undirected graph $\agraph{G} \equiv \agraph{G}(\aset{V},\aset{E})$ on $\abs{\aset{V}} = n$ vertices and $\abs{\aset{E}}$ edges.
The \textsc{Max-Cut} Hamiltonian of $\agraph{G}$ is given in \eqref{eq:MaxCutHamiltonian}.
The depth\nobreakdash-$p$ \gls{QAOA} ansatz is given by
\begin{equation}
  \amatrix{U}_{\mathrm{QAOA}}(\beta, \gamma)
    = \prod_{\ell = 1}^{p}
      \exp\xpar[\big]{-\im \beta_{\ell}\, \amatrix{B}}
      \exp\xpar[\big]{-\im \gamma_{\ell}\, \amatrix{H}_{\mathrm{MC}}},
  \qquad
  \amatrix{B}
    \eqdef \sum_{u \in \aset{V}} \amatrix{X}_{u},
\end{equation}
where $\beta$, $\gamma \in \reals^{p}$, and $\amatrix{X}_{u}$ is the Pauli-X operator acting on qubit $u$.
Therefore, the \gls{QAOA}['s] objective is
\begin{equation}
  \max_{\beta, \gamma \in \reals^{p}}
    \braket[\big]{+}{\amatrix{U}_{\mathrm{QAOA}}(\beta, \gamma)^{\A}
      \amatrix{H}_{\mathrm{MC}}
      \amatrix{U}_{\mathrm{QAOA}}(\beta, \gamma)}{+},
\end{equation}
with $\ket{+} \eqdef \amatrix{W}_{n} \ket{0}^{\otimes n}$, where $\amatrix{W}_{n}$ is the \param{n}-qubit Walsh--Hadamard transform.

The above optimization problem can be expressed as \pgls{PQCO} by introducing the vector of independent parameters
\begin{equation}
  \theta
    = (\beta_{1}, \dots, \beta_{p}, \gamma_{1}, \dots, \gamma_{p})
      \in \reals^{2p},
\end{equation}
so that $M = 2p$, taking
\begin{equation}
  \amatrix{O}
    = -\amatrix{H}_{\mathrm{MC}},
\end{equation}
and writing the ansatz as
\begin{equation}
  \amatrix{U}(\theta)
    = \xpar[\bigg]{\prod_{\ell = 1}^{p}
      \exp\xpar[\big]{-\im \beta_{\ell}\, \amatrix{B}}
      \exp\xpar[\big]{-\im \gamma_{\ell}\, \amatrix{H}_{\mathrm{MC}}}}
      \amatrix{W}_{n}.
\end{equation}

Observe that the generators $\amatrix{B}$ and $\amatrix{H}_{\mathrm{MC}}$ have integer spectra, and therefore have integer differences.
Indeed, as $\amatrix{H}_{\mathrm{MC}}$ is the Hamiltonian corresponding to the \textsc{Max-Cut} problem, its eigenvalues are cut values of the graph $\agraph{G}$, which are a subset of $\set[\big]{0, 1, \dots, \abs{\aset{E}}}$, and therefore are integers.
On the other hand, Pauli-X operators are simultaneously diagonalizable (as they commute), and have eigenvalues $\set{-1, 1}$.
Therefore, the eigenvalues of $\amatrix{B} = \sum_{v \in \aset{V}} \amatrix{X}_{v}$ are $\set[\big]{\sum_{u \in \aset{V}} \lambda_{u} \mid \lambda_{u} \in \set{-1, 1}} = \set{-n, -n + 2, \dots, n - 2, n}$, which are also integers.
In addition, by the above, $\amatrix{H}_{\mathrm{MC}}$ has operator norm at most $\abs{\aset{E}} \leq \abs{\aset{V}}^{2} = n^{2}$, and therefore spectral diameter at most $2n^{2}$, while $\amatrix{B}$ has spectral diameter $2n$, which are polynomial in $n$.
Finally, as the number of independent parameters $M = 2p$ is constant, and each parameter appears once in the circuit (thus, $K = M$), it follows that the fixed-level \gls{QAOA} ansatz satisfies \cref{ass:IntegerSpectralDifferences,ass:PQCComplexity,ass:ObservableComplexity}.

\subsubsection{\Glsfmtlongpl{HEA} with non-involutory generators}
\label{sec:HEAWithNonInvolutoryGenerators}

Although \pgls{PQC} may be expressed as multi/two-qubit gate decompositions in the Pauli basis~\citep{whitfield_2011_simulation}, quantum hardware does not necessarily \emph{implement} this decomposition directly.
This is because different quantum hardware allow for different choice of gates, qubit connectivity, an overall circuit architecture.
An example of such hardware is \emph{the trapped-ion quantum computer}~\citep{pino_2021_demonstration}, which natively implements all-to-all/long-range spin-spin coupling along with single-qubit rotations, rather than a compiled sequence of two-qubit Pauli gates, whose realization in such a device can be costly in both \emph{execution time} and \emph{accumulated error}~\citep{zhuang_2024_hardware-efficient}.
This motivates the design and optimization of \glspl{PQC} that natively support such hardware and are known as \glspl{HEA}.

As an example, we consider \pgls{HEA} for a trapped-ion quantum computer with $n$ qubits developed by \citet{zhuang_2024_hardware-efficient} with a constant number $M$ of independent parameters, given by
\begin{equation}
  \amatrix{U}_{\amatrix{J}}(t, \varphi)
    = \prod_{\ell = 1}^{D}
      \exp\xpar{-\im t_{\ell}\, \amatrix{H}_{\amatrix{J}}}
      \exp\xpar[\Big]{-\im \varphi_{\ell}\, \sum_{j = 1}^{n} \amatrix{Y}_{j}},
  \qquad
  \amatrix{H}_{\amatrix{J}}
    \eqdef \sum_{j < k} \amatrix{J}_{j, k}(\amatrix{X}_{j} \amatrix{X}_{k} + \amatrix{Y}_{j} \amatrix{Y}_{k}),
\end{equation}
where $D$ is polynomial in $n$ (polynomial circuit size), $\amatrix{Y}_{j}$ is the Pauli-Y operator acting on qubit $j$, and each entry $\amatrix{J}_{j, k}$ of the symmetric matrix $\amatrix{J} \in \reals^{n \times n}$ expresses the coupling strengths between qubits $j$ and $k$.
Here, the coupling matrix $\amatrix{J}$ follows \emph{a power-law decay} with respect to the distance between qubits~\citep{zhuang_2024_hardware-efficient}, i.e., the entries of $\amatrix{J}$ are given by
\begin{equation}
  \amatrix{J}_{j, k}
    \eqdef \begin{cases}
      \frac{J_{0}}{\abs{j - k}^{\alpha}}
        & \text{if $j \neq k$}; \\
      0
        & \text{otherwise},
    \end{cases}
\end{equation}
for some constant $J_{0} > 0$ and $\alpha \geq 0$.
Then, given a Hermitian observable $\amatrix{O}$ that satisfies \cref{ass:ObservableComplexity}, the \gls{PQC} optimization problem for this ansatz boils down to finding the minimum value
\begin{equation}
  \min_{t, \varphi \in \reals^{D}} f_{\amatrix{J}}(t, \varphi),
  \ \text{where} \ 
  f_{\amatrix{J}}(t, \varphi)
    = \braket[\big]{+}{\amatrix{U}_{\amatrix{J}}(t, \varphi)^{\A} \amatrix{O} \amatrix{U}_{\amatrix{J}}(t, \varphi)}{+},
  \qquad
  \ket{+} 
    \equiv \ket{+}^{\otimes n}.
\end{equation}
To ease the exposition, we further assume that the eigenvalues of $\amatrix{J}$ are integer multiples of $1 / L$ for some quantization level $L \in \naturals$.
This assumption is not essential.
Indeed, it can be relaxed via a polynomial-time rescaling of parameter units.
Moreover, if $\norm{\amatrix{O}}$ is polynomially bounded in $n$ (cf.~\cref{ass:ObservableComplexity}), one can reduce in polynomial time to this quantized setting. 

If the number of independent parameters $M$ of the \gls{HEA} is constant, then the above \gls{PQC} satisfies \cref{ass:IntegerSpectralDifferences,ass:PQCComplexity}.
To see this, fix $L$, and define $\theta \in \reals^{M}$ to be the $M$ independent parameters of $\amatrix{U}_{\amatrix{J}}$.
Then the \gls{PQC} can be expressed in terms of the independent parameters $\theta$ as
\begin{equation}
  \amatrix{U}_{\amatrix{J}}(t, \varphi)
    = \prod_{\ell = 1}^{D}
      \exp\xpar{-\im \theta_{u_{\ell}} L\, \amatrix{H}_{\amatrix{J}}}
      \exp\xpar[\Big]{-\im \theta_{v_{\ell}}\, \sum_{j = 1}^{n} \amatrix{Y}_{j}},
\end{equation}
where $u_{1}$, \dots, $u_{D}$, $v_{1}$, \dots, $v_{D} \in \range{M}$ are 
such that $t_{\ell} / L = \theta_{u_{\ell}}$, and $\varphi_{\ell} = \theta_{v_{\ell}}$ for all $\ell \in \range{D}$.
Now, observe that the spectrum of $L\, \amatrix{H}_{\amatrix{J}}$ has integer differences, and its diameter is bounded by a polynomial in $n$.

Indeed, by the standard Jordan--Wigner/fermionic second-quantization formalism (see, e.g., \citet{lieb_1961_soluble} and standard treatments of fermionic second quantization), the eigenvalues of $L\, \amatrix{H}_{\amatrix{J}}$ are
\begin{equation}
  \aset{L}_{L\, \amatrix{H}_{\amatrix{J}}} 
    \eqdef \set[\bigg]{2 L \sum_{j \in \aset{S}} \lambda_{j} \given \aset{S} \subseteq \range{n}},
\end{equation}
where $\lambda_{1}$, \dots $\lambda_{n} \in \reals$ are the eigenvalues of $\amatrix{J}$.
Thus, since by assumption $L \lambda_{j} \in \integers$ for all $j$, the eigenvalues of $L\, \amatrix{H}_{\amatrix{J}}$ are integers, and therefore have integer differences.
Furthermore, the spectral diameter of $L\, \amatrix{H}_{\amatrix{J}}$ is at most
\begin{subequations}
\begin{align}
  \Delta_{L\, \amatrix{H}_{\amatrix{J}}}
    &\eqdef \max \aset{L}_{L\, \amatrix{H}_{\amatrix{J}}} - \min \aset{L}_{L\, \amatrix{H}_{\amatrix{J}}} \\
    &= 2 \sum_{j = 1}^{n} \abs{L \lambda_{j}} \\
    &\leq 2 L n \max_{j \in \range{n}} \abs{\lambda_{j}} \\
    &= 2 L n \norm{\amatrix{J}}_{2} \\
    &\leq 2 L n \norm{\amatrix{J}}_{\infty} \\
    &= 2 L n \max_{j \in \range{n}} \sum_{k = 1}^{n} \frac{J_{0}}{\abs{j - k}^{\alpha}} \\
    &\leq 2 L J_{0} n^{2},
\end{align}
\end{subequations}
which is a polynomial in $n$ for every fixed $L$.

The remaining arguments are straightforward.
In particular, since $\amatrix{Y}_{j}$ is involutory for all $j \in \range{n}$, the spectrum of $\sum_{j = 1}^{n} \amatrix{Y}_{j}$ has integer differences and diameter at most $2n$, which are at most polynomial in $n$.
Thus, the \gls{HEA} satisfies \cref{ass:IntegerSpectralDifferences,ass:PQCComplexity}.

\section{Preliminaries}
\label{sec:Preliminaries}

To make this exposition self-contained, we now introduce the necessary background on Hermitian polynomial optimization (cf.~\cref{sec:HermitianPolynomialOptimization}) and \gls{HTP} optimization (cf.~\cref{sec:HTPOptimization}), which are the main mathematical tools used in this work, as well as the multidimensional \gls{FFT} (cf.~\cref{sec:TrigonometricPolynomialsFFT}), which is a key algorithmic ingredient in our proposed method for solving \gls{PQC} optimization problems.

\subsection{Hermitian polynomial optimization--the complex moment/\glsfmtshort{SOS} hierarchy} 
\label{sec:HermitianPolynomialOptimization}
 
\Gls{HTP} optimization can be placed within the more general class of Hermitian polynomial optimization, for which the complex moment/\glsdashedorshort{SOS} hierarchy has already been well-developed~\citep{laurent_2009_sums-of-squares}. 
Therefore, to keep the discussion self-contained, we devote this \lcnamecref{sec:HermitianPolynomialOptimization} to reviewing the topic of Hermitian polynomial optimization and the complex moment/\glsdashedorshort{SOS} hierarchies.

A Hermitian polynomial~\citep{dangelo_2009_polynomial} is a function of the form
\begin{equation}
\label{eq:HermitianPolynomial}
  f(z, \conj{z})
    = \sum_{\alpha, \beta} f_{\alpha, \beta} z^{\alpha} \conj{z}^{\beta},
    \qquad z \in \complex^{M},
\end{equation}
where $\alpha, \beta \in \naturals^{M}$, $f_{\alpha, \beta} = \conj{f_{\beta, \alpha}} \in \complex$ for all $\alpha$, $\beta$, and only finitely many $f_{\alpha, \beta}$ are nonzero.
The condition $f_{\alpha, \beta} = \conj{f_{\beta, \alpha}}$ ensures that $f(z, \conj{z})$ is real-valued for all $z \in \complex^{M}$.
Subsequently, a Hermitian polynomial optimization problem~\citep{josz_2018_lasserre,wang_2022_exploiting,jiang_2014_alternating} is given by
\begin{equation}
\label{eq:HermitianPolynomialOptimization}
  f^{\star}
    \eqdef \inf_{z \in \aset{S}} f(z, \conj{z}),
  \ \text{and} \
  \aset{S}
    \eqdef \set[\Big]{z \in \complex^{M} \given g_{i}(z, \conj{z}) \geq 0 \,\forall i \in \range{m_{g}}, \ h_{i}(z, \conj{z}) = 0 \,\forall i \in \range{m_{h}}},
\end{equation}
with $f$, $g_{i}$, $h_{i}$ Hermitian polynomials.
Note that, since Hermitian polynomial optimization includes real polynomial optimization as a special case, it is, in general, \pgls{NP}\nobreakdash-hard problem~\citep{murty_1987_np-complete}.

As all involved polynomials in \eqref{eq:HermitianPolynomialOptimization} are Hermitian, and therefore real-valued, one may set $z_{j} = x_{j} + \im y_{j}$ and $\conj{z}_{j} = x_{j} - \im y_{j}$, for all $j \in \range{M}$, obtaining an equivalent real polynomial optimization in $(x, y) \in \reals^{2M}$, which can be handled by the standard real Lasserre moment/\glsdashedorshort{SOS} hierarchy~\citep{lasserre_2006_sum-of-squares,parrilo_2003_semidefinite}.  
An alternative approach, proposed by \citet{josz_2018_lasserre} for large-scale optimal power flow problems, is to work directly over complex variables. 
This relies on the notion of a Hermitian \glsdashedorshort{SOS} polynomial, namely a Hermitian polynomial $\asos(z, \conj{z})$ of the form
\begin{equation}
\label{eq:HermitianSOS}
  \asos(z, \conj{z})
    = \sum_{i} \abs[\big]{p_{i}(z)}^{2},
  \ \text{where} \
  p_{i}(z)
   = \sum_{\alpha} p_{i, \alpha} z^{\alpha},
\end{equation}
where each $p_{i}$ is \emph{holomorphic} (depends only on $z$). 
If one were to allow the $p_{i}$ to depend on both $z$ and $\conj{z}$, the resulting hierarchy is equivalent to the real Lasserre \glsdashedorshort{SOS} hierarchy obtained by the aforementioned realification procedure.

The starting point of the \glsdashedorshort{SOS} hierarchy is to reformulate the Hermitian polynomial optimization problem in \eqref{eq:HermitianPolynomialOptimization} as the problem of finding the largest $\lambda$ for which $f(z, \conj{z}) - \lambda$ is nonnegative over $\aset{S}$. 
Then the \glsdashedorshort{SOS} hierarchy computes \emph{lower bounds} on $f^{\star}$ by searching for algebraic decompositions that provide sufficient conditions for the nonnegativity of $f(z, \conj{z}) - \lambda$ over $\aset{S}$.
Specifically, in this work, we focus on the so-called Putinar-type decompositions expressed as
\begin{equation}
\label{eq:HermitianSOSCertificate}
  f(z, \conj{z}) - \lambda
    = \asos_{0}(z, \conj{z}) 
      + \sum_{i = 1}^{m_{g}} \asos_{i}(z, \conj{z}) \,g_{i}(z, \conj{z})
      + \sum_{i = 1}^{m_{h}} q_{i}(z, \conj{z}) \,h_{i}(z, \conj{z}),
\end{equation}
where each $\asos_{i}$ is a Hermitian \glsdashedorshort{SOS} polynomial, and each $q_{i}$ is an arbitrary Hermitian polynomial. 
Finally, optimizing over $\lambda$ and fixed-degree certificates $\asos_{i}$, $q_{i}$
reduces to solving a semidefinite program.

\paragraph{The dual formulation.} On the dual side, the starting point is that $f^{\star}$ can equivalently be obtained by taking the infimum of the integral $\int_{\aset{S}} f \,d \mu$ over probability measures $\mu$ supported on $\aset{S}$. 
By defining complex moments as
\begin{equation}
  y_{\alpha, \beta} 
    \eqdef \int_{\aset{S}} z^{\alpha} \conj{z}^{\beta} \,d\mu,
\end{equation}
for all $\alpha$, $\beta \in \naturals^{M}$, we can see that the objective function $f(z, \conj{z})$ may be reformulated as $\sum_{\alpha, \beta} f_{\alpha, \beta} \,y_{\alpha, \beta}$, which is linear in $y$. 

Similarly, the inequality and equality constraints on $\aset{S}$ translate into nonlinear and linear conditions on $y$, respectively.
To see this, let $v(z)$ denote the (infinite) vector of all monomials in $z = (z_{1}, \dots, z_{M})$.
We write $v_{d}(z)$ for the sub-vector of monomials of degree (total or max depending on the context) at most $d$.
Then, any holomorphic polynomial of degree at most $d$ can be written as a linear combination of the basis elements in $v_{d}(z)$, i.e., $p(z) = c^{\A} v_{d}(z)$, where $c$ is the vector of complex coefficients (of size equal to the one of $v_{d}(z)$). 
As $\abs[\big]{p(z)}^{2}  = c^{\A} \,\xpar[\big]{v_{d}(z) v_{d}(z)^{\A}} \,c$, it follows that
\begin{equation}
\label{eq:TruncatedMomentMatrixPSD}
  \int_{\aset{S}} \abs[\big]{p(z)}^{2} \,d \mu 
    = c^{\A} \,\xpar[\bigg]{\int_{\aset{S}} v_{d}(z) v_{d}(z)^{\A} \,d \mu} \,c
    \geq 0.
\end{equation}
Similarly, since the Hermitian polynomials $g_{i}$ are nonnegative over $\aset{S}$ (by the definition of $\aset{S}$), we also have that
\begin{equation}
\label{eq:TruncatedLocalizingMomentMatrixPSD}
  \int_{\aset{S}} \abs[\big]{p(z)}^{2} \,g_{i}(z, \conj{z}) \,d \mu 
    = c^{\A} \,\xpar[\bigg]{\int_{\aset{S}} v_{d}(z) v_{d}(z)^{\A} \,g_{i}(z, \conj{z}) \,d \mu} \,c
    \geq 0,
    \qquad \forall i \in \range{m_{g}}.
\end{equation}
This motivates the definition of the infinite-dimensional matrices 
\begin{equation}
\label{eq:MomentMatrices}
  \amatrix{M}(y)
    = \int_{\aset{S}} v(z) v(z)^{\A} \,d \mu 
  \quad \text{and} \quad 
  \amatrix{M}(g_{i} \,y)
    = \int_{\aset{S}} v(z) v(z)^{\A} g_{i}(z, \conj{z}) \,d \mu,
    \qquad \forall i \in \range{m_{g}},
\end{equation}
referred to respectively as \emph{the moment matrix} and \emph{the localizing moment matrix}.  Crucially, by \cref{eq:TruncatedMomentMatrixPSD,eq:TruncatedLocalizingMomentMatrixPSD}, any finite truncation of these matrices is positive semidefinite.
Localizing matrices $\amatrix{M}(h_{i} \,y)$ corresponding to the equality constraints $h_{i}(z, \conj{z}) = 0$ can be defined analogously and it can be shown that they satisfy $\amatrix{M}(h_{i} \,y) = 0$ for each $i \in \range{m_{h}}$.

For each $g_{i}$ (analogously for $h_{i}$), the \param{(\alpha, \beta)}-entry of a localizing moment matrix is given by
\begin{equation}
\label{eq:LocalizingMomentMatrixEntries}
  \amatrix{M}(g_{i} \,y)_{\alpha, \beta} 
    = \int_{\aset{S}} z^{\alpha} \conj{z}^{\beta} \,g_{i}(z, \conj{z}) \,d \mu 
    = \sum_{\gamma, \delta} (g_{i})_{\gamma, \delta} \,y_{\alpha + \gamma, \;\beta + \delta}.
\end{equation}
Therefore, for each truncation $\amatrix{M}_{d}(g_{i} \,y)$ of the localizing moment matrix $\amatrix{M}(g_{i} \,y)$, the \glsdashedorshort{PSD} constraint $\amatrix{M}_{d}(g_{i} \,y) \succeq 0$ is a linear matrix inequality in the vector of variables $y$, whereas the constraints $\amatrix{M}_{d}(h_{i} \,y) = 0$, for each truncation of the localizing moment matrix $\amatrix{M}(h_{i} \,y)$, yield linear equality constraints in terms of $y$. 
Then, by considering a sequence of finite (and increasing) truncations $\amatrix{M}_{d}(y)$ of the moment matrix and $\amatrix{M}_{d}(g_{i} \,y)$ and $\amatrix{M}_{d}(h_{i} \,y)$ for the localizing moment matrices, for $d \geq 0$, e.g., by restricting to monomials of degree at most $d$, we obtain a hierarchy of nondecreasing lower bounds on the optimal value $f^{\star}$, each efficiently computable via semidefinite programming. 

As a concrete example, \citet{josz_2018_lasserre} consider truncations based on the total degree.
At relaxation order $d$, one optimizes over truncated \emph{pseudo-moment sequences} $(y_{\alpha,\beta})$ with $\norm{\alpha}_{1}$, $\norm{\beta}_{1} \leq d$, and imposes the \glsdashedorshort{PSD} constraints $\amatrix{M}_{d}(y) \succeq 0$ and $\amatrix{M}_{d - k_{i}}(g_{i} \,y) \succeq 0$, where $k_{i}$ is the maximum of $\norm{\gamma}_{1}$ and $\norm{\delta}_{1}$ across all exponents $\gamma$, $\delta$ such that $(g_{i})_{\gamma, \delta} \neq 0$. Similarly, it imposes the linear constraints $\amatrix{M}_{d - k'_{i}}(h_{i} \,y) = 0$ with $k'_{i}$ defined analogously to $k_{i}$ for each $h_{i}$.
To understand why the localizing moment matrix $\amatrix{M}(g_{i} \,y)$ (resp. $\amatrix{M}(h_{i} \,y)$) is truncated at order $d - k_{i}$ (resp. $d - k'_{i}$), note that the \param{(\alpha, \beta)}-entry of $\amatrix{M}(g_{i} \,y)$ involves the moment $y_{\alpha + \gamma, \;\beta + \delta}$. 
Therefore, to ensure that $\norm{\alpha + \gamma}_{1} \leq d$ and $\norm{\beta + \delta}_{1} \leq d$, so that these variables are available, it is sufficient to require $\norm{\alpha}_{1}$, $\norm{\beta}_{1} \leq d - k_{i}$, and thus, the localizing moment matrix is consistently indexed up to degree $d - k_{i}$ (resp. $d - k'_{i}$).
Summarizing, the order\nobreakdash-$d$ relaxation of the moment hierarchy is given by 
\begin{equation}
\label{eq:HermitianPolynomialOptimizationMomentRelaxation}
\setobjectivesense{\inf}[y]%
\begin{aligned}
  f_{d}^{\star}
    &\eqdef \begin{aligned}[t]
      &\objectivesense%
        && \sum_{\alpha, \beta} f_{\alpha, \beta} \,y_{\alpha, \beta} \\
      &\subjectto%
        &&\begin{aligned}[t]
          &y_{0, 0} = 1; \\
          &\amatrix{M}_{d}(y) \succeq 0; \\
          &\amatrix{M}_{d - k_{i}}(g_{i} \,y) \succeq 0,
            \ \forall i \in \range{m_{g}}; \\
          &\amatrix{M}_{d - k'_{i}}(h_{i} \,y) = 0,
            \ \forall i \in \range{m_{h}},
      \end{aligned}
    \end{aligned}
\end{aligned}
\end{equation}
where the decision variable is the vector $y = (y_{\alpha, \beta})$ with $\norm{\alpha}_{1}$, $\norm{\beta}_{1} \leq d$.

\Citet{josz_2018_lasserre} also show that if one of the constraints is \emph{a sphere constraint} of the form 
\begin{equation}
  \sum_{i = 1}^{M} \abs{z_{i}}^{2} 
    = R^{2}
\end{equation}
for some $R > 0$, then the sequence of relaxations $(f_{d}^{\star})_{d}$ exhibits asymptotic convergence, i.e., the relaxation values $f_{d}^{\star}$ converge to the global optimum $f^{\star}$ as $d \to \infty$. 
In addition, under appropriate rank conditions, \emph{finite convergence} can be achieved, and one can extract global minimizers from the moment matrix~\citep[Proposition~4.1]{josz_2018_lasserre}.
Importantly, \citep[Proposition~4.1]{josz_2018_lasserre} does not cover truncations based on the max degree.
We prove the analogous result for max-degree truncations in \cref{sec:PQCOptimizerExtraction}.

\subsection{\Glsfmtlong{HTP} optimization}
\label{sec:HTPOptimization}

An important special case of the Hermitian polynomial optimization problem in \eqref{eq:HermitianPolynomial} considers the optimization of a Hermitian polynomial over the (complex) \param{M}-torus
\begin{equation}
  \complextorus^{M}
    \eqdef \set[\Big]{z \in \complex^{M} \given \abs{z_{i}} = 1 \,\forall i \in \range{M}}.
\end{equation}
In this case, there are no inequality constraints, and the only constraints are $h_{i}(z, \conj{z}) = z_{i} \conj{z}_{i} - 1 = 0$, for all $i \in \range{M}$.
In addition, the moment hierarchy at order\nobreakdash-$d$ corresponds to the semidefinite program
\begin{equation}
\label{eq:HTPOptimizationMomentRelaxation}
\setobjectivesense{\inf}[y]%
\begin{aligned}
  &\objectivesense%
    &&\sum_{\alpha, \beta} f_{\alpha, \beta}\, y_{\alpha, \beta} \\
  &\subjectto%
    &&\begin{aligned}[t]
      &y_{0, 0} = 1; \\
      &\amatrix{M}_{d}(y) \succeq 0 \ \text{and Toeplitz}.
  \end{aligned}
\end{aligned}
\end{equation}
Here the Toeplitz structure comes from the equality constraints $z_{i} \conj{z}_{i} - 1 = 0$. Indeed, this polynomial has only two nonzero monomials, namely $(e_{i}, e_{i})$ and $(0, 0)$, where $e_{i} \in \naturals^{M}$ is \emph{a vector of exponents} with a single nonzero entry equal to $1$ at $i$, and therefore $z^{e_{i}} = z_{i}$.
Thus, from the definition of the moment matrix in \eqref{eq:MomentMatrices}, for any $\alpha$, $\beta \in \naturals^{M}$ we obtain
\begin{equation}
  y_{\alpha + e_{i}, \;\beta + e_{i}} - y_{\alpha, \beta} 
    = 0, 
    \qquad \text{$\forall \alpha$, $\beta$ with $\norm{\alpha}_{\infty}$, $\norm{\beta}_{\infty} \leq d$}.
\end{equation}
This implies that the truncated moment matrix $\amatrix{M}_{d}(y)$ is a generalized Toeplitz matrix, i.e., its entries depend only on index differences.

On the primal side, the order\nobreakdash-$d$ \glsdashedorshort{SOS} relaxation minimizes the polynomial $f(z, \conj{z})$ by finding the maximal $\lambda \in \reals$ for which $f(z, \conj{z}) - \lambda$ is a Hermitian \glsdashedorshort{SOS} polynomial of total degree $d$ on $z$ and $\conj{z}$; that is
\begin{equation}
\label{eq:HTPOptimizationSOSRelaxation}
\setobjectivesense{\sup}[\lambda, \asos, q_{1}, \dots, q_{M}]%
\begin{aligned}
  &\objectivesense%
    &&\lambda \\
  &\subjectto%
    &&\begin{aligned}[t]
      &f(z, \conj{z}) - \lambda = \asos(z, \conj{z}) + \sum_{i = 1}^{M} q_{i}(z, \conj{z}) \,(z_{i} \conj{z}_{i} - 1), 
  \end{aligned}
\end{aligned}
\end{equation}
where $\asos$ is a Hermitian \glsdashedorshort{SOS} polynomial of max degree $d$ and $q_{1}$, \dots, $q_{M}$ are Hermitian polynomials of max degree $d - 1$.

\paragraph{\Glsfmtlongpl{HTP}.} \Pgls{HTP} in $M$ variables is a function of the form
\begin{equation}
\label{eq:HermitianTrigonometricPolynomial}
  f(z)
    = \sum_{\alpha \in \integers^{M}} f_{\alpha}\, z^{\alpha}, 
    \qquad z \in \complextorus^{M},
\end{equation}
where $\alpha \in \integers^{M}$, $f_{-\alpha} = \conj{f_{\alpha}} \in \complex$ for all $\alpha$, and only finitely many $f_{\alpha}$ are nonzero.
Once again, the condition $f_{-\alpha} = \conj{f_{\alpha}}$ ensures that $f(z)$ is real-valued for all $z \in \complextorus^{M}$.
We say that $f$ has \emph{max degree} $d$ if every nonzero term satisfies $\abs{\alpha_{j}} \leq d$ for all $j \in \range{M}$, or equivalently $\norm{\alpha}_{\infty} \leq d$.
Equivalently, writing each $z_{j} = \exp(\im \theta_{j})$ with $\theta \in \ccinterval{-\pi}{\pi}^{M}$, we obtain the \emph{angle representation}
\begin{equation}
  f(\theta)
    = \sum_{\alpha \in \integers^{M}} f_{\alpha}\, \exp\xpar[\big]{\im \inner{\alpha}{\theta}},
    \qquad \theta \in \ccinterval{-\pi}{\pi}^{M},
\end{equation}
In this form, the Hermitian symmetry $f_{-\alpha} = \conj{f_{\alpha}}$ again guarantees that $f(\theta)$ is real-valued for all $\theta \in \ccinterval{-\pi}{\pi}^{M}$.

It is not difficult to see that the \emph{\glsxtrlong{HTP} optimization problem} 
\begin{equation}
  \min_{\theta \in \ccinterval{-\pi}{\pi}^{M}} f(\theta),
\end{equation} 
with $f$ \pgls{HTP}, is a Hermitian polynomial optimization problem over the \param{M}-torus. 
In particular, we have the following \lcnamecref{thm:HTPOptimizationHPOptimizationEquivalence}. 
\begin{lemma}
\label{thm:HTPOptimizationHPOptimizationEquivalence}
  Any Hermitian polynomial optimization problem over the \param{M}-torus $\complextorus^{M}$ is equivalent, via the substitution $z_{j} = \exp(\im \theta_{j})$, to \pgls{HTP} optimization problem. 
\end{lemma}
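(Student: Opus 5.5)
The plan is to make the substitution explicit in both directions and to check that it carries feasible sets onto feasible sets and objective values onto objective values. I will also check that Hermitian symmetry of the coefficients turns into the Hermitian symmetry $f_{-\alpha} = \conj{f_{\alpha}}$ of the resulting trigonometric polynomial.

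First, start from a Hermitian polynomial $f(z, \conj{z}) = \sum_{\alpha, \beta} f_{\alpha, \beta} z^{\alpha} \conj{z}^{\beta}$ with $f_{\alpha,\beta} = \conj{f_{\beta,\alpha}}$, to be minimized over $\complextorus^{M}$. On the torus we have $\conj{z}_{j} = z_{j}^{-1}$, so $z^{\alpha}\conj{z}^{\beta} = z^{\alpha - \beta}$. Collecting terms gives
\begin{equation}
  f(z, \conj{z}) = \sum_{\gamma \in \integers^{M}} g_{\gamma} z^{\gamma},
  \qquad
  g_{\gamma} \eqdef \sum_{\alpha - \beta = \gamma} f_{\alpha, \beta}.
\end{equation}
Each sum is finite, and only finitely many $g_{\gamma}$ are nonzero. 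The Hermitian condition gives
\begin{equation}
  \conj{g_{\gamma}} = \sum_{\alpha - \beta = \gamma} f_{\beta, \alpha} = g_{-\gamma},
\end{equation}
so $g$ is an \gls{HTP} in the sense of \eqref{eq:HermitianTrigonometricPolynomial}.

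Next, the map $\theta \mapsto (\exp(\im\theta_{1}), \dots, \exp(\im\theta_{M}))$ is a surjection from $\ccinterval{-\pi}{\pi}^{M}$ onto $\complextorus^{M}$. Because it is surjective, the infimum of $f$ over $\complextorus^{M}$ equals $\min_{\theta} g(\theta)$ in angle form. The minimum is attained by compactness and continuity. Minimizers also correspond: any minimizing $\theta$ maps to a minimizing $z$, and any minimizing $z$ has a preimage $\theta$.

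For the converse direction, given an \gls{HTP} $\sum_{\gamma} g_{\gamma} z^{\gamma}$, write each $\gamma = \gamma^{+} - \gamma^{-}$ with $\gamma^{\pm} \in \naturals^{M}$ the positive and negative parts. Set $f_{\gamma^{+}, \gamma^{-}} = g_{\gamma}$ and all other coefficients to zero. Hermitian symmetry holds because $(-\gamma)^{\pm} = \gamma^{\mp}$. This $f$ agrees with $g$ on the torus, so the two optimization problems coincide.

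I expect no real obstacle. The only point needing care is that the Hermitian polynomial representation is not unique on the torus: many $f_{\alpha,\beta}$ collapse to the same $g_{\gamma}$. So the equivalence holds at the level of functions on $\complextorus^{M}$, together with their optimal values and minimizers, and not at the level of coefficient vectors. I will state it in that form.
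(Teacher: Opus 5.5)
Your proposal is correct and follows essentially the same route as the paper. Both substitute $z_j = \exp(\im\theta_j)$, use $z^{\alpha}\conj{z}^{\beta} = z^{\alpha-\beta}$ on the torus to collect coefficients $g_{\gamma} = \sum_{\alpha-\beta=\gamma} f_{\alpha,\beta}$, and check that $f_{\alpha,\beta} = \conj{f_{\beta,\alpha}}$ gives $g_{-\gamma} = \conj{g_{\gamma}}$. Your explicit surjectivity/minimizer correspondence and the converse construction via positive and negative parts go beyond what the paper writes; both are correct and make the equivalence two-sided. The paper instead records that max degree is preserved, $\norm{\gamma}_{\infty} \le d$, which is the fact it relies on later.
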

\begin{proof}
  Consider the Hermitian complex polynomial
  \begin{equation}
    g(z, \conj{z})
      = \sum_{\alpha, \beta} g_{\alpha, \beta} z^{\alpha} \conj{z}^{\beta},
      \qquad z \in \complex^{M}
  \end{equation}
  with max degree $d$ meaning that for all non-zero terms we have $\max \set{\norm{\alpha}_{\infty}, \norm{\beta}_{\infty}} \leq d$.
  Then, for $z \in \complextorus^{M}$, substituting the parametrization $z_{j} = \exp(\im \theta_{j})$ with $\theta_{j} \in \ccinterval{-\pi}{\pi}$, for all $j \in \range{M}$, we obtain
  \begin{equation}
    g(z, \conj{z})
      = g\xpar[\big]{\exp(\im \theta_{j}), \exp(-\im \theta_{j})}
      = \sum_{\alpha, \beta} g_{\alpha, \beta}\, \exp\xpar[\big]{\im \inner{\alpha - \beta}{\theta}}.
  \end{equation}
  Moreover, by setting $\gamma = \alpha - \beta$, we can rewrite the expression as the Fourier series
  \begin{equation}
    f(\theta) 
      = \sum_{\substack{\gamma \in \integers^{M} \\ \norm{\gamma}_{\infty} \leq d}} f_{\gamma}\, \exp\xpar[\big]{\im \inner{\gamma}{\theta}},
    \ \text{where} \
    f_{\gamma} 
      \eqdef \sum_{\alpha - \beta = \gamma} g_{\alpha, \beta}.
  \end{equation}
  Finally, since $g$ is Hermitian, we have $\conj{g_{\alpha, \beta}} = g_{\beta, \alpha}$ which in turn implies that $f_{-\gamma} = \conj{f_{\gamma}}$; thus, $f$ is \pgls{HTP}, and therefore, the optimization of $g$ over $\complextorus^{M}$ is equivalent to the optimization of $f$ over $\ccinterval{-\pi}{\pi}^{M}$.
\end{proof} 

\paragraph{Max-degree trigonometric \glsfmtshort{SOS} hierarchy.} In view of \cref{thm:HTPOptimizationHPOptimizationEquivalence}, since any \gls{HTP} optimization problem is equivalent to a Hermitian polynomial optimization problem over the \param{M}-torus, we immediately obtain a max-degree moment/\glsdashedorshort{SOS} hierarchy for such problems, as described above.
We refer to the corresponding hierarchies as the \emph{max-degree trigonometric moment/\glsdashedorshort{SOS} hierarchies}.
Notably, the trigonometric \glsdashedorshort{SOS} hierarchy was already developed earlier in the context of signal processing applications; see \cite{dumitrescu_2007_positive}. 
Two main differences between that earlier work and the approach of \cite{josz_2018_lasserre} are that 
\begin{enumerate*}[label=(\roman*)]
  \item the former focuses solely on the \glsdashedorshort{SOS} side (no dual formulation), and 
  \item it does not provide conditions under which an optimizer can be extracted from the hierarchy.
\end{enumerate*}

More recently, \citet{bach_2023_exponential} provide explicit error bounds for the max-degree trigonometric \gls{SOS} hierarchy, quantifying how fast $f_{d}^{\star} \to f^{\star}$ as $d \to \infty$; see also \citet{laurent_2026_overview} for an overview of convergence rates for \glsdashedorshort{SOS} hierarchies.
Relative to the total-degree truncation described in \cref{sec:HermitianPolynomialOptimization}, their analysis is stated for a \emph{max-degree} (i.e., $\norm{\cdot}_{\infty}$) truncation, which is also the natural choice for \glspl{HTP} supported on frequencies $\set{\alpha \in \integers^{M} \given \norm{\alpha}_{\infty} \leq d}$.
Concretely, letting $\amatrix{M}_{d}(y) \in \complex^{N \times N}$ be the truncated moment matrix with $N = (2d + 1)^{M}$ with entries
\begin{equation}
  {\amatrix{M}_{d}(y)}_{\alpha, \beta} 
    \eqdef y_{\alpha, \;\beta}
  \qquad \alpha, \beta \in \integers^{M} \text{with $\norm{\alpha}_{\infty}$, $\norm{\beta}_{\infty} \leq d$},
\end{equation}
the order-$d$ max-degree trigonometric moment relaxation is once again given by the semidefinite program
\begin{equation}
\label{eq:TrigonometricMomentRelaxation}
  \setobjectivesense{\inf}[y]%
  \begin{aligned}
    f_{d}^{\star}
      &= \begin{aligned}[t]
        &\objectivesense%
          &&\sum_{\alpha, \beta} f_{\alpha, \beta}\, y_{\alpha, \beta} \\
        &\subjectto%
          &&\begin{aligned}[t]
            &y_{0, 0} = 1; \\
            &\amatrix{M}_{d}(y) \succeq 0 \ \text{and Toeplitz}.
        \end{aligned}
      \end{aligned}
  \end{aligned}
\end{equation}
Equivalently on the \gls{SOS} side, this corresponds to restricting the Putinar-type certificate to max-degree $d$.

Specifically, \citet[Theorem~1]{bach_2023_exponential} show that truncation at level $d$ yields a quadratic decay of the approximation error, namely
\begin{equation}
 \abs{f_{d}^{\star} - f^{\star}} 
    = \bigoh(1 / d^{2}),
\end{equation}
without additional assumptions. 
Moreover, if $f$ is $\aset{C}^{\infty}$ and its derivatives satisfy the growth condition in \citet[Theorem~2]{bach_2023_exponential}, then the hierarchy converges at an exponential rate.
In \cref{sec:GlobalPQCOptimization}, we use the max-degree trigonometric \glsdashedorshort{SOS} hierarchy to obtain the exponential convergence guarantees of \citet{bach_2023_exponential} for \gls{PQC} optimization.

\subsection{Trigonometric polynomials and the \glsfmtlong{FFT}}
\label{sec:TrigonometricPolynomialsFFT}

The \gls{FFT}~\citep{cooley_1965_algorithm} can be used as an efficient method for converting between the two canonical representations of a trigonometric polynomial $f$; these are,
\begin{enumerate*}[label=(\roman*)]
  \item the vector of coefficients $(f_{\alpha})_{\alpha}$, and 
  \item the vector of evaluations of $f$ on a \emph{uniform sampling grid}.
\end{enumerate*}
Crucially, this conversion can be performed in \emph{quasi-linear time} $\bigoh(N \log N)$, where $N$ is the number of sampled points, i.e., the size of either of the representation vectors (which are of the same size).
For completeness, we outline this procedure in order to highlight two important aspects that are particular to our setting: 
\begin{enumerate}
  \item Its application to trigonometric polynomials, where the natural sampling grid is on the unit circle. 
  \item Its extension to the multivariate case, where \gls{FFT} subroutines are performed in parallel along each coordinate.
\end{enumerate}

\paragraph{Univariate case.} We begin with the univariate case in order to familiarize ourselves with the required notation.
Consider the univariate complex trigonometric polynomial (not necessarily Hermitian)
\begin{equation}
  f(z) 
    = \sum_{n = -d}^{d} f_{n}\, z^{n},
    \qquad z \in \complex,
\end{equation}
and sample $f$ at $N = 2d + 1$ uniformly spaced points $z_{-d}$, \dots, $z_{d}$ on the unit circle, i.e.,
\begin{equation}
  z_{t} 
    \eqdef \exp(\im \theta_{t}), 
  \qquad
  \theta_{t} 
    \eqdef \frac{2 \pi t}{N},
  \qquad t = -d, \dots, d.
\end{equation}

Let $\omega^{t} \eqdef \exp(-2 \pi \im t / N)$, for $t = -d, \dots, d$, be the \param{t}-th root of unity, so that $z_{t} = \omega^{-t}$.
We define the vector of evaluations 
\begin{equation}
  \amatrix{p}
    \eqdef \xpar[\big]{f(z_{-d}), \dots, f(z_{d})},
  \ \text{where} \
  f(z_{t}) 
    = \sum_{n = -d}^{d} f_{n}\, \omega^{-t n},
\end{equation}
and let $\amatrix{\Phi} \in \complex^{N \times N}$ be the matrix defined by 
\begin{equation}
  \Phi_{t, n} 
    \eqdef \omega^{-t n},
    \qquad t, n = -d, \dots, d.
\end{equation}
Then it is easy to see that the evaluation vector can be written as the matrix-vector product
\begin{equation}
  \amatrix{p}
    = \amatrix{\Phi}\, \amatrix{f},
\end{equation}
where $\amatrix{f} = (f_{-d}, \dots, f_{d})$ is the vector of coefficients of $f$.

Since $N = 2d + 1$, the index range $-d$, \ldots, $d$ corresponds to all distinct frequencies.  
Thus $\amatrix{\Phi}$ is a row and column reordering of the standard $N \times N$ \gls{DFT} matrix $\amatrix{F}_{N}$ given by $(\amatrix{F}_{N})_{j, k} \eqdef \omega^{j k}$ for $j$, $k = 0, \dots, N - 1$.
Hence, $\amatrix{\Phi}$ is unitary up to the scalar $N$, i.e.
\begin{equation}
  \amatrix{\Phi}^{\A} \amatrix{\Phi} 
    = N\, \identitymatrix_{N},
\end{equation}
This gives the inverse mapping
\begin{equation}
  \amatrix{f} 
    = \frac{1}{N}\, \amatrix{\Phi}^{\A} \amatrix{p},
  \ \text{and} \
  f_{n} 
    = \frac{1}{N} \sum_{t = -d}^{d} f(z_{t})\, \omega^{t n}.
\end{equation}
Moreover, using the \gls{FFT}~\citep{cooley_1965_algorithm}, the inverse \gls{DFT} $\amatrix{f} = \frac{1}{N}\, \amatrix{\Phi}^{\A} \amatrix{p}$ can be computed in quasi-linear time $\bigoh(N \log N)$ (versus $\bigoh(N^{2})$ for vanilla matrix inversion).  

\paragraph{Multivariate case.} Now, consider a trigonometric polynomial in $M$ variables
\begin{equation}
  f(z_{1}, \dots, z_{M}) 
    = \sum_{n_{1} = - d_{1}}^{d_{1}} \dots \sum_{n_{M} = - d_{M}}^{d_{M}} f_{n_{1}, \dots, n_{M}}\, \prod_{j = 1}^{M} z_{j}^{n_{j}},
    \qquad z = (z_{1}, \dots, z_{M}) \in \complex^{M},
\end{equation}
supported in the multi-index range $-d_{j}$, \dots, $d_{j}$ for each variable $z_{j}$, $j = 1, \dots, M$, and let $\amatrix{F} \in \complex^{N_{1} \times \cdots \times N_{M}}$ be the tensor of coefficients of $f$ indexed so that
\begin{equation}
  \amatrix{F}[n_{1}, \dots, n_{M}] 
    \eqdef f_{n_{1}, \dots, n_{M}},
    \qquad n_{j} = -d_{j}, \dots, d_{j}, \ j = 1, \dots, M.
\end{equation}

Define $N_{j} \eqdef 2d_{j} +1$ for each mode $j \in \range{M}$.
We sample $f$ on a tensor-product sampling grid of uniformly spaced points on the unit circle; that is, for each mode $j \in \range{M}$, we set
\begin{equation}
  z_{j, t_{j}} 
    \eqdef \exp\xpar[\big]{2 \pi \im t_{j} / N_{j}},
  \qquad t_{j} = -d_{j}, \dots, d_{j}.
\end{equation}
Thus, the sampling points in the $j$-th variable lie at $N_{j}$ equally spaced angles around the unit circle. 
The entire sampling grid is then the set of points
\begin{equation}
  (z_{1, t_{1}}, \dots, z_{M, t_{M}}),
  \qquad t_{j} = -d_{j}, \dots, d_{j}.
\end{equation}

For each $j$, set $\omega_{N_{j}}^{t} \eqdef \exp(- 2\pi \im t / N_{j})$, for $t = -d_{j}, \dots, d_{j}$, as the roots of unity of order $N_{j}$, so that $z_{j, t_{j}} = \omega_{N_{j}}^{-t_{j}}$, and let $\amatrix{P} \in \complex^{N_{1} \times \cdots \times N_{M}}$ be the tensor of evaluations of $f$ on the sampling grid, indexed so that
\begin{equation}
  \amatrix{P}[t_{1}, \dots, t_{M}] 
    \eqdef f(z_{1, t_{1}}, \dots, z_{M, t_{M}}) 
    = \sum_{n_{1} = -d_{1}}^{d_{1}} \dots \sum_{n_{M} = -d_{M}}^{d_{M}} f_{n_{1}, \dots, n_{M}}\, \prod_{j = 1}^{M} \omega_{N_{j}}^{-t_{j} n_{j}}.
\end{equation}

For each mode $j$, define the (reordered) square \gls{DFT} matrix $\amatrix{\Phi}^{(j)} \in \complex^{N_{j} \times N_{j}}$ by entries
\begin{equation}
  \amatrix{\Phi}^{(j)}_{t_{j}, n_{j}}
     \eqdef \omega_{N_{j}}^{-t_{j} n_{j}}, 
     \qquad t_{j}, n_{j} = -d_{j}, \dots, d_{j}.
\end{equation}
Then, since $N_{j} = 2d_{j} + 1$, each $\Phi^{(j)}$ satisfies
\begin{equation}
  \xpar[\big]{\amatrix{\Phi}^{(j)}}^{\A}\, \amatrix{\Phi}^{(j)} = N_{j}\, \identitymatrix_{N_{j}}.
\end{equation}
Thus, the multivariate evaluation tensor $\amatrix{P}$ can be obtained from the coefficient tensor $\amatrix{F}$ by applying the \gls{DFT} along each mode, i.e., by performing $M$ sequential batches of independent univariate \glspl{DFT} along each mode.
\begin{equation}
\label{eq:TensorDFT}
  \amatrix{P}
    = \amatrix{F} \times_{1} \amatrix{\Phi}^{(1)} \times_{2} \amatrix{\Phi}^{(2)} \dots \times_{M} \amatrix{\Phi}^{(M)}.
\end{equation}
Here $\times_{j}$ denotes the \emph{mode-$j$ product} of a tensor with a matrix, a generalization of matrix multiplication to tensors given by
\begin{equation}
  (\amatrix{F} \times_{j} \amatrix{\Phi}^{(j)})[t_{1}, \dots, t_{M}]
    \eqdef \sum_{n_{j} = -d_{j}}^{d_{j}} \amatrix{\Phi}^{(j)}_{t_{j}, n_{j}}\, \amatrix{F}[t_{1}, \dots, t_{j - 1}, n_{j}, t_{j + 1}, \dots, t_{M}],
\end{equation}
for all valid indices $t_{1}, \dots, t_{M}$.
In other words, to apply $\amatrix{\Phi}^{(j)}$ in mode $j$, we fix all indices except the \param{j}-th, view $\amatrix{F}$ along that direction as a length\nobreakdash-$N_{j}$ vector (a mode\nobreakdash-$j$ fiber), and multiply it by $\amatrix{\Phi}^{(j)}$.
In vectorized form, \eqref{eq:TensorDFT} is equivalent to
\begin{equation}
  \mathrm{vec}(\amatrix{P})
    = \xpar[\big]{\amatrix{\Phi}^{(M)} \otimes \dots \otimes \amatrix{\Phi}^{(1)}}\, \mathrm{vec}(\amatrix{F}).
\end{equation}

Using that $(\amatrix{\Phi}^{(j)})^{\A} \amatrix{\Phi}^{(j)} = N_{j} \identitymatrix_{N_{j}}$, it follows that the mapping in \eqref{eq:TensorDFT} is invertible; that is,
\begin{equation}
  \amatrix{F}
    = \amatrix{P} \times_{1} \xpar[\Big]{\frac{1}{N_{1}}\xpar[\big]{\amatrix{\Phi}^{(1)}}^{\A}} \times_{2} \xpar[\Big]{\frac{1}{N_{2}}\xpar[\big]{\amatrix{\Phi}^{(2)}}^{\A}} \dots \times_{M} \xpar[\Big]{\frac{1}{N_{M}}\xpar[\big]{\amatrix{\Phi}^{(M)}}^{\A}}.
\end{equation}
Here, each mode\nobreakdash-$j$ multiplication applies $\xpar[\big]{\amatrix{\Phi}^{(j)}}^{\A}$ to all mode\nobreakdash-$j$ fibers. 

To understand the computational structure, recall that multiplication by $\amatrix{\Phi}^{(j)}$ in mode $j$ acts only along mode\nobreakdash-$j$ fibers.  
For instance, in the case $j = 1$, let
\begin{equation}
  \amatrix{F}^{(1)} 
    \eqdef \amatrix{F} \times_{1} \amatrix{\Phi}^{(1)}.
\end{equation}
Then, by definition, we have that
\begin{equation}
  \amatrix{F}^{(1)}[t_{1}, \dots, t_{M}] 
    = \sum_{n_{1} = -d_{1}}^{d_{1}} \amatrix{\Phi}^{(1)}_{t_{1}, n_{1}}\, \amatrix{F}[n_{1}, t_{2}, \dots, t_{M}]
    \qquad \forall t_{1}, \dots, t_{M}.
\end{equation}
Thus, for fixed $(t_{2}, \dots, t_{M})$, the slice $\amatrix{F}[\colon\!, t_{2}, \dots, t_{M}]$ (with all indices fixed except the first one) is a mode\nobreakdash-$1$ fiber of length $N_{1}$, and the multiplication is exactly a 1D \gls{DFT} of length $N_{1}$.
Therefore, applying $\times_{1} \amatrix{\Phi}^{(1)}$ performs $N_{2} N_{3} \cdots N_{M}$ \textbf{independent} 1D \glspl{DFT}, one along each mode\nobreakdash-$1$ fiber. 
Since the \glspl{DFT} are independent, each 1D \gls{DFT} of length $N_{1}$ can be computed using the 1D \gls{FFT} in $\bigoh(N_{1} \log N_{1})$, and the total complexity of the mode\nobreakdash-$1$ step is
\begin{equation}
  \bigoh(N_{2} N_{3} \cdots N_{M}\, N_{1} \log N_{1}).
\end{equation}

Repeating this reasoning for modes $j = 2, \dots, M$, the complete inverse (or forward) \param{M}-dimensional \gls{DFT} can be computed by $M$ sequential batches of independent 1D \glspl{FFT}, yielding overall complexity
\begin{equation}
  \bigoh\xpar[\Big]{\sum_{j = 1}^{M} \prod_{i = 1}^{M} N_{i} \log N_{j}},
\end{equation}
or equivalently, by setting $N = N_{1} N_{2} \cdots N_{M}$, the total number of sampled points, we can rewrite the above complexity as 
\begin{equation}
  \bigoh(N \log N)
\end{equation}
with \emph{full parallelism} across fibers.

\section{Training poly-depth constant-parameter \glsfmtshortpl{PQC}}
\label{sec:GlobalPQCOptimization}

In \cref{sec:EfficientApproximationAlgorithm}, we present \emph{a hybrid quantum--classical algorithm} that, under suitable assumptions on the architecture of the \gls{PQC} and the observable (cf.~\cref{sec:Assumptions}), can efficiently (i.e., in time polynomial in $n$) approximate \emph{the global minimum} $f^{\star}$ up to polynomial accuracy with high probability, using only a polynomial number of queries to the quantum device. 
The algorithm relies on \pgls{HTP} approximation of the objective function $f$, obtained via the \gls{FFT}.
A sufficient condition (cf.~\cref{ass:IntegerSpectralDifferences}) for the existence of such a representation is that the generators of the \gls{PQC} have integer spectral differences.
As we have discussed in \cref{sec:Preliminaries}, \gls{HTP} optimization can be addressed using \glsdashedorshort{SOS} programming.
The main challenge of this approach lies in bounding the order of the \glsdashedorshort{SOS} relaxation required to approximate $f^{\star}$ up to polynomial accuracy.
This is addressed via the complexity guarantees in \cref{ass:PQCComplexity,ass:ObservableComplexity} discussed in \cref{sec:Assumptions}.

\paragraph{Notation.} We now introduce the necessary notation to state \cref{thm:HTPRepresentation,thm:EfficientApproximationAlgorithm,thm:OptimizerExtraction}.
For each $k \in \range{K}$, let $\aset{L}_{k} \subset \reals$ denote \emph{the spectrum} of the generator $\amatrix{V}_{k}$, and let $\Delta_{k} \eqdef \max \aset{L}_{k} - \min \aset{L}_{k}$ denote \emph{the spectral diameter} of $\amatrix{V}_{k}$.
We define the vector $\Delta \in \nnreals^{K}$ given by
\begin{equation}
  \Delta
    \eqdef (\Delta_{1}, \dots, \Delta_{K})
\end{equation}
to collect the spectral diameters of the generators of the \gls{PQC} $\amatrix{U}$.
Furthermore, for each $k \in \range{K}$ and $\lambda \in \aset{L}_{k}$, we define $\amatrix{P}_{k, \lambda} \from \aset{H} \to \aset{H}$ to be \emph{the orthogonal projector} onto the eigenspace of $\amatrix{V}_{k}$ corresponding to the eigenvalue $\lambda$.
We define $\aset{L} \subset \reals^{K}$ to be the \emph{set of frequency vectors} given by
\begin{equation}
  \aset{L}
    \equiv \aset{L}_{1} \times \dots \times \aset{L}_{K},
\end{equation}
and set 
\begin{equation}
  \Delta\aset{L} 
    \eqdef \aset{L} - \aset{L} 
    \equiv \set{\lambda_{+} - \lambda_{-} \given \lambda_{+}, \lambda_{-} \in \aset{L}}
\end{equation}
to be the set of spectral differences.
Finally, for each $\lambda \in \aset{L}$, we define the operator $\amatrix{P}_{\lambda} \from \aset{H} \to \aset{H}$ as
\begin{equation}
  \amatrix{P}_{\lambda}
    \eqdef \amatrix{P}_{K, \lambda_{K}} \amatrix{C}_{K} \cdots \amatrix{P}_{1, \lambda_{1}} \amatrix{C}_{1}.
\end{equation}

It is also convenient to encode the \gls{PQC}['s] \emph{parameter sharing} among the gates $\amatrix{U}_{1}$, \dots, $\amatrix{U}_{K}$ using \emph{a binary matrix} $\amatrix{A} \in \set{0, 1}^{K \times M}$.
We define $\amatrix{A}$ such that its \param{j}-th column $\amatrix{A}_{\colon\!, j}$ indicates which parametrized gates $\amatrix{U}_{k}$ depend on the independent parameter $\theta_{j}$, and its \param{k}-th row $\amatrix{A}_{k, \colon\!}$ indicates which independent parameter $\theta_{j}$ parametrizes the gate $\amatrix{U}_{k}$.
In particular, the \param{k}-th row $\amatrix{A}_{k, \colon\!} \in \set{0, 1}^{M}$ of $\amatrix{A}$ has a single nonzero entry at position $j_{k}$ such that
\begin{equation}
  \amatrix{A}_{k, \colon\!}\, \theta
    = \theta_{j_{k}},
    \qquad \forall k \in \range{K}.
\end{equation}

\subsection{Representing the \glsfmtshort{PQC} objective as \pglsfmtlong{HTP}}
\label{sec:HTPRepresentation}

The following \lcnamecref{thm:HTPRepresentation} characterizes the objective functions induced by \glspl{PQC} whose generators have integer spectral differences.
We remark that, although related representation results have been obtained in the literature~\citep{schuld_2021_effect,fontana_2022_efficient,nemkov_2023_fourier}, \cref{thm:HTPRepresentation} makes explicit how the max degree and coefficients of the \gls{HTP} representation depend on the architecture of the \gls{PQC} $\amatrix{U}$, and in particular, the spectra of the generators $\amatrix{V}_{1}, \dots, \amatrix{V}_{K}$ and the parameter-sharing matrix $\amatrix{A}$.
We use this explicit dependence in \cref{sec:EfficientApproximationAlgorithm} to bound the order of the \glsdashedorshort{SOS} relaxation needed to approximate $f^{\star}$.

\begin{theorem}
\label{thm:HTPRepresentation}
  Consider \pgls{PQC} $\amatrix{U}$ that satisfies \cref{ass:IntegerSpectralDifferences}, i.e., the spectra $\aset{L}_{1}$, \dots, $\aset{L}_{K}$ of, respectively, the generators $\amatrix{V}_{1}$, \dots, $\amatrix{V}_{K}$ have integer differences. 
  Then, for every Hermitian observable $\amatrix{O} \from \aset{H} \to \aset{H}$, the objective function $f \from \reals^{M} \to \reals$ is \pgls{HTP} of the form
  \begin{equation}
    f(\theta)
      = \sum_{\alpha \in \amatrix{A}^{\T}(\Delta\aset{L})} f_{\alpha} \exp\xpar[\big]{\im \inner{\alpha}{\theta}},
  \ \text{where} \
    f_{\alpha}
      = \sum_{\substack{
        \lambda_{+}, \lambda_{-} \in \aset{L} \\ 
        \amatrix{A}^{\T}(\lambda_{+} - \lambda_{-}) = \alpha
      }} \braket{0}{\amatrix{P}_{\lambda_{+}}^{\A} \amatrix{O} \amatrix{P}_{\lambda_{-}}}{0},
    \qquad \forall \alpha \in \amatrix{A}^{\T}(\Delta\aset{L}),
  \end{equation}
  Moreover, for each independent parameter $\theta_{j}$, we have $\abs{\alpha_{j}} \leq (\amatrix{A}^{\T} \Delta)_{j}$, i.e., the sum of spectral diameters of all generators that depend on $\theta_{j}$ is bounded by $(\amatrix{A}^{\T} \Delta)_{j}$.
\end{theorem}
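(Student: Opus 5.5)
The plan is to expand each parametrized gate spectrally and then collect the phases. Each generator $\amatrix{V}_{k}$ is Hermitian, so it has a spectral decomposition $\amatrix{V}_{k} = \sum_{\lambda \in \aset{L}_{k}} \lambda \amatrix{P}_{k,\lambda}$. This gives
\begin{equation}
  \amatrix{U}_{k}(\theta_{j_{k}}) = \sum_{\lambda_{k} \in \aset{L}_{k}} \exp(-\im \lambda_{k} \theta_{j_{k}}) \amatrix{P}_{k,\lambda_{k}}.
\end{equation}
Substituting this into the product \eqref{eq:PQC} and distributing yields a sum over $\lambda = (\lambda_{1},\dots,\lambda_{K}) \in \aset{L}$. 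The term indexed by $\lambda$ is $\exp\bigl(-\im \sum_{k} \lambda_{k} \theta_{j_{k}}\bigr) \amatrix{P}_{\lambda}$, with $\amatrix{P}_{\lambda}$ as defined in the notation. Since $\theta_{j_{k}} = \amatrix{A}_{k,\colon\!}\theta$, the phase equals $\exp(-\im \inner{\lambda}{\amatrix{A}\theta}) = \exp(-\im \inner{\amatrix{A}^{\T}\lambda}{\theta})$. Hence
\begin{equation}
  \amatrix{U}(\theta)\ket{0} = \sum_{\lambda \in \aset{L}} \exp(-\im\inner{\amatrix{A}^{\T}\lambda}{\theta})\,\amatrix{P}_{\lambda}\ket{0}.
\end{equation}

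Next I would plug this into $f(\theta) = \braket{0}{\amatrix{U}^{\A}\amatrix{O}\amatrix{U}}{0}$. Using the expansion twice, with indices $\lambda_{+}$ for the bra and $\lambda_{-}$ for the ket, gives a double sum over $\lambda_{+}, \lambda_{-} \in \aset{L}$. The term indexed by $(\lambda_{+},\lambda_{-})$ is $\exp(\im\inner{\amatrix{A}^{\T}(\lambda_{+}-\lambda_{-})}{\theta}) \braket{0}{\amatrix{P}_{\lambda_{+}}^{\A}\amatrix{O}\amatrix{P}_{\lambda_{-}}}{0}$. Grouping terms by $\alpha = \amatrix{A}^{\T}(\lambda_{+}-\lambda_{-})$ produces exactly the stated coefficients $f_{\alpha}$, indexed over $\amatrix{A}^{\T}(\Delta\aset{L})$.

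Three facts then need checking.
\begin{enumerate}
  \item \emph{Integrality of the frequencies.} By \cref{ass:IntegerSpectralDifferences}, each $\lambda_{+,k}-\lambda_{-,k}$ is an integer, because both eigenvalues lie in the same spectrum $\aset{L}_{k}$. Since $\amatrix{A}$ is a $0/1$ matrix, $\alpha = \amatrix{A}^{\T}(\lambda_{+}-\lambda_{-})$ therefore lies in $\integers^{M}$.
  \item \emph{Hermitian symmetry.} Swapping $\lambda_{+}$ and $\lambda_{-}$ maps $\alpha$ to $-\alpha$. Because $\amatrix{O}$ is Hermitian, $\conj{\braket{0}{\amatrix{P}_{\lambda_{+}}^{\A}\amatrix{O}\amatrix{P}_{\lambda_{-}}}{0}} = \braket{0}{\amatrix{P}_{\lambda_{-}}^{\A}\amatrix{O}\amatrix{P}_{\lambda_{+}}}{0}$. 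Summing over the matched index set gives $f_{-\alpha} = \conj{f_{\alpha}}$.
  \item \emph{Finiteness.} The index set $\aset{L}$ is finite, so only finitely many $f_{\alpha}$ are nonzero. Together with the two previous points, this shows $f$ is an \gls{HTP}.
\end{enumerate}

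For the degree bound, I would write $\alpha_{j} = \sum_{k : j_{k} = j} (\lambda_{+,k}-\lambda_{-,k})$. Each summand has absolute value at most $\Delta_{k}$, so the triangle inequality gives $\abs{\alpha_{j}} \le \sum_{k : j_{k}=j}\Delta_{k} = (\amatrix{A}^{\T}\Delta)_{j}$.

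I expect no genuine obstacle; the argument is essentially bookkeeping. The one step needing care is the identification of the phase $\sum_{k}\lambda_{k}\theta_{j_{k}}$ with $\inner{\amatrix{A}^{\T}\lambda}{\theta}$ under the ordering of $\amatrix{P}_{\lambda}$. Related to this, $\amatrix{P}_{\lambda}$ must include the fixed gates $\amatrix{C}_{k}$ in the correct interleaved order, so that the expansion of the product is exact.
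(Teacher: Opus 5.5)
Your proposal is correct and follows essentially the same route as the paper. Both expand each gate spectrally, collect the phases as $\exp(-\im\inner{\amatrix{A}^{\T}\lambda}{\theta})$, group the double sum by $\alpha = \amatrix{A}^{\T}(\lambda_{+}-\lambda_{-})$, and bound $\abs{\alpha_{j}}$ by the spectral diameters of the gates sharing $\theta_{j}$. The one small difference is that you check $f_{-\alpha} = \conj{f_{\alpha}}$ directly from $\amatrix{O}^{\A} = \amatrix{O}$ by swapping $\lambda_{+}$ and $\lambda_{-}$, whereas the paper infers Hermitian symmetry from the real-valuedness of $f$; both arguments are valid.
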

\begin{proof}
  Since every real-valued trigonometric polynomial is \pgls{HTP} and the objective function $f$ is real-valued, it suffices to show that $f$ is a trigonometric polynomial with max degree at most $d_{j}$ for each independent parameter $\theta_{j}$.

  \paragraph{Step 1: Reformulation as \pgls{HTP}.} To begin with, consider an arbitrary $k \in \range{K}$ and observe that, since $\amatrix{V}_{k}$ is Hermitian, it admits the spectral decomposition
  \begin{equation}
    \amatrix{V}_{k}
      = \sum_{\lambda \in \aset{L}_{k}} \lambda \amatrix{P}_{k, \lambda}.
  \end{equation}
  Hence, for each $\theta_{j_{k}} \in \reals$,
  \begin{equation}
    \exp\xpar[\big]{-\im \theta_{j_{k}} \amatrix{V}_{k}}
      = \sum_{\lambda \in \aset{L}_{k}} \exp(-\im \lambda \theta_{j_{k}}) \amatrix{P}_{k, \lambda}.
  \end{equation}

  Next, consider an arbitrary $\theta \in \reals^{M}$.
  Substituting the above expression for each factor $\amatrix{U}_{k}$ into the definition of the \gls{PQC} $\amatrix{U}$, we obtain
  \begin{subequations}
  \begin{align}
    \amatrix{U}(\theta)
      &= \amatrix{U}_{K}\xpar[\big]{\amatrix{A}_{K, \colon\!} \theta} \amatrix{C}_{K} \dots \amatrix{U}_{1}\xpar[\big]{\amatrix{A}_{1, \colon\!} \theta} \amatrix{C}_{1} \\
      &= \sum_{\lambda_{K} \in \aset{L}_{K}} \exp\xpar[\big]{-\im \lambda_{K} \amatrix{A}_{K, \colon\!} \theta} \amatrix{P}_{K, \lambda_{K}} \amatrix{C}_{K} \dots \sum_{\lambda_{1} \in \aset{L}_{1}} \exp\xpar[\big]{-\im \lambda_{1} \amatrix{A}_{1, \colon\!} \theta} \amatrix{P}_{1, \lambda_{1}} \amatrix{C}_{1} \\
      &= \sum_{\lambda \in \aset{L}} \exp\xpar[\bigg]{-\im \sum_{k  = 1}^{K} \lambda_{k} \amatrix{A}_{k, \colon\!} \theta} \amatrix{P}_{\lambda} \\
      &= \sum_{\lambda \in \aset{L}} \exp\xpar[\big]{-\im \inner{\amatrix{A}^{\T} \lambda}{\theta}} \amatrix{P}_{\lambda}.
  \end{align}
  \end{subequations}
  Therefore, by setting $\beta \equiv \amatrix{A}^{\T} \lambda$ and grouping the terms accordingly, we obtain
  \begin{equation}
  \label{eq:PQCExpansion}
    \amatrix{U}(\theta)
      = \sum_{\beta \in \amatrix{A}^{\T} \aset{L}} \exp\xpar[\big]{-\im \inner{\beta}{\theta}} \sum_{\substack{
        \lambda \in \aset{L} \\
        \amatrix{A}^{\T} \lambda = \beta
      }} \amatrix{P}_{\lambda}.
  \end{equation}
  Then the adjoint $\amatrix{U}^{\A}(\theta)$ can be expressed, by linearity of the adjoint, as
  \begin{equation}
  \label{eq:PQCAdjointExpansion}
    \amatrix{U}^{\A}(\theta)
      = \sum_{\beta \in \amatrix{A}^{\T} \aset{L}} \conj{\exp\xpar[\big]{ -\im \inner{\beta}{\theta}}} \sum_{\substack{
        \lambda \in \aset{L} \\
        \amatrix{A}^{\T} \lambda = \beta
      }} \amatrix{P}_{\lambda}^{\A}
      = \sum_{\beta \in \amatrix{A}^{\T} \aset{L}} \exp\xpar[\big]{\im \inner{\beta}{\theta}} \sum_{\substack{
        \lambda \in \aset{L} \\
        \amatrix{A}^{\T} \lambda = \beta
      }} \amatrix{P}_{\lambda}^{\A}.
  \end{equation}
  Consequently, substituting \cref{eq:PQCExpansion,eq:PQCAdjointExpansion} into the definition of the objective function $f$ yields
  \begin{subequations}
  \begin{align}
    f(\theta)
      &= \braket[\big]{0}{\amatrix{U}^{\A}(\theta) \amatrix{O} \amatrix{U}(\theta)}{0} \\
      &= \braket[\Bigg]{0}{\xpar[\bigg]{\sum_{\beta_{+} \in \amatrix{A}^{\T} \aset{L}} \exp\xpar[\big]{\im \inner{\beta_{+}}{\theta}} \sum_{\substack{
        \lambda_{+} \in \aset{L} \\
        \mathclap{\amatrix{A}^{\T} \lambda_{+} = \beta_{+}}
      }} \amatrix{P}_{\lambda_{+}}^{\A}} \amatrix{O} \xpar[\bigg]{\sum_{\beta_{-} \in \amatrix{A}^{\T} \aset{L}} \exp\xpar[\big]{-\im \inner{\beta_{-}}{\theta}} \sum_{\substack{
        \lambda_{-} \in \aset{L} \\
        \mathclap{\amatrix{A}^{\T} \lambda_{-} = \beta_{-}}
      }} \amatrix{P}_{\lambda_{-}}}}{0} \\
      &= \sum_{\beta_{+}, \beta_{-} \in \amatrix{A}^{\T} \aset{L}} \exp\xpar[\Big]{\im \inner[\big]{(\beta_{+} - \beta_{-})}{\theta}} \sum_{\substack{
        \lambda_{+}, \lambda_{-} \in \aset{L} \\ 
        \amatrix{A}^{\T} \lambda_{+} = \beta_{+} \\ 
        \amatrix{A}^{\T} \lambda_{-} = \beta_{-}
      }} \braket{0}{\amatrix{P}_{\lambda_{+}}^{\A} \amatrix{O} \amatrix{P}_{\lambda_{-}}}{0} \\
  \end{align}
  \end{subequations}
  
  Finally, by setting $\alpha \equiv \beta_{+} - \beta_{-}$ and grouping the terms accordingly, we obtain
  \begin{equation}
    f(\theta)
      = \sum_{\alpha \in \amatrix{A}^{\T}(\Delta\aset{L})} f_{\alpha} \exp\xpar[\big]{\im \inner{\alpha}{\theta}}.
  \end{equation}
  So the coefficient $f_{\alpha}$ collects all pairs of spectral contributions whose projected difference equals $\alpha$.
  Thus, since the entries of $\amatrix{A}$ are integers and, by \cref{ass:IntegerSpectralDifferences}, the spectra $\aset{L}_{1}, \dots, \aset{L}_{K}$ have integer differences, we conclude that the frequencies $\alpha \in \amatrix{A}^{\T}(\Delta\aset{L})$ of $f$ are integers; therefore, $f$ is a trigonometric polynomial.

  \paragraph{Step 2: Degree bound.} Now consider an arbitrary independent parameter $\theta_{j}$.
  Then, since all entries of $\amatrix{A}$ are nonnegative, we have that
  \begin{subequations}
  \begin{align}
    \max \xpar[\big]{\amatrix{A}^{\T} (\Delta\aset{L})}_{j}
      &= \max_{\alpha \in \Delta\aset{L}} \inner{\amatrix{A}_{\colon\!, j}}{\alpha} \\ 
      &= \max_{\alpha \in \Delta\aset{L}} \sum_{k = 1}^{K} \amatrix{A}_{k, j} \alpha_{k} \\
      &\leq \sum_{k = 1}^{K} \amatrix{A}_{k, j} \max (\aset{L}_{k} - \aset{L}_{k}) \\
      &= \sum_{k = 1}^{K} \amatrix{A}_{k, j} (\max \aset{L}_{k} - \min \aset{L}_{k}) \\
      &= \sum_{k = 1}^{K} \amatrix{A}_{k, j} \Delta_{k} \\
      &= (\amatrix{A}^{\T} \Delta)_{j}.
  \end{align}
  \end{subequations}
  Furthermore, since $\Delta\aset{L} = \aset{L} - \aset{L}$ is symmetric around zero, we also have
  \begin{equation}
    \min \xpar[\big]{\amatrix{A}^{\T} (\Delta\aset{L})}_{j}
      = - \max \xpar[\big]{\amatrix{A}^{\T} (\Delta\aset{L})}_{j}
      \geq - (\amatrix{A}^{\T} \Delta)_{j}.
  \end{equation}
  Therefore, all frequencies in the \param{j}-th coordinate lie in the interval $\ccinterval[\big]{-(\amatrix{A}^{\T} \Delta)_{j}}{(\amatrix{A}^{\T} \Delta)_{j}}$, and thus, the max degree of $f$ with respect to the independent parameter $\theta_{j}$ is at most $(\amatrix{A}^{\T} \Delta)_{j}$.
  This explicit degree formula is what later controls sampling grid size and the \glsdashedorshort{SOS} order in \cref{sec:EfficientApproximationAlgorithm}.
\end{proof}


\subsection{A range-based additive \glsfmtshort{FPRAS} for \glsfmtshort{PQC} optimization}
\label{sec:EfficientApproximationAlgorithm}

We now present a hybrid quantum--classical algorithm for efficiently approximating the optimal value of the \gls{PQC} objective function.
The \lcnamecref{alg:EfficientApproximationAlgorithm} is a \glsxtrfull{FPRAS} in the weak sense (additive) for approximating $f^{\star}$, and requires only a polynomial number of queries to the quantum device.
The main result of this \lcnamecref{sec:EfficientApproximationAlgorithm} is stated in \cref{thm:EfficientApproximationAlgorithm}.

\begin{theorem}[\Pglsfmtshort{FPRAS} for \glsfmtshort{PQC} optimization]
\label{thm:EfficientApproximationAlgorithm}
  Consider \pgls{PQC} $\amatrix{U}$ with $n$ qubits that satisfies \cref{ass:IntegerSpectralDifferences,ass:PQCComplexity}.
  Let $\amatrix{O} \from \aset{H} \to \aset{H}$ be a Hermitian observable that satisfies \cref{ass:ObservableComplexity}, i.e., its operator norm $\norm{\amatrix{O}}$ is bounded by a polynomial in $n$.
  Then, for every error $\epsilon > 0$ and failure probability $\delta > 0$, there exists a randomized algorithm with oracle access to a quantum device running in time polynomial in $n$, $1 / \epsilon$, and $\log (1 / \delta)$ that outputs an estimate $\hat{f}^{\star}$ of $f^{\star}$ such that, with probability at least $1 - \delta$,
  \begin{equation}
    \abs{\hat{f}^{\star} - f^{\star}} \leq \epsilon.
  \end{equation}
  Furthermore, the algorithm requires a number of queries to the quantum device that is a priori known and polynomial in $n$, $1 / \epsilon$, and $\log (1 / \delta)$.
\end{theorem}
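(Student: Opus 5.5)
The plan is to combine \cref{thm:HTPRepresentation}, quantum estimation of $f$ on a grid, the multidimensional \gls{FFT} of \cref{sec:TrigonometricPolynomialsFFT}, and the max-degree trigonometric \gls{SOS} hierarchy with the rate of \citet{bach_2023_exponential}. I will split the total error $\epsilon$ into a statistical/interpolation part and a relaxation part, each at most $\epsilon/2$.

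\textbf{Step 1: degree and grid size.} By \cref{thm:HTPRepresentation}, $f$ is an \gls{HTP} whose max degree in $\theta_j$ is at most $d_j \eqdef (\amatrix{A}^{\T}\Delta)_j \le K \max_k \Delta_k$. By \cref{ass:ParametrizedGatesCount,ass:GeneratorSpectralDiameterBound}, this is bounded by a known polynomial $d(n)$. With $N_j = 2d_j+1$, the grid of \cref{sec:TrigonometricPolynomialsFFT} has $N = \prod_j N_j \le (2d(n)+1)^M$ points. Since $M = \bigoh(1)$ by \cref{ass:IndependentParameterCount}, $N$ is polynomial in $n$. This is the only place where constant $M$ is essential.

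\textbf{Step 2: sampling and \gls{FFT}.} At each grid point $\theta_t$, I will prepare $\amatrix{U}(\theta_t)\ket{0}$ and measure $\amatrix{O}$ repeatedly to get an empirical mean $\hat p_t$. Each outcome lies in $[-\norm{\amatrix{O}},\norm{\amatrix{O}}]$, so Hoeffding with $S = \bigoh\xpar{\norm{\amatrix{O}}^2 \eta^{-2}\log(N/\delta)}$ shots per point gives $\abs{\hat p_t - f(\theta_t)} \le \eta$ for all $t$ simultaneously, with probability $1-\delta$ (union bound). If $\amatrix{O}$ is only given as a sum of polynomially many measurable local terms, I will estimate each term separately and pay a polynomial factor. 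I will then apply the inverse transform $\hat{\amatrix{F}} = \amatrix{P}\times_j \tfrac1{N_j}(\amatrix{\Phi}^{(j)})^{\A}$ in time $\bigoh(N\log N)$, and symmetrize by $\hat f_{\alpha} \leftarrow (\hat f_\alpha + \conj{\hat f_{-\alpha}})/2$ to obtain an \gls{HTP} $\hat f$.

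Because the transform is exact on degree-$(d_j)_j$ trigonometric polynomials, and each coefficient is an average of the samples against unimodular weights, we get $\abs{\hat f_\alpha - f_\alpha} \le \eta$. Hence
\[
\sup_\theta \abs{\hat f(\theta) - f(\theta)} \le N\eta,
\qquad\text{so}\qquad
\abs{\min \hat f - f^\star} \le N\eta .
\]
Choosing $\eta = \epsilon/(4N)$ keeps the total number of queries $NS$ polynomial in $n$, $1/\epsilon$ and $\log(1/\delta)$, and this number is known a priori.

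\textbf{Step 3: the relaxation.} Next I will solve the order-$r$ max-degree trigonometric moment relaxation \eqref{eq:TrigonometricMomentRelaxation} for $\hat f$, or equivalently its \gls{SOS} dual. I will invoke \citet[Theorem~1]{bach_2023_exponential} in its explicit form. The gap $\hat f^\star_r - \min\hat f$ is at most a constant depending only on $M$, times $(\max_j d_j)^2/r^2$, times a norm of $\hat f$ such as $\sum_\alpha \abs{\hat f_\alpha}$. That norm is at most $N(\norm{\amatrix{O}} + \eta)$, since $\abs{f_\alpha} \le \norm{\amatrix{O}}$ for every $\alpha$ (each $f_\alpha$ is a Fourier coefficient of a function bounded by $\norm{\amatrix{O}}$).

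It therefore suffices to take $r = \bigoh\xpar{d(n)\sqrt{N\norm{\amatrix{O}}/\epsilon}}$, which is polynomial by \cref{ass:ObservableComplexity}. The resulting SDP has a Toeplitz moment matrix of size $(2r+1)^M$, again polynomial in $n$ and $1/\epsilon$. I will solve it to additive accuracy $\epsilon/4$ by interior-point or ellipsoid methods in polynomial time. Boundedness of the feasible set follows because $y_{0,0}=1$ together with the Toeplitz structure forces all diagonal entries to equal $1$. The output $\hat f^\star$ is this SDP value, and the triangle inequality gives $\abs{\hat f^\star - f^\star} \le \epsilon$ with probability $1-\delta$.

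\textbf{Main obstacle.} The delicate step is Step 3. I need to extract from \citet{bach_2023_exponential} a bound with explicit polynomial dependence on the degree and on the coefficient norm, rather than just $\bigoh(1/r^2)$ with unspecified constants. I also have to track how coefficient errors propagate through the norm in that bound. If the available constant turns out to be awkward, the fallback is a direct classical grid search on $\hat f$. Its gradient is bounded by $\sum_\alpha \abs{\hat f_\alpha}\norm{\alpha}_1 \le \mathrm{poly}(n)$, so a polynomially fine grid in constant dimension $M$ already certifies the $\epsilon/2$ accuracy. The \gls{SOS} route remains preferable because it also enables optimizer extraction via \cref{thm:OptimizerExtraction}.
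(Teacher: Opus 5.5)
Your proposal is correct and follows essentially the same route as the paper's proof. Both arguments use the degree bound $K\Delta_{\max}$ from the HTP representation, Hoeffding plus a union bound on a matching grid with per-point accuracy $\Theta(\epsilon/\abs{\aset{S}})$, and a DFT with Hermitian symmetrization that yields a uniform $\epsilon/2$-type error. For the relaxation, the paper applies the explicit bound of \citet{bach_2023_exponential}, $\bigl((1-\ell_0^2/\ell^2)^{-M}-1\bigr)\sum_{\alpha\neq 0}\abs{\hat f_\alpha}$ with $\ell_0=\sqrt{6}\lceil K\Delta_{\max}/2\rceil$. It linearizes this via the mean value theorem into exactly the $M2^{M+1}\ell_0^2/\ell^2$ form you anticipated, and combines it with the coefficient-sum bound $\epsilon/2+\abs{\aset{K}}O_{\max}$. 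Your additional $\epsilon/4$ budget for solving the SDP only approximately, with boundedness coming from the Toeplitz unit diagonal, addresses a point the paper glosses over; it is a refinement, not a different approach.
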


\paragraph{Proof sketch of \cref{thm:EfficientApproximationAlgorithm}.} Since the formal proof contains a few technical details, before we present it, we begin by providing a high-level sketch of it for clarity.
Let us begin by fixing the bounds provided by our \lcnamecrefs{ass:IntegerSpectralDifferences}, so they do not appear as a surprise later on.
The bounds we use throughout are:
\begin{itemize}
  \item $M$ is constant~(\cref{ass:IndependentParameterCount}).
  \item $K$ is polynomial in $n$~(\cref{ass:ParametrizedGatesCount}).
  \item The generator spectral diameters are bounded by $\Delta_{\max} = \polytime(n)$~(\cref{ass:GeneratorSpectralDiameterBound}).
  \item The operator norm of the observable is bounded by $O_{\max} = \polytime(n)$~(\cref{ass:ObservableComplexity}).
\end{itemize}
The proof consists of a few concrete steps.

\paragraph{Step 1: Representing the objective $f$ as \pgls{HTP}, and bounding its frequencies.}
By \cref{thm:HTPRepresentation}, the objective $f$ is \pgls{HTP}, and each coordinate degree is at most $d_{j} = (\amatrix{A}^{\T} \Delta)_{j}$. 
Using the bounds above, this implies all frequencies of $f$ lie in a \emph{finite box}
\begin{equation}
  \aset{K} 
    = [-K \Delta_{\max}, K \Delta_{\max}]^{M},
\end{equation}
so $\abs{\aset{K}} = (2 K\, \Delta_{\max} + 1)^{M}$, which is polynomial in $n$ because $M$ is constant, and $K$ and $\Delta_{\max}$ are polynomial in $n$.

\paragraph{Step 2: Sampling the quantum device and bounding the error.} We sample on \emph{a matching uniform grid} $\aset{S} \subset \ccinterval{-\pi}{\pi}^{M}$ with $\abs{\aset{S}} = \abs{\aset{K}}$.
For each grid point $\phi \in \aset{S}$, we perform $N$ shots on the quantum hardware, i.e., prepare the state $\amatrix{U}(\phi) \ket{0}$, and average $N$ measurements of the observable $\amatrix{O}$.

Here we use the standard Hoeffding concentration bound for quantum expectation estimation with bounded outcomes~\citep{hoeffding_1963_probability}.
In particular, if the outcomes lie in an interval of length $R$, then Hoeffding's inequality ensures that, for each $\phi \in \aset{S}$,
\begin{equation}
  \prob[\Big]{\abs[\big]{\tilde{f}(\phi) - f(\phi)} \leq \varepsilon} 
    \geq 1 - 2 \exp\xpar[\Big]{-\frac{2 N \varepsilon^{2}}{R^{2}}}.
\end{equation}
In our case $R \leq 2 O_{\max}$. 
Therefore, choosing
\begin{equation}
\label{eq:PerPointMeasurementCount}
  \varepsilon 
    = \frac{\epsilon}{2 \abs{\aset{S}}},
  \qquad
  N 
   = \frac{8 \abs{\aset{S}}^{2} O_{\max}^{2}}{\epsilon^{2}} \log \frac{2 \abs{\aset{S}}}{\delta}
\end{equation}
gives per-point failure probability at most $\delta / \abs{\aset{S}}$.
Then a union bound gives simultaneous control over all $\phi \in \aset{S}$ with probability at least $1 - \delta$, and therefore
\begin{equation}
  \prob[\bigg]{\abs[\big]{\tilde{f}(\phi) - f(\phi)} \leq \frac{\epsilon}{2 \abs{\aset{S}}}, \,\forall \phi \in \aset{S}} 
    \geq 1 - \delta.
\end{equation}

\paragraph{Step 3: Interpolating via \gls{FFT} and bounding the approximation error.} From the sampled values, we compute Fourier coefficients using \gls{FFT} as in \cref{sec:TrigonometricPolynomialsFFT}, then enforce Hermitian symmetry to form the approximation $\hat{f}$ of the objective $f$.
Under the high-probability event above, coefficient errors are uniformly small; summing them gives
\begin{equation}
  \sup_{\theta} \abs[\big]{f(\theta) - \hat{f}(\theta)} 
    \leq \frac{\epsilon}{2}.
\end{equation}
So their optima differ by at most $\epsilon / 2$, i.e.,
\begin{equation}
  \abs[\big]{f^{\star} - \hat{f}^{\star}} 
    \leq \frac{\epsilon}{2}.
\end{equation}

\paragraph{Step 4: Constructing the trigonometric \glsdashedorshort{SOS} hierarchy and bounding the per-level optimization error.}
We then optimize $\hat{f}$ with the max-degree trigonometric \glsdashedorshort{SOS} hierarchy whose minimum at level $\ell$ is $\hat{f}_{\ell}^{\star}$.
The bound of \citet{bach_2023_exponential} controls $\abs{\hat{f}^{\star} - \hat{f}_{\ell}^{\star}}$ in terms of $\ell$.
Then, using the coefficient bound derived in the proof, we choose $\ell = \polytime(n, 1 / \epsilon)$ so that 
\begin{equation}
  \abs[\big]{\hat{f}^{\star} - \hat{f}_{\ell}^{\star}}
    \leq \frac{\epsilon}{2}.
\end{equation}

\paragraph{Step 5: Computing the final guarantee and complexity.} Combining both halves,
\begin{equation}
  \abs[\big]{f^{\star} - \hat{f}_{\ell}^{\star}}
    \le \abs[\big]{f^{\star} - \hat{f}^{\star}} + \abs[\big]{\hat{f}^{\star} - \hat{f}_{\ell}^{\star}}
    \le \epsilon.
\end{equation}
This holds with probability at least $1 - \delta$ as we conditioned on the event of the union bound in Step 2.
Moreover, the runtime of the algorithm is polynomial in $n$, $1 / \epsilon$, and $\log(1 / \delta)$ by construction, since \gls{FFT} is quasi-linear in $N$, and the \gls{SOS} optimization is polynomial because $M$ is constant and $\ell$ is polynomial in $n$ and $1 / \epsilon$.

The proof of \cref{thm:EfficientApproximationAlgorithm} is a constructive one.
In particular, it specifies the steps of the hybrid quantum--classical algorithm summarized in \cref{alg:EfficientApproximationAlgorithm}, and shows that, under the assumptions of \cref{thm:EfficientApproximationAlgorithm}, these steps can be implemented in time polynomial in $n$, $1 / \epsilon$, and $\log (1 / \delta)$.
Note that \cref{alg:EfficientApproximationAlgorithm} is a range-based additive \gls{FPRAS}.

\begin{algorithm}[H]
  \SetKwInOut{Input}{input}\SetKwInOut{Output}{output}
  \DontPrintSemicolon
  \Input{An error $\epsilon > 0$, a failure probability $\delta > 0$}
  \Output{An approximation $\hat{f}^{\star}$ such that $\abs{\hat{f}^{\star} - f^{\star}} \le \epsilon$ with probability at least $1 - \delta$}
  \BlankLine
  Construct a sampling grid $\aset{S} \subset \ccinterval{-\pi}{\pi}^{M}$ with $2 K \Delta_{\max} + 1$ equispaced points in each coordinate $j$\;
  Set the per-sample shot count $N$ as in \cref{eq:PerPointMeasurementCount}\;
  \For{$\phi \in \aset{S}$}{
    Prepare the state $\amatrix{U}(\phi)\ket{0}$ and perform $N$ measurements $X(\phi)$ of the observable $\amatrix{O}$\;
    Let $\tilde{f}(\phi)$ be the empirical mean $X(\phi)$ over the $N$ shots\;
  }
  Compute an approximation $\hat{f}$ of $f$ using \gls{FFT} on the empirical means $\set[\big]{\tilde{f}(\phi) \given \phi \in \aset{S}}$\;
  Set the \glsdashedorshort{SOS} level $\ell$ as in \cref{eq:RequiredSOSLevel}\;
  \Return the solution of the max-degree trigonometric \glsdashedorshort{SOS} hierarchy for $\min_{\theta} \hat{f}(\theta)$ at level $\ell$\;
  \caption{Hybrid quantum--classical algorithm for approximating $f^{\star}$}
  \label{alg:EfficientApproximationAlgorithm}
\end{algorithm}

\subsubsection{Proof of the additive \glsfmtshort{FPRAS}}
\label{sec:EfficientApproximationAlgorithmProof}

We now present the formal proof of \cref{thm:EfficientApproximationAlgorithm}.
At a high level, the proof splits the target error budget $\epsilon$ into two parts, and spends $\epsilon / 2$ for constructing an approximation $\hat{f}$ to the objective function $f$, and another $\epsilon / 2$ for approximating the optimal value of $\min_{\theta \in \ccinterval{-\pi}{\pi}^{M}}\hat{f}$ via a max-degree trigonometric \glsdashedorshort{SOS} hierarchy.

\paragraph{Step 1: Representing the objective $f$ as \pgls{HTP}, and bounding its frequencies.} Since the spectra of the generators $\amatrix{V}_{1}, \dots, \amatrix{V}_{K}$ have integer differences~(\cref{ass:IntegerSpectralDifferences}), by \cref{thm:HTPRepresentation}, it follows that the objective function $f$ is \pgls{HTP} with max degree at most $d_{j} = (\amatrix{A}^{\T} \Delta)_{j}$ for each independent parameter $\theta_{j}$.
Then, by the assumed bounds, for each $j \in \range{M}$, we have
\begin{equation}
  d_{j} 
    = (\amatrix{A}^{\T} \Delta)_{j}
    \leq \norm{\amatrix{A}_{\colon\!, j}}_{1}\, \max_{k \in \range{K}} \Delta_{k}
    \leq K \Delta_{\max}.
\end{equation}

We define the frequency set
\begin{equation}
  \aset{K} 
    \eqdef \ccinterval{-K \Delta_{\max}}{K \Delta_{\max}}^{M}.
\end{equation}
Since $\bigtimes_{j = 1}^{M} \ccinterval{-d_{j}}{d_{j}} \subseteq \aset{K}$, it follows that $\aset{K}$ contains all possible frequencies of $f$.
Intuitively, we \emph{over-approximate} the true frequency support by a simple hypercube $\aset{K}$ because this yields a uniform sampling grid for the \gls{FFT} interpolation in the following step.
Moreover, since $M$ is constant in $n$~(\cref{ass:IndependentParameterCount}), $K$ is polynomial in $n$~(\cref{ass:ParametrizedGatesCount}), and $\Delta_{\max}$ is polynomial in $n$~(\cref{ass:GeneratorSpectralDiameterBound}), it follows that $\abs{\aset{K}} = (2 K \Delta_{\max} + 1)^{M}$ is a polynomial in $n$.

\paragraph{Step 2: Sampling the quantum device and bounding the error.} Consider the sampling grid $\aset{S} \subset \ccinterval{-\pi}{\pi}^{M}$ obtained by taking $2 K \Delta_{\max} + 1$ equispaced points along each independent parameter $\theta_{j}$.
Then $\abs{\aset{S}} = \abs{\aset{K}}$.
Matching $\abs{\aset{S}}$ with $\abs{\aset{K}}$ ensures that the \gls{FFT} interpolation can be applied to the constructed sample in the subsequent step.

For each sampling point $\phi \in \aset{S}$, we compute an estimate of $f(\phi)$ by performing 
\begin{equation}
  N
    = \frac{8 \abs{\aset{S}}^{2} O_{\max}^{2}}{\epsilon^{2}} \ln\xpar[\bigg]{\frac{2 \abs{\aset{S}}}{\delta}}
\end{equation}
independent measurements $X(\phi)$ of the observable $\amatrix{O}$ over the prepared quantum state $\amatrix{U}(\phi) \ket{0^{\otimes n}}$.
Since $\abs{\aset{S}}$ and $O_{\max}$ are polynomial in $n$~(\cref{ass:ObservableComplexity}), the number of shots $N$ is polynomial in $n$, $1 / \epsilon$, and $\log (1 / \delta)$.
Moreover, since $\abs{\aset{S}} = (2 K \Delta_{\max} + 1)^{M}$, the overall number of queries to the quantum device is \emph{exactly}
\begin{equation}
  N\, \abs{\aset{S}}
    = N\, (2 K \Delta_{\max} + 1)^{M},
\end{equation}
which is also polynomial in $n$, $1 / \epsilon$, and $\log (1 / \delta)$.

Now, let $\tilde{f}(\phi)$ denote the empirical mean of $X(\phi)$ over these $N$ shots.
Each measurement outcome lies in an interval of length $2 O_{\max} \geq 2 \norm{\amatrix{O}} \geq \Delta_{\amatrix{O}}$, i.e., the spectral diameter of $\amatrix{O}$, so Hoeffding's inequality with range $2 O_{\max}$ implies that, for each $\phi \in \aset{S}$,
\begin{equation}
  \prob[\bigg]{\abs[\big]{\tilde{f}(\phi) - f(\phi)} \leq \frac{\epsilon}{2 \abs{\aset{S}}}} 
    \geq 1 - \frac{\delta}{\abs{\aset{S}}}.
\end{equation}
Therefore, by applying the union bound, we obtain that
\begin{equation}
\label{eq:UnionBound}
  \prob[\bigg]{\abs[\big]{\tilde{f}(\phi) - f(\phi)} \leq \frac{\epsilon}{2 \abs{\aset{S}}}, \,\forall \phi \in \aset{S}} 
    \geq 1 - \delta.
\end{equation}
In the remainder of the proof we condition on this event, which occurs with probability at least $1 - \delta$, and therefore, from this point onwards, all bounds are deterministic on the high-probability event in \eqref{eq:UnionBound}.

\paragraph{Step 3: Interpolating via \gls{FFT} and bounding the approximation error.} We compute the \gls{DFT} of the estimates $\tilde{f}(\phi)$ over the sampling grid $\aset{S}$ to obtain estimates
\begin{equation}
  \tilde{f}_{\alpha}
    = \frac{1}{\abs{\aset{S}}} \sum_{\phi \in \aset{S}} \tilde{f}(\phi) \exp\xpar[\big]{-\im \inner{\alpha}{\phi}}
\end{equation}
of the coefficients $f_{\alpha}$ of the objective function $f$ for each frequency $\alpha \in \aset{K}$.
Since the complexity of computing the \gls{DFT} with the \gls{FFT} algorithm on a sampling grid of size $\abs{\aset{S}}$ is $\bigoh\xpar[\big]{\abs{\aset{S}} \log \abs{\aset{S}}}$ (cf.~\cref{sec:TrigonometricPolynomialsFFT}) and $\abs{\aset{S}} = \abs{\aset{K}}$ is a polynomial in $n$, the estimates $\tilde{f}_{\alpha}$, for each $\alpha \in \aset{K}$, can be computed in polynomial time in $n$, $1 / \epsilon$, and $\log (1 / \delta)$.
Notice that $\tilde{f}_{\alpha}$ do not necessarily satisfy the Hermitian symmetry condition of \pgls{HTP}, i.e., $\tilde{f}_{-\alpha}$ may not be the complex conjugate of $\tilde{f}_{\alpha}$.

Define the approximate objective function $\hat{f} \from \reals^{M} \to \reals$ as
\begin{equation}
  \hat{f}(\theta)
    = \sum_{\alpha \in \aset{K}} \hat{f}_{\alpha} \exp\xpar[\big]{\im \inner{\alpha}{\theta}},
\ \text{where} \
  \hat{f}_{\alpha}
    \eqdef \frac{1}{2}(\tilde{f}_{\alpha} + \conj{\tilde{f}_{-\alpha}}),
  \qquad \forall \alpha \in \aset{K}.
\end{equation}
Observe that, by construction, $\hat{f}$ is \pgls{HTP} because 
\begin{equation}
  \hat{f}_{-\alpha}
    = \frac{1}{2}(\tilde{f}_{-\alpha} + \conj{\tilde{f}_{\alpha}}) 
    = \conj{\hat{f}_{\alpha}}.
\end{equation}
This symmetrization step restores the Hermitian symmetry condition without increasing the coefficient errors beyond constant factors.
Furthermore, since $\aset{K} = \ccinterval{-K \Delta_{\max}}{K \Delta_{\max}}^{M}$, it follows that $\hat{f}$ also has max degree at most $K \Delta_{\max}$ for each independent parameter $\theta_{j}$.

Next, we compute the approximation error of $\hat{f}$.
In particular, consider an arbitrary frequency $\alpha \in \aset{K}$.
By the \gls{DFT} over the sampling set $\aset{S}$, we have that
\begin{equation}
  f_{\alpha} 
    = \frac{1}{\abs{\aset{S}}} \sum_{\phi \in \aset{S}} f(\phi) \exp\xpar[\big]{-\im \inner{\alpha}{\phi}}.
\end{equation}
Furthermore, since $\exp\xpar[\big]{-\im \inner{\alpha}{\phi}}$ lies on the unit complex torus, for all $\phi \in \aset{S}$, it follows that $\abs{\exp\xpar[\big]{-\im \inner{\alpha}{\phi}}} = 1$.
Therefore, by the triangle inequality, we obtain
\begin{subequations}
\begin{align}
  \abs{\tilde{f}_{\alpha} - f_{\alpha}}
    &= \frac{1}{\abs{\aset{S}}} \abs[\Bigg]{\sum_{\phi \in \aset{S}} \xpar[\big]{\tilde{f}(\phi) - f(\phi)} \exp\xpar[\big]{-\im \inner{\alpha}{\phi}}} \\
    &\leq \frac{1}{\abs{\aset{S}}} \sum_{\phi \in \aset{S}} \abs{\tilde{f}(\phi) - f(\phi)} \\
    &\leq \frac{1}{\abs{\aset{S}}} \sum_{\phi \in \aset{S}} \frac{\epsilon}{2 \abs{\aset{K}}} \\
    &= \frac{\epsilon}{2 \abs{\aset{K}}}.
\end{align}
\end{subequations}
Moreover, since $f$ is \pgls{HTP}, i.e., $f_{\alpha} = \conj{f_{-\alpha}}$, by the triangle inequality, we also have
\begin{equation}
  \abs{\hat{f}_{\alpha} - f_{\alpha}}
    \leq \frac{1}{2} \xpar[\big]{\abs{\tilde{f}_{\alpha} - f_{\alpha}} + \abs{\conj{\tilde{f}_{-\alpha}} - \conj{f_{-\alpha}}}}
    = \frac{1}{2} \xpar[\big]{\abs{\tilde{f}_{\alpha} - f_{\alpha}} + \abs{\tilde{f}_{-\alpha} - f_{-\alpha}}}
    \leq \frac{\epsilon}{2 \abs{\aset{K}}}.
\end{equation}
So each frequency coefficient is controlled at scale $\epsilon / \abs{\aset{K}}$, which is exactly the normalization needed for bounding the \param{\ell_{1}}-norm of the approximation error.
Consequently, we obtain
\begin{equation}
\label{eq:ObjectiveFunctionErrorBound}
  \abs[\big]{\hat{f}(\theta) - f(\theta)}
    = \abs[\Bigg]{\sum_{\alpha \in \aset{K}} (\hat{f}_{\alpha} - f_{\alpha}) \exp\xpar[\big]{\im \inner{\alpha}{\theta}}}
    \leq \sum_{\alpha \in \aset{K}} \abs{\hat{f}_{\alpha} - f_{\alpha}}
    \leq \sum_{\alpha \in \aset{K}} \frac{\epsilon}{2 \abs{\aset{K}}}
    = \frac{\epsilon}{2},
    \qquad \forall \theta \in \reals^{M}.
\end{equation}
Importantly, the error bound above holds uniformly over all $\theta \in \reals^{M}$.

Now, define 
\begin{equation}
  \hat{f}^{\star} 
    \eqdef \min_{\theta \in \ccinterval{-\pi}{\pi}^{M}} \hat{f}(\theta)
\end{equation}
to be the minimum value of the approximate objective function $\hat{f}$, and recall that since $f$ is \param{2 \pi}-periodic along each independent parameter $\theta_{j}$, we also have that 
\begin{equation}
  f^{\star} 
    = \min_{\theta \in \ccinterval{-\pi}{\pi}^{M}} f(\theta).
\end{equation}
Thus, since $\ccinterval{-\pi}{\pi}^{M}$ is compact and the functions $f$ and $\hat{f}$ are continuous, there exist some minimizers $\theta^{\star}$ and $\hat{\theta}^{\star}$ of $f$ and $\hat{f}$, respectively, in $\ccinterval{-\pi}{\pi}^{M}$.
Then, by \eqref{eq:ObjectiveFunctionErrorBound}, it follows that
\begin{equation}
  \hat{f}^{\star} - f^{\star}
    \leq \hat{f}(\theta^{\star}) - f(\theta^{\star})
    \leq \abs[\big]{\hat{f}(\theta^{\star}) - f(\theta^{\star})}
    \leq \frac{\epsilon}{2},
\end{equation}
and, similarly,
\begin{equation}
  f^{\star} - \hat{f}^{\star}
    \leq f(\hat{\theta}^{\star}) - \hat{f}(\hat{\theta}^{\star})
    \leq \abs[\big]{f(\hat{\theta}^{\star}) - \hat{f}(\hat{\theta}^{\star})}
    \leq \frac{\epsilon}{2}.
\end{equation}
Therefore, we have that
\begin{equation}
\label{eq:GlobalOptimumErrorBound}
  \abs{\hat{f}^{\star} - f^{\star}} \leq \frac{\epsilon}{2}.
\end{equation}
In other words, the uniform approximation error of $\hat{f}$ to $f$ translates into a global optimum approximation error of $\hat{f}^{\star}$ to $f^{\star}$.

\paragraph{Step 4: Constructing a trigonometric \glsdashedorshort{SOS} hierarchy and bounding the per-level optimization error.} Consider the hierarchy of max-degree \glsdashedorshort{SOS} relaxations of the optimization problem 
\begin{equation}
  \min_{\theta \in \ccinterval{-\pi}{\pi}^{M}} \hat{f}(\theta).
\end{equation}
Let $\hat{f}^{\star}_{\ell}$ denote the optimal value of the relaxation at order $\ell \in \naturals$.
Then, by \citet[Theorem~3.1]{bach_2023_exponential}, we have that for all 
\begin{equation}
\label{eq:MinimumSOSLevel}
  \ell 
    > \ell_{0}
    \eqdef \sqrt{6} \ceil[\big]{\max_{j \in \range{M}} K \Delta_{\max} / 2}, 
\end{equation}
it holds that
\begin{equation}
  \abs{\hat{f}^{\star} - \hat{f}_{\ell}^{\star}}
    \leq \xpar[\bigg]{\xpar[\Big]{1 - \frac{{\ell_{0}}^{2}}{\ell^{2}}}^{- M} - 1} \sum_{\alpha \in \aset{K} \setminus \set{0}} \abs{\hat{f}_{\alpha}}.
\end{equation}
In other words, this theorem isolates the optimization error introduced by truncating the \glsdashedorshort{SOS} hierarchy at finite level $\ell$.

Now, consider the function $g \from \ccinterval{0}{\frac{1}{2}} \to \reals$ defined by
\begin{equation}
  g(x)
    = (1 - x)^{-M}
\end{equation}
Since $g$ is differentiable on $\oointerval{0}{\frac{1}{2}}$ and continuous on $\ccinterval{0}{\frac{1}{2}}$, by the mean value theorem, for every $x \in \oointerval{0}{\frac{1}{2}}$, there exists some $\xi \in \oointerval{0}{x}$ such that
\begin{equation}
  (1 - x)^{-M} - 1
    = g(x) - g(0)
    = g'(\xi) x
    = M (1 - \xi)^{- (M + 1)} x
    \leq M 2^{M + 1} x.
\end{equation}
Thus, by setting $x = \ell_{0}^{2} / \ell^{2}$, we obtain that, for every $\ell \geq \sqrt{2} \ell_{0}$,
\begin{equation}
\label{eq:SOSErrorBound}
  \abs{\hat{f}^{\star} - \hat{f}_{\ell}^{\star}}
    \leq \frac{\ell_{0}^{2}}{\ell^{2}} M 2^{M + 1} \sum_{\alpha \in \aset{K} \setminus \set{0}} \abs{\hat{f}_{\alpha}}.
\end{equation}
Hence the \glsdashedorshort{SOS} error decays like $\bigoh(\ell^{-2})$ once $\ell$ is beyond the threshold set by $\ell_0$.

Moreover, observe that, by the definition of $\hat{f}_{\alpha}$, we have that
\begin{subequations}
\begin{align}
  \sum_{\alpha \in \aset{K} \setminus \set{0}} \abs{\hat{f}_{\alpha}}
    &\leq \sum_{\alpha \in \aset{K}} \xpar[\big]{\abs{\hat{f}_{\alpha} - f_{\alpha}} + \abs{f_{\alpha}}} \\
    &\leq \frac{\epsilon}{2} + \sum_{\alpha \in \aset{K}} \abs{f_{\alpha}} \\
    &= \frac{\epsilon}{2} + \sum_{\alpha \in \aset{K} } \frac{1}{\abs{\aset{S}}} \abs[\bigg]{\sum_{\phi \in \aset{S}} f(\phi) \exp\xpar[\big]{-\im \inner{\alpha}{\phi}}} \\
    &\leq \frac{\epsilon}{2} + \frac{\abs{\aset{K}}}{\abs{\aset{S}}} \sum_{\phi \in \aset{S}} \abs{f(\phi)} \\
    &\leq \frac{\epsilon}{2} + \abs{\aset{K}} \max_{\phi \in \aset{S}} \abs{f(\phi)} \\
    &\leq \frac{\epsilon}{2} + \abs{\aset{K}} \max_{\phi \in \aset{S}} \abs[\Big]{\braket[\big]{0}{\amatrix{U}^{\A}(\phi) \amatrix{O} \amatrix{U}(\phi)}{0}}.
\end{align}
\end{subequations}
In addition, since $\amatrix{U}(\theta)$ is unitary for all $\theta \in \reals^{M}$, i.e., its spectrum lies in the unit circle, by the Cauchy--Schwarz inequality, we have that
\begin{equation}
  \abs[\Big]{\braket[\big]{0}{\amatrix{U}^{\A}(\phi) \amatrix{O} \amatrix{U}(\phi)}{0}}
    \leq \norm{\amatrix{O}}
    \leq O_{\max},
    \qquad \forall \phi \in \aset{S}.
\end{equation}
Thus, we have that
\begin{equation}
\label{eq:CoefficientSumBound}
  \sum_{\alpha \in \aset{K} \setminus \set{0}} \abs{\hat{f}_{\alpha}}
    \leq \frac{\epsilon}{2} + \abs{\aset{K}}\, O_{\max}.
\end{equation}
This coefficient-sum bound turns the abstract SOS error bound in \citet{bach_2023_exponential}  into an explicit bound computable in polynomial time in $n$ and $1 / \epsilon$.

Concretely, we require that the right-hand side in \eqref{eq:SOSErrorBound} is at most $\epsilon / 2$.
Thus, rearranging terms, and assuming $\ell \geq \sqrt{2}\,\ell_0$ (the condition used to obtain $\bigoh(\ell^{-2})$ decay), it suffices to choose $\ell$ such that
\begin{equation}
  \ell
    \geq \frac{2 \ell_{0}}{\sqrt{\epsilon}} \sqrt{M 2^{M} \sum_{\alpha \in \aset{K} \setminus \set{0}} \abs{\hat{f}_{\alpha}}}.
\end{equation}
Furthermore, by \eqref{eq:CoefficientSumBound}, we have that
\begin{equation}
  \frac{2 \ell_{0} }{\sqrt{\epsilon}} \sqrt{M 2^{M} \sum_{\alpha \in \aset{K} \setminus \set{0}} \abs{\hat{f}_{\alpha}}} 
    \leq \frac{2 \ell_{0}}{\sqrt{\epsilon}} \sqrt{M 2^{M} \xpar[\Big]{\frac{\epsilon}{2} + \abs{\aset{K}}\, O_{\max}}}
    \leq \frac{2 \ell_{0}}{\sqrt{\epsilon}} \sqrt{M 2^{M} \xpar[\Big]{\frac{1}{2} + \abs{\aset{K}}\, O_{\max}}}.
\end{equation}
Thus, it suffices to choose the relaxation order $\ell$ such that
\begin{subequations}
\label{eq:RequiredSOSLevel}
\begin{align}
  \ell
    &\geq \max\set[\bigg]{\sqrt{2}\, \ell_{0}, \frac{2 \ell_{0}}{\sqrt{\epsilon}} \sqrt{M 2^{M} \xpar[\Big]{\frac{1}{2} + \abs{\aset{K}}\, O_{\max}}}} \\
    &= \max\set[\bigg]{\sqrt{2}\, \ell_{0},\frac{2 \ell_{0}}{\sqrt{\epsilon}} \sqrt{M 2^{M} \xpar[\Big]{\frac{1}{2} + (2K\, \Delta_{\max} + 1)^{M}\, O_{\max}}}}
\end{align}
\end{subequations}
Moreover, since $M$ is constant with respect to $n$, and $K$, $\Delta_{\max}$, and $O_{\max}$ are polynomial in $n$, we have that the required order $\ell$ of the \glsdashedorshort{SOS} relaxation is at most polynomial in $n$, and $1 / \epsilon$.
Furthermore, since the complexity of solving the \glsdashedorshort{SOS} relaxation at order $\ell$ is $\bigoh\xpar[\big]{(2\ell + 1)^{M}}$ and $M$ is constant with respect to $n$, we conclude that the \glsdashedorshort{SOS} relaxation at order $\ell$ can be solved in polynomial time in $n$, and $1 / \epsilon$.

\paragraph{Step 5: Computing the final guarantee and complexity.} Combining the conditioned probability event with the two $\epsilon / 2$ error contributions gives the final \param{(\epsilon, \delta)}-guarantee.
In particular, by the triangle inequality, we have that at relaxation order $\ell$ satisfying \eqref{eq:RequiredSOSLevel},
\begin{equation}
  \abs{f^{\star} - \hat{f}_{\ell}^{\star}}
    \leq \abs[\big]{f^{\star} - \hat{f}^{\star}} + \abs[\big]{\hat{f}^{\star} - \hat{f}_{\ell}^{\star}}
    \leq \frac{\epsilon}{2} + \frac{\epsilon}{2}
    = \epsilon.
\end{equation}
Recalling that we conditioned on the event in \eqref{eq:UnionBound}, this bound holds with probability at least $1 - \delta$, so $\hat{f}_{\ell}^{\star}$ satisfies the desired accuracy guarantee. \qed{}

\subsection{Extraction of optimal \glsfmtshort{PQC} parameters from the \glsfmtshort{SOS} solution}
\label{sec:PQCOptimizerExtraction}

The proof of \cref{thm:EfficientApproximationAlgorithm} also provides insights on how to extract an approximate optimizer of the objective function $f$ from the solution of the \glsdashedorshort{SOS} relaxation at order $\ell$.
This is particularly relevant in practice, where one is often interested not only in approximating the optimal value $f^{\star}$, but also in finding parameters $\theta$ that (approximately) minimize the \gls{PQC} optimization problem.
The key enablers are the uniform approximation error bound in \eqref{eq:ObjectiveFunctionErrorBound}, which holds for the approximation $\hat{f}$ of $f$ constructed from \cref{alg:EfficientApproximationAlgorithm}, and the Flat Extension Theorem for the max-degree trigonometric moment hierarchy, which 
we introduce in \cref{thm:FlatExtension}.

As we discussed in \cref{sec:Preliminaries}, a Flat Extension Theorem for the truncated trigonometric moment hierarchy guarantees that, if the dual moment sequence of the relaxation at order $\ell + 1$ admits a flat extension, then a global minimizer of $\hat{f}$ can be extracted from the solution of the relaxation.
We now state the main results of this \lcnamecref{sec:PQCOptimizerExtraction}.

\begin{theorem}[Optimal \glsfmtshort{PQC} parameter extraction]
\label{thm:OptimizerExtraction}
  Consider an instance of the \gls{PQC} optimization problem over an \param{n}-qubit system with objective function $f \from \reals^{M} \to \reals$.
  Fix some error $\epsilon > 0$, and let $\hat{f} \from \reals^{M} \to \reals$ be \pgls{HTP} such that
  \begin{equation}
    \abs[\big]{f(\theta) - \hat{f}(\theta)} 
      \leq \frac{\epsilon}{2},
    \qquad \forall \theta \in \ccinterval{-\pi}{\pi}^{M}.
  \end{equation}
  Let $\amatrix{M_{\ell  + 1}}(y)$ denote the moment matrix of the order\nobreakdash-$(\ell + 1)$ max-degree trigonometric moment hierarchy for the \gls{HTP} optimization problem
  \begin{equation}
    \min_{\theta \in \ccinterval{-\pi}{\pi}^{M}} \hat{f}(\theta).
  \end{equation}
  If the moment sequence $y$ at order $\ell + 1$ of the hierarchy admits a flat extension, i.e., $\rank \amatrix{M_{\ell + 1}}(y) = \rank \amatrix{M_{\ell}}(y)$, where $\amatrix{M_{\ell}}(y)$ denotes the truncation of $\amatrix{M_{\ell + 1}}(y)$ to max-degree $\ell$, then a global minimizer $\hat{\theta}^{\star} \in \ccinterval{-\pi}{\pi}^{M}$ of $\hat{f}$ can be extracted from $y$ in time polynomial in $\ell$ (for fixed $M$) such that
  \begin{equation}
    \abs[\big]{f(\hat{\theta}^{\star}) - f^{\star}} 
      \leq \epsilon.
  \end{equation}
\end{theorem}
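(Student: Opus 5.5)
The plan has two parts. The first, and harder, part is a flat extension theorem for the max-degree trigonometric moment hierarchy. It says that a flat optimal moment sequence is the moment sequence of an atomic measure on $\complextorus^{M}$, and that every atom is a global minimizer of $\hat{f}$. The second part is routine: we transfer the optimality of $\hat{\theta}^{\star}$ for $\hat{f}$ to near-optimality for $f$, using the uniform bound $\abs{f - \hat{f}} \leq \epsilon/2$. This is the same sandwich argument as in Step~3 of the proof of \cref{thm:EfficientApproximationAlgorithm}. Write $\theta^{\star}$ for a minimizer of $f$ and $\hat{f}^{\star}$ for the minimum of $\hat{f}$. Then
\begin{equation}
  f(\hat{\theta}^{\star}) - f^{\star}
    \leq \hat{f}(\hat{\theta}^{\star}) + \frac{\epsilon}{2} - f^{\star}
    = \hat{f}^{\star} - f^{\star} + \frac{\epsilon}{2}
    \leq \hat{f}(\theta^{\star}) - f(\theta^{\star}) + \frac{\epsilon}{2}
    \leq \epsilon .
\end{equation}
The left-hand side is nonnegative because $f^{\star}$ is the minimum of $f$, so this gives the claimed bound.

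For the flat extension part, let $r = \rank \amatrix{M}_{\ell}(y) = \rank \amatrix{M}_{\ell+1}(y)$. The moment matrix is Toeplitz: its entries $y_{\alpha,\beta}$ depend only on $\alpha - \beta$, with $\alpha, \beta$ ranging over the box $\set{\norm{\cdot}_{\infty} \leq \ell+1}$. We view it as the Gram matrix of vectors $u_{\alpha}$ in $\complex^{r}$, namely $y_{\alpha,\beta} = \inner{u_{\beta}}{u_{\alpha}}$. The Toeplitz structure says that the shift $\alpha \mapsto \alpha + e_{j}$ preserves inner products whenever both indices remain in the box. Flatness says that every $u_{\alpha}$ with $\norm{\alpha}_{\infty} = \ell+1$ lies in the span $W$ of $\set{u_{\beta} \given \norm{\beta}_{\infty} \leq \ell}$, and this span has dimension $r$.

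We then define multiplication operators $T_{j}$ on $W$ by $T_{j} u_{\beta} = u_{\beta + e_{j}}$ for $\norm{\beta}_{\infty} \leq \ell$. Flatness makes these well defined: if $\sum c_{\beta} u_{\beta} = 0$, then the shifted combination $\sum c_{\beta} u_{\beta + e_{j}}$ has zero norm by the Toeplitz identity. The same identity shows that $T_{j}$ preserves inner products, so each $T_{j}$ is unitary on $W$. Commutativity follows because both $T_{i}T_{j}$ and $T_{j}T_{i}$ send $u_{\beta}$ to $u_{\beta + e_{i} + e_{j}}$. Some care is needed at the boundary of the box, where $\beta + e_{i} + e_{j}$ may leave the degree-$(\ell+1)$ range. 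Here we would use the max-degree analogue of the Curto--Fialkow/Laurent argument, first extending $y$ flatly to all degrees, or else checking the identity on a spanning set that flatness lets us take inside degree $\ell$. This verification is the main obstacle, since \citep[Proposition~4.1]{josz_2018_lasserre} only covers total degree. We expect to use the fact that, under flatness, the lower-degree vectors already span $W$, so that $T_{j}u_{\beta}$ only needs to be computed for $\norm{\beta}_{\infty} \leq \ell$.

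Commuting unitaries can be simultaneously diagonalized with joint eigenvalues $(z^{(k)})_{k \leq r}$ in $\complextorus^{M}$. Expanding $u_{0}$ in the joint eigenbasis, $u_{0} = \sum_{k} c_{k} w_{k}$, gives
\begin{equation}
  y_{\alpha, 0} = \inner{u_{0}}{T^{\alpha} u_{0}} = \sum_{k} \abs{c_{k}}^{2} \xpar[\big]{z^{(k)}}^{\alpha},
\end{equation}
so $y$ is the moment sequence of the atomic probability measure $\mu = \sum_{k} \abs{c_{k}}^{2} \delta_{z^{(k)}}$, using $y_{0,0} = 1$. Since $y$ is optimal for the relaxation, we have $\hat{f}_{\ell+1}^{\star} = \int \hat{f} \, d\mu \geq \hat{f}^{\star} \geq \hat{f}_{\ell+1}^{\star}$. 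Hence every atom of $\mu$ with positive weight minimizes $\hat{f}$, and we take $\hat{\theta}^{\star} = \arg z^{(k)}$ componentwise. Algorithmically, we compute a rank-$r$ factorization of $\amatrix{M}_{\ell}(y)$, form the $r \times r$ matrices of the $T_{j}$, and jointly diagonalize a generic random linear combination of them. The matrix $\amatrix{M}_{\ell+1}(y)$ has size $(2\ell+3)^{M}$, so for fixed $M$ all of this runs in time polynomial in $\ell$.
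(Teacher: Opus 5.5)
Your route is the same as the paper's. You take Gram vectors of the flat Toeplitz moment matrix and shift operators made unitary by the Toeplitz constraints, then jointly diagonalize them. This gives an atomic measure on $\mathbb{T}^M$ whose atoms minimize $\hat f$ by the convexity sandwich, followed by the same $\epsilon/2+\epsilon/2$ transfer to $f$. The gap is at the step you flag yourself, and you have misdiagnosed it. For $\|\beta\|_\infty\le\ell$ and $i\ne j$, the index $\beta+e_i+e_j$ never leaves the max-degree-$(\ell+1)$ box, since each coordinate is at most $\ell+1$; this is exactly what max-degree truncation buys over total degree. The real problem is that $T_iT_ju_\beta=T_iu_{\beta+e_j}$, where $\beta+e_j$ may have max degree $\ell+1$. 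Your definition of $T_i$ is given only for $\|\gamma\|_\infty\le\ell$, so it says nothing about whether $T_iu_{\beta+e_j}=u_{\beta+e_j+e_i}$. Your second suggestion expands $u_{\beta+e_j}=\sum_\gamma c_\gamma u_\gamma$ over degree-$\ell$ vectors. That only yields $T_iu_{\beta+e_j}=\sum_\gamma c_\gamma u_{\gamma+e_i}$, and identifying this with $u_{\beta+e_j+e_i}$ is precisely the missing step. Your first suggestion, flatly extending $y$ to all degrees, is not available, because a max-degree flat extension theorem is what you are trying to prove.

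The missing idea is already in your own description of the Toeplitz structure. The shift by $e_i$ preserves inner products whenever both shifted indices stay in the box, i.e.\ on the whole set $\mathcal{A}^{(i)}_{\ell+1}=\{\gamma:\gamma_i\le\ell,\ \|\gamma\|_\infty\le\ell+1\}$, not just on the degree-$\ell$ box. The paper defines $T_i$ on this larger set from the start. Apply your zero-norm argument to $u_{\beta+e_j}-\sum_\gamma c_\gamma u_\gamma$; all its indices lie in $\mathcal{A}^{(i)}_{\ell+1}$ because $(\beta+e_j)_i=\beta_i\le\ell$. This gives $u_{\beta+e_j+e_i}=\sum_\gamma c_\gamma u_{\gamma+e_i}=T_iu_{\beta+e_j}$, and commutativity follows. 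The same extended identity is needed to justify $T^\alpha u_0=u_\alpha$ for every $\alpha$ in the $(\ell+1)$-box (e.g.\ $\alpha=(\ell+1)(e_1+e_2)$), which your moment formula uses silently.

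Two smaller repairs are needed. First, derive $y_{\alpha,\beta}=\langle T^\beta u_0,T^\alpha u_0\rangle=\sum_k|c_k|^2(z^{(k)})^\alpha\overline{(z^{(k)})^\beta}$ for all pairs, not only $y_{\alpha,0}$. With the paper's $\mathbb{N}^M$ indexing, mixed entries such as $y_{e_1,e_2}$ enter $\int\hat f\,d\mu$ and are not of the form $y_{\gamma,0}$. Second, the vectors $u_{\beta+e_j}$ used to build the $T_j$ come from the off-diagonal block of $\mathbf{M}_{\ell+1}(y)$, not from $\mathbf{M}_\ell(y)$ alone. Otherwise your extraction step is sound. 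Working in Gram coordinates, where the $T_j$ are genuinely unitary and a generic combination is normal, is more careful than the paper's Hermitian combination of the matrices $\mathbf{M}^{-1}\mathbf{M}^{(j)}$. Those are only similar to unitaries, so their common eigenvectors need not be orthogonal.
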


The proof of the above \lcnamecref{thm:OptimizerExtraction} relies on the following \emph{Flat Extension Theorem} for the max-degree trigonometric moment hierarchy.
We remark that extraction guarantees for this hierarchy are not available in the literature (recall that \citet{josz_2018_lasserre} analyze the total-degree hierarchy), therefore we need to derive them independently. 
Specifically, similarly to \citet{josz_2018_lasserre}, and following the template set by \citet{curto_2000_truncated,laurent_2009_generalized}, we obtain the following \lcnamecref{thm:FlatExtension}.

\begin{theorem}[Flat Extension Theorem]
\label{thm:FlatExtension}
  Suppose we solve level $\ell + 1$ of the max-degree trigonometric moment hierarchy and obtain an optimal sequence $y = (y_{\alpha, \beta})_{\norm{\alpha}_{\infty}, \norm{\beta}_{\infty} \leq \ell + 1}$ that moreover satisfies the flat extension condition
  \begin{equation}
  \label{eq:FlatExtensionCondition}
    \rank \amatrix{M_{\ell + 1}}(y) 
      = \rank \amatrix{M_{\ell}}(y),
  \end{equation}
  where $\amatrix{M_{\ell}}(y)$ denotes the truncation of $\amatrix{M_{\ell + 1}}(y)$ to max-degree $\ell$.
  Then there exists a positive atomic representing measure for $y$, supported on the \param{M}-torus, with at most $\rank \amatrix{M_{\ell}}(y)$ atoms, and each atom corresponds to a global minimizer of the \gls{HTP} optimization problem.
\end{theorem}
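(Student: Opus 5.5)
Following the Curto--Fialkow template adapted to the torus, the plan has four steps: build a finite-dimensional quotient algebra from the flat moment matrix, produce commuting unitary multiplication operators on it, diagonalize them jointly to get atoms on $\complextorus^{M}$, and use optimality to place every atom at a global minimizer.

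\textbf{Step 1: the quotient space.} Index $\amatrix{M}_{\ell+1}(y)$ by monomials $z^{\alpha}$ with $\alpha \in \range{0, \ell+1}^{M}$; the Toeplitz property makes entries depend only on $\alpha-\beta$. Equivalently, regard $y$ as a linear functional $L(z^{\alpha}\conj{z}^{\beta}) = y_{\alpha,\beta}$, which on the torus depends only on $\alpha-\beta$. The flat extension condition \eqref{eq:FlatExtensionCondition} gives two facts.
\begin{enumerate*}[label=(\roman*)]
  \item Every column of $\amatrix{M}_{\ell+1}(y)$ indexed by $\alpha$ with $\norm{\alpha}_{\infty} = \ell+1$ is a linear combination of columns indexed by $\norm{\cdot}_{\infty} \leq \ell$. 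The reason is the standard PSD fact: if $\begin{pmatrix} A & B \\ B^{\A} & C \end{pmatrix} \succeq 0$ has the same rank as $A$, then $B = AW$ and $C = W^{\A}AW$.
  \item The kernel $\aset{I} = \set{p : \amatrix{M}_{\ell+1}(y)\,\mathrm{vec}(p) = 0}$ is closed under multiplication by each $z_{j}$ within degree range. This is the ``ideal-like'' property of the kernel, proved from the positive semidefinite structure together with the Toeplitz shift invariance $y_{\alpha+e_{j},\beta+e_{j}} = y_{\alpha,\beta}$.
\end{enumerate*}
Set $V = \mathbb{C}[z]_{\leq \ell+1}/\aset{I}$ with inner product $\inner{p}{q} = L(p\conj{q})$. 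Then $\dim V = r \eqdef \rank \amatrix{M}_{\ell}(y)$, and $V$ is spanned by the classes of monomials of max degree $\leq \ell$.

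\textbf{Step 2: multiplication operators.} For each $j$, define $T_{j}[p] = [z_{j}p]$ on $V$, using representatives of degree $\leq \ell$ so that $z_{j}p$ has degree $\leq \ell+1$; this is well defined by property (ii). The Toeplitz constraint gives
\[
  \inner{z_{j}p}{z_{j}q} = L(z_{j}\conj{z}_{j}\, p\conj{q}) = L(p\conj{q}),
\]
so each $T_{j}$ is an isometry of the finite-dimensional space $V$, hence unitary. Commutativity $T_{j}T_{k} = T_{k}T_{j}$ is where the one-step flatness is strained. The first attempt is to use flatness to reduce $z_{k}p$ to degree $\leq \ell$ before applying $z_{j}$, so that both orders land in $\mathbb{C}[z]_{\leq \ell+1}$ and agree modulo $\aset{I}$. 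If that fails, the fallback is the Curto--Fialkow extension trick: a flat $\amatrix{M}_{\ell+1}$ extends uniquely to flat $\amatrix{M}_{\ell+2}, \amatrix{M}_{\ell+3},\dots$ preserving the Toeplitz/PSD structure, and inside the infinite flat moment matrix $\aset{I}$ is a genuine ideal, so commutativity is automatic.

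\textbf{Step 3: atoms.} Commuting unitaries on $V$ are simultaneously diagonalizable with an orthonormal eigenbasis $v_{1},\dots,v_{r}$, with joint eigenvalues $\zeta^{(i)} \in \complextorus^{M}$. Writing $[1] = \sum_{i} c_{i}v_{i}$ gives
\[
  y_{\alpha,\beta} = \inner{T^{\alpha}[1]}{T^{\beta}[1]} = \sum_{i} \abs{c_{i}}^{2} (\zeta^{(i)})^{\alpha}\conj{(\zeta^{(i)})^{\beta}}.
\]
So $\mu = \sum_{i} \abs{c_{i}}^{2}\delta_{\zeta^{(i)}}$ is a positive atomic representing measure on the torus with at most $r$ atoms. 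Its total mass is $\sum_{i}\abs{c_{i}}^{2} = y_{0,0} = 1$, so it is a probability measure.

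\textbf{Step 4: minimizers.} Since the relaxation is a lower bound, $\sum f_{\alpha,\beta}y_{\alpha,\beta} = \int \hat{f}\,d\mu \leq \hat{f}^{\star}$. Also $\int \hat f\, d\mu \geq \hat{f}^{\star}$, since $\hat{f} \geq \hat{f}^{\star}$ on the torus and $\mu$ is a probability measure there. Hence every atom with positive weight attains $\hat{f}^{\star}$, i.e.\ it is a global minimizer, and taking $\theta_{j} = \arg \zeta_{j}$ gives points in $\ccinterval{-\pi}{\pi}^{M}$.

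The main obstacle is the commutativity and ideal property in Steps 1--2 for max-degree truncations. Shifting by $z_{j}$ raises only the $j$-th coordinate, so the degree bookkeeping differs from the total-degree argument of \citet{josz_2018_lasserre}. I would handle this by proving the flat-extension-to-infinity lemma in the max-degree Toeplitz setting.
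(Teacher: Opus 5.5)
Your plan has the same architecture as the paper's proof. Your quotient $V=\mathbb{C}[z]_{\leq \ell+1}/\aset{I}$ plays the role of the paper's Gram vectors $z_{\alpha}\in\complex^{R}$. Your property (ii) is the paper's well-definedness computation $\norm{\sum_{\alpha}\lambda_{\alpha}z_{\alpha}}_{2}=\norm{\sum_{\alpha}\lambda_{\alpha}z_{\alpha+\unitvector_{j}}}_{2}$, obtained from Toeplitz invariance plus positive semidefiniteness. Your unitarity argument, joint diagonalization, weights summing to $y_{0,0}=1$, and final convexity argument coincide with Steps 2--4 of the paper.

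The weak point is commutativity. You leave it unproved, call it the main obstacle, and say you would resolve it with a flat-extension-to-infinity lemma for max-degree Toeplitz moment matrices. That lemma is not needed, and it is not free either. Constructing a flat, positive semidefinite, Toeplitz $\amatrix{M}_{\ell+2}(y)$ extending $\amatrix{M}_{\ell+1}(y)$ is the technical core of Curto--Fialkow and would need its own proof in this setting. So, as written, the proposal rests its key step on an unproved result.

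The missing observation is that max-degree truncation makes commutativity immediate, unlike the total-degree case. Take $\alpha\in\aset{A}_{\ell}$ and $i\neq j$. The monomial $z^{\alpha+\unitvector_{j}}$ still has $i$-th degree $\alpha_{i}\leq\ell$, and $\alpha+\unitvector_{i}+\unitvector_{j}$ has every coordinate at most $\ell+1$. Hence $T_{i}T_{j}[z^{\alpha}]=[z^{\alpha+\unitvector_{i}+\unitvector_{j}}]=T_{j}T_{i}[z^{\alpha}]$ without ever leaving $\mathbb{C}[z]_{\leq\ell+1}$. Since these classes span $V$, the operators commute. This is exactly the paper's argument, so your ``first attempt'' succeeds.

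To run it, state (ii) in its sharp form: if $p\in\aset{I}$ has $z_{j}$-degree at most $\ell$ (other degrees up to $\ell+1$), then $z_{j}p\in\aset{I}$. The proof is that $L(z_{j}p\,\conj{z_{j}p})=L(p\,\conj{p})=0$ by the Toeplitz constraint, and $\amatrix{M}_{\ell+1}(y)\succeq 0$ then forces $z_{j}p$ into the kernel. This sharp form is needed in two places:
\begin{enumerate}
  \item It lets you replace a max-degree-$\leq\ell$ representative of $[z^{\alpha+\unitvector_{j}}]$ by the monomial itself before multiplying by $z_{i}$.
  \item It gives $T^{\alpha}[1]=[z^{\alpha}]$ for all $\alpha\in\aset{A}_{\ell+1}$. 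Your Step 3 identity $y_{\alpha,\beta}=\inner{T^{\alpha}[1]}{T^{\beta}[1]}$ requires this whenever some coordinate equals $\ell+1$.
\end{enumerate}
The paper builds this in by defining $\amatrix{T}_{j}z_{\alpha}=z_{\alpha+\unitvector_{j}}$ on all of $\aset{A}^{(j)}_{\ell+1}$.
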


Since the proof of \cref{thm:FlatExtension} is rather technical, we first provide a sketch of the main ideas, in order to provide the proper intuition to understand the proof of \cref{thm:OptimizerExtraction}.
We defer the formal proof of \cref{thm:FlatExtension} to the following \lcnamecref{sec:FlatExtensionProof}.

\paragraph{Proof sketch of \cref{thm:FlatExtension}.} Let us start with an optimal order\nobreakdash-$(\ell + 1)$ moment sequence $y$ that satisfies the flat extension condition in \eqref{eq:FlatExtensionCondition}.
Because $y$ is feasible for the moment relaxation, it satisfies the constraints in the definition of the trigonometric moment hierarchy (cf.~\cref{eq:TrigonometricMomentRelaxation}); that is:
\begin{itemize}
  \item The sequence is normalized, i.e., $y_{0,0} = 1$.
  \item The moment matrix is \gls{PSD}.
  \item The moment matrix is Toeplitz.
\end{itemize}
We rely on those properties throughout the proof.

\paragraph{Step 1: Constructing the shift operators.} The main objects of interest are the shift operators $\amatrix{T}_{1}$, \dots, $\amatrix{T}_{M} \from \complex^{R} \to \complex^{R}$, where $R$ is the rank of the moment matrix.
These are linear operators defined to shift the monomial indices of the moment matrix $\amatrix{M_{\ell + 1}}(y)$.
The \gls{PSD} and Toeplitz properties of the moment matrix ensure that these operators are well-defined.

\paragraph{Step 2: Showing that the shift operators admit a common eigenbasis.} Ultimately, we are going to construct a positive atomic representing measure for $y$ from the spectral decomposition of the shift operators, but first we need to show that $\amatrix{T}_{1}$, \dots, $\amatrix{T}_{M}$ are \emph{simultaneously diagonalizable}, i.e., they have a common eigenbasis.

Indeed, by the Toeplitz property of the moment matrix, we can show that 
\begin{equation}
 \amatrix{T}_{i} \amatrix{T}_{j} 
    = \amatrix{T}_{j} \amatrix{T}_{i},
    \qquad \forall i, j \in \range{M},
\end{equation}
i.e., the shift operators pairwise commute.
Moreover, each $\amatrix{T}_{j}$ preserves inner products, and therefore, it is unitary.
Taken together, these two properties imply that $\amatrix{T}_{1}$, \dots, $\amatrix{T}_{M}$ are normal, and thus, simultaneously diagonalizable.

\paragraph{Step 3: Building an atomic representing measure.} Letting $\amatrix{T}_{j} = \amatrix{P} \amatrix{D}_{j} \amatrix{P}^{*}$ be the spectral decomposition of $\amatrix{T}_{j}$, where $\amatrix{P}$ is unitary and $\amatrix{D}_{j}$ is diagonal unitary, we can show that the convex combination of Dirac measures defined as
\begin{equation}
  \mu
    \eqdef \sum_{k = 1}^{R} \abs{\inner{z_{0}}{\amatrix{P}_{\colon\!, k}}}^{2} \delta_{d^{(k)}}
\end{equation}
is a positive atomic representing measure for $y$.
Here, $\amatrix{P}_{\colon\!, k}$ denotes the \param{k}-th column of $\amatrix{P}$, i.e., the \param{k}-th eigenvector of $\amatrix{T}_{1}$, \dots, $\amatrix{T}_{M}$, while $d^{(k)} \in \complex^{M}$ is an atom defined from the diagonal entries of $\amatrix{D}_{1}$, \dots, $\amatrix{D}_{M}$; that is,
\begin{equation}
  d^{(k)}_{j} 
    \eqdef \conj{(\amatrix{D}_{j})_{k, k}},
    \qquad \forall j \in \range{M},
\end{equation}
the fact that $\mu$ is a convex combination follows from the normalization of $y$.

\paragraph{Step 4: Certifying global optimality of the atoms.} It remains to show that each atom $d^{(k)}$ is a global minimizer of the \gls{HTP} optimization problem.
This is a consequence of the fact that each $d^{(k)}$ lies on the \param{M}-torus.
Indeed, since each $\amatrix{D}_{j}$ is diagonal unitary, we have that, for each $j \in \range{M}$ and $k \in \range{R}$,
\begin{equation}
  \abs{d^{(k)}_{j}}^{2}
    = \abs[\big]{\conj{(\amatrix{D}_{j})_{k, k}}}^{2}
    = \abs[\big]{(\amatrix{D}_{j})_{k, k}}^{2}
    = 1.
\end{equation}
Then optimality of the atoms follows from a standard convexity argument, and with the Flat Extension Theorem and the necessary notation in place, we can now prove \cref{thm:OptimizerExtraction}.

\paragraph{Proof of \cref{thm:OptimizerExtraction}.} By \cref{thm:FlatExtension}, if the moment sequence $y$ at order $\ell + 1$ of the hierarchy admits a flat extension, then there exists a positive atomic representing measure for $y$, supported on the \param{M}-torus, with at most $R = \rank \amatrix{M_{\ell + 1}}(y)$ atoms, and each atom corresponds to a global minimizer of the \gls{HTP} optimization problem.
In particular, these atoms can be extracted from the shift operators $\amatrix{T}_{1}, \dots, \amatrix{T}_{M}$, which can be computed from $\amatrix{M_{\ell + 1}}(y)$ in polynomial time in $\ell$.

To see this, for each $j \in \range{M}$, take a full-rank principal submatrix $\amatrix{M}$ of $\amatrix{M_{\ell}}(y)$ of size $R \times R$, indexed by some set $\aset{A}$.
Let $\amatrix{M}^{(j)}$ denote the $R \times R$ submatrix of $\amatrix{M_{\ell + 1}}(y)$ with row index set $\aset{A}$ and column index set $\aset{A} + \unitvector_{j}$.
Then
\begin{equation}
  \hat{\amatrix{T}}_{j}
    \eqdef \amatrix{M}^{\inv}\amatrix{M}^{(j)}
\end{equation}
is similar to the shift operator $\amatrix{T}_{j}$.
Now construct
\begin{equation}
  \amatrix{H}
    \eqdef \sum_{j = 1}^{M} \xpar[\big]{c_{j}\, \hat{\amatrix{T}}_{j} + \conj{c_{j}}\, \hat{\amatrix{T}}_{j}^{\A}},
\end{equation}
for a generic choice of $c \in \complex^{M}$, and compute an eigendecomposition
\begin{equation}
  \amatrix{H}
    = \amatrix{U}\amatrix{\Lambda}\amatrix{U}^{\A}.
\end{equation}
For generic $c$ (or deterministically for some $c$ in a constructed set of size $R (R - 1) / 2$; see, e.g., \citet{corless_1997_reordered}), the matrix $\amatrix{H}$ has simple spectrum, hence its eigenvectors recover the common eigenbasis of the commuting matrices $\hat{\amatrix{T}}_{1}, \dots, \hat{\amatrix{T}}_{M}$, up to permutation and phase, which are immaterial.
Therefore, for each $k \in \range{R}$, the \param{j}-th coordinate of the atom $d^{(k)}$ is recovered as
\begin{equation}
  d_{j}^{(k)}
    = \frac{\amatrix{U}_{\colon\!, k}^{\A}\hat{\amatrix{T}}_{j}\amatrix{U}_{\colon\!, k}}
          {\norm{\amatrix{U}_{\colon\!, k}}^{2}},
\end{equation}
where $\amatrix{U}_{\colon\!, k}$ denotes the $k$-th column of $\amatrix{U}$.
Thus, atom extraction is polynomial-time in the relaxation size, with dominant cost $\bigoh(R^{3})$ from the eigendecomposition of $\amatrix{H}$.
Since $R$ is dominated by the size of the index set of $\amatrix{M_{\ell}}(y)$, which is $(2\ell + 1)^{M}$, and $M$ is constant, we conclude that atom extraction can be done in polynomial time in $\ell$.

For each $k \in \range{R}$, let $\hat{\theta}^{(k)} \in \ccinterval{-\pi}{\pi}^{M}$ be the angular coordinates of the atom $d^{(k)}$, i.e., $d_{j}^{(k)} = \exp(\im \hat{\theta}_{j}^{(k)})$.
Then, since each atom corresponds to a global minimizer of the \gls{HTP} optimization problem, we have that $\hat{f}(\hat{\theta}^{(k)}) = \hat{f}^{\star}$ for all $k \in \range{R}$.
Moreover, since, by assumption, $\abs[\big]{\hat{f}(\theta) - f(\theta)} \leq \epsilon / 2$ for $\theta \in \ccinterval{-\pi}{\pi}^{M}$, it follows that $\abs[\big]{\hat{f}^{\star} - f^{\star}} \leq \epsilon / 2$ for all $k \in \range{R}$ (cf.~\cref{eq:GlobalOptimumErrorBound}).
Thus, by the triangle inequality, we obtain
\begin{equation}
  \abs[\big]{f(\hat{\theta}^{(k)}) - f^{\star}}
    \leq \abs[\big]{f(\hat{\theta}^{(k)}) - \hat{f}(\hat{\theta}^{(k)})} + \abs[\big]{\hat{f}(\hat{\theta}^{(k)}) - f^{\star}}
    \leq \frac{\epsilon}{2} + \frac{\epsilon}{2} 
    = \epsilon
    \qquad \forall k \in \range{R}.
\end{equation}
\qed{}

\Cref{thm:OptimizerExtraction} provides a way to extract an \param{\epsilon}-optimizer of $f$ under the standard \emph{flat extension} assumption for the dual moment sequence at order $\ell$.
It also implies a practical optimality certificate for the solution at order $\ell$ of the max-degree trigonometric moment/\glsdashedorshort{SOS} hierarchy; that is, if the dual moment sequence at order $\ell$ admits a flat extension, then the solution at order $\ell$ is optimal and an (approximate) global minimizer of $f$ can be extracted from the optimal moment sequence.
Indeed, checking whether the dual moment sequence is flat can be done efficiently by verifying \emph{a rank condition on the moment matrix}, which requires only polynomial time in $\ell$.

Finally, we emphasize that \cref{thm:OptimizerExtraction} assumes \pgls{HTP} surrogate $\hat{f}$ satisfying the uniform approximation bound with respect to $f$.
When $\hat{f}$ is constructed via \cref{alg:EfficientApproximationAlgorithm}, this guarantee follows from \cref{thm:EfficientApproximationAlgorithm} under \cref{ass:IntegerSpectralDifferences,ass:PQCComplexity,ass:ObservableComplexity}, on the high-probability event in \eqref{eq:UnionBound}, which occurs with probability at least $1 - \delta$.
Accordingly, the end-to-end procedure is polynomial in $n$, $1 / \epsilon$, and $\log(1 / \delta)$ for constructing $\hat{f}$ using the \gls{FFT} and solving the hierarchy, while the atom-extraction step is polynomial in $\ell$ (for fixed $M$).
Moreover, the number of queries to quantum hardware also mirrors the one required for constructing $\hat{f}$, i.e., it is polynomial in $n$, $1 / \epsilon$, and $\log(1 / \delta)$.

\subsubsection{Proof of the Flat Extension Theorem}
\label{sec:FlatExtensionProof}

Consider an arbitrary \gls{HTP} $f$, and recall we may identify $f$ either with its angle formulation $f(\theta)$ or with its complex formulation $f(z, \conj{z})$ via the reparametrization $z_{j} = \exp(\im \theta_{j})$, for all $j \in \range{M}$, so $f(z, \conj{z}) = f(\exp(\im \theta), \exp(-\im \theta))$ (cf.~\cref{sec:HermitianPolynomialOptimization}).
Since in this proof we rely heavily on the definition of the order\nobreakdash-$(\ell + 1)$ trigonometric moment relaxation in \eqref{eq:TrigonometricMomentRelaxation}, which is more naturally expressed in terms of the complex formulation, we adopt this formulation throughout.

Moreover, recall that at a feasible solution of the order\nobreakdash-$(\ell + 1)$ relaxation, the moment sequence $y$ satisfies the normalization $y_{0, 0} = 1$, and the moment $\amatrix{M_{\ell + 1}}(y)$ is a Toeplitz \glsdashedorshort{PSD} matrix, i.e., it satisfies the Toeplitz constraints
\begin{equation}
\label{eq:ToeplitzConstraints}
  y_{\alpha + \unitvector_{j}, \beta + \unitvector_{j}}
    = y_{\alpha, \beta}
    \qquad \forall \alpha, \beta \in \aset{A}^{(j)}_{\ell + 1},
\end{equation}
We are going to rely on those properties throughout the proof.

We now split the argument into four steps.
In particular, we begin by defining the shift operators taking advantage of the above Toeplitz constraints, then we show they are commuting unitaries, and therefore they have a common eigenbasis.
Finally, we build a finite atomic representing measure from their joint spectrum, and certify that every support atom is globally optimal.

\paragraph{Step 1: Constructing the shift operators.} Throughout this proof we define the following index sets:
\begin{subequations}
\begin{align}
  \aset{A}_{\ell}
    &\eqdef \set[\big]{\alpha \in \naturals^{M} \given \alpha_{j} \leq \ell,\ \forall j \in \range{M}} \\
  \aset{A}^{(j)}_{\ell + 1}
    &\eqdef \set[\big]{\alpha \in \naturals^{M} \given \alpha_{j} \leq \ell \ \text{and},\ \alpha_{i} \leq \ell + 1,\ \forall i \in \range{M}\ \text{with $i \neq j$}},
\end{align}
\end{subequations}
for which we have the inclusions
\begin{equation}
  \aset{A}_{\ell}
    \subseteq \aset{A}^{(j)}_{\ell + 1} 
    \subseteq \aset{A}_{\ell + 1}, 
    \qquad \forall j \in \range{M}.
\end{equation}
Now, set $R \eqdef \rank \amatrix{M_{\ell + 1}}(y) = \rank \amatrix{M_{\ell}}(y)$.
As $\amatrix{M_{\ell + 1}}(y)$ is \pglsdashedorshort{PSD} matrix, it admits a Gram factorization  
\begin{equation}
  y_{\alpha, \beta}
    = z_{\alpha}^{\A} z_{\beta},
    \qquad \forall \alpha, \beta \in \aset{A}_{\ell + 1},
\end{equation}
where $z_{\alpha} \in \complex^{R}$ for all $\alpha \in \aset{A}_{\ell + 1}$.  
Then by the rank condition we have that 
\begin{equation}
\label{eq:Span}
  \lin \set{z_{\alpha} \given \alpha \in \aset{A}_{\ell}}
    \equiv \lin \set{z_{\alpha} \given \alpha \in \aset{A}_{\ell + 1}} 
    \equiv \complex^{R}.
\end{equation}
		
For $j \in \range{M}$, we define the \emph{shift operators} $\amatrix{T}_{j} \from \complex^{R} \to \complex^{R}$ by
\begin{equation}
  \amatrix{T}_{j} z_{\alpha} 
    \eqdef z_{\alpha + \unitvector_{j}},
    \qquad \forall \alpha \in \aset{A}^{(j)}_{\ell + 1},
\end{equation}
which we extend by linearity to any vector in $\complex^{R}$.
Indeed, this is possible because $\lin \set{z_{\alpha} \given \alpha \in \aset{A}_{\ell}} \equiv \complex^{R}$ (by \cref{eq:Span}), and $\aset{A}_{\ell} \subseteq \aset{A}^{(j)}_{\ell + 1}$. 
Intuitively, $\amatrix{T}_{j}$ increases the \param{j}-th index of a moment basis vector, i.e., it maps the Gram vector associated with $z^{\alpha}$ to the one associated with $z^{\alpha + \unitvector_{j}}$.
However, as $\set{z_{\alpha} \given \alpha \in \aset{A}^{(j)}_{\ell + 1}}$ are not necessarily linearly independent, we need to show that the shift operators are well defined. 

The only subtlety is that the family $\set{z_{\alpha}}$ may be linearly dependent, so we must verify that this definition is independent of the chosen representation of a vector in $\complex^{R}$.
For this, we show that $\sum_{\alpha \in \aset{A}^{(j)}_{\ell + 1}} \lambda_{\alpha} z_{\alpha} = 0$ implies that $\sum_{\alpha \in \aset{A}^{(j)}_{\ell + 1}} \lambda_{\alpha} \amatrix{T}_{j} z_{\alpha} = 0$. 
Specifically, we show that
\begin{equation}
  \norm[\Big]{\sum_{\alpha \in \aset{A}^{(j)}_{\ell + 1}} \lambda_{\alpha} z_{\alpha}}_{2}^{2}
    = \norm[\Big]{\sum_{\alpha \in \aset{A}^{(j)}_{\ell + 1}} \lambda_{\alpha} \amatrix{T}_{j} z_{\alpha}}_{2}^{2}
\end{equation}
for the Euclidean norm.    
Indeed, by the Toeplitz constraint in \eqref{eq:ToeplitzConstraints}, it follows that
\begin{subequations}
\begin{alignat}{2}
  \norm[\Big]{\sum_{\alpha \in \aset{A}^{(j)}_{\ell + 1}} \lambda_{\alpha} z_{\alpha}}_{2}^{2}
    &= \sum_{\alpha, \beta \in \aset{A}^{(j)}_{\ell + 1}} \conj{\lambda_{\alpha}} \lambda_{\beta} z^{\A}_{\alpha} z_{\beta} \\
    &= \sum_{\alpha, \beta \in \aset{A}^{(j)}_{\ell + 1}} \conj{\lambda_{\alpha}} \lambda_{\beta} y_{\alpha, \beta} \\
    &= \sum_{\alpha, \beta \in \aset{A}^{(j)}_{\ell + 1}} \conj{\lambda_{\alpha}} \lambda_{\beta} y_{\alpha + \unitvector_{j}, \beta + \unitvector_{j}} 
      &&\qquad \text{by \cref{eq:ToeplitzConstraints}}\\
    &= \sum_{\alpha, \beta \in \aset{A}^{(j)}_{\ell + 1}} \conj{\lambda_{\alpha}} \lambda_{\beta} z^{\A}_{\alpha + \unitvector_{j}} z_{\beta + \unitvector_{j}} \\
    &= \sum_{\alpha, \beta \in \aset{A}^{(j)}_{\ell + 1}} \conj{\lambda_{\alpha}} \lambda_{\beta} (\amatrix{T}_{j} z_{\alpha})^{\A} \amatrix{T}_{j} z_{\beta} \\
    &= \norm[\Big]{\sum_{\alpha \in \aset{A}^{(j)}_{\ell + 1}} \lambda_{\alpha} \amatrix{T}_{j} z_{\alpha}}_{2}^{2}.
\end{alignat}
\end{subequations}
Hence the shift preserves all linear relations among the generators, so $\amatrix{T}_{j}$ is well defined.

\paragraph{Step 2: Showing that the shift operators admit a common eigenbasis.}
Next, we show that the shift operators $\amatrix{T}_{1}, \ldots, \amatrix{T}_{M}$ are simultaneously diagonalizable, i.e., they have a common eigenbasis.
At a high level, Toeplitz invariance gives isometric shifts, and commutativity then lets us diagonalize all shifts in one common basis.

First, we show that the shift operators are unitary. 
In particular, we show that they are isometries, i.e., they preserve inner products, and therefore, they are unitary.
To see this, consider arbitrary $j \in \range{M}$. 
By the Toeplitz constraints in \cref{eq:ToeplitzConstraints}, and since $\aset{A}_{\ell} \equiv \bigcap_{r = 1}^{M} \aset{A}^{(r)}_{\ell + 1}$, we have that
\begin{equation}
  y_{\alpha + \unitvector_{j}, \beta + \unitvector_{j}}
    = y_{\alpha, \beta}
    \qquad \forall \alpha, \beta \in \aset{A}_{\ell},
\end{equation}
From this it follows, for all $ \alpha, \beta \in \aset{A}_{\ell}$, that
\begin{equation}
  \inner{\amatrix{T}_{j} z_{\alpha}}{\amatrix{T}_{j} z_{\beta}}
    = \inner{z_{\alpha + \unitvector_{j}}}{z_{\beta + \unitvector_{j}}}
    = y_{\alpha + \unitvector_{j}, \beta + \unitvector_{j}}
    = y_{\alpha, \beta}
    = \inner{z_\alpha}{z_\beta},
\end{equation}
and since, by \eqref{eq:Span}, we also have $\lin \set{z_{\alpha} \given \alpha \in \aset{A}_{\ell}} = \complex^{R}$, we conclude that
\begin{equation}
  \inner{\amatrix{T}_{j} u}{\amatrix{T}_{j} v} 
    = \inner{u}{v} 
    \qquad \qquad \forall u, v \in \complex^{R},
\end{equation}
and therefore $\amatrix{T}_{j}$ is an isometry, and thus unitary.

Next, we show that $\amatrix{T}_{1}, \ldots, \amatrix{T}_{M}$ pairwise commute, i.e., for all $i$, $j \in \range{M}$, and $\lambda \in \complex^{R}$ we have $\amatrix{T}_{i} \amatrix{T}_{j} \lambda = \amatrix{T}_{j} \amatrix{T}_{i} \lambda$. 
This is clearly true for $i = j$. 
While, for $i \neq j$, we argue that, since every $\lambda \in \complex^{R}$ has decomposition $\sum_{\alpha \in \aset{A}_{\ell}} \lambda_{\alpha} z_{\alpha}$, we can write
\begin{equation}
  \amatrix{T}_{i} \amatrix{T}_{j} \lambda
  = \amatrix{T}_{i} \amatrix{T}_{j} \xpar[\bigg]{\sum_{\alpha \in \aset{A}_{\ell}} \lambda_{\alpha} z_{\alpha}}
  = \sum_{\alpha \in \aset{A}_{\ell}} \lambda_{\alpha} \amatrix{T}_{i} \amatrix{T}_{j} z_{\alpha}
  = \sum_{\alpha \in \aset{A}_{\ell}} \lambda_{\alpha} \amatrix{T}_{j} \amatrix{T}_{i} z_{\alpha}
  = \amatrix{T}_{j} \amatrix{T}_{i} \xpar[\bigg]{\sum_{\alpha \in \aset{A}_{\ell}} \lambda_{\alpha} z_{\alpha}}
  = \amatrix{T}_{j} \amatrix{T}_{i} \lambda,
\end{equation}
where the third equality is not immediate and it remains to show that for all $i \neq j$ we have 
\begin{equation}
  \amatrix{T}_{i} \amatrix{T}_{j} z_{\alpha} 
    = \amatrix{T}_{j} \amatrix{T}_{i} z_{\alpha}
    \qquad \forall \alpha \in \aset{A}_{\ell}.
\end{equation}

To establish this fact, note that $\aset{A}_{\ell} \equiv \bigcap_{m = 1}^{M} \aset{A}^{(m)}_{\ell + 1}$, implying that
\begin{equation}
  \amatrix{T}_{i} \amatrix{T}_{j} z_{\alpha} 
  = \amatrix{T}_{i} z_{\alpha + \unitvector_{j}} 
  = z_{\alpha+\unitvector_{j} + \unitvector_{i}} 
  = z_{\alpha+\unitvector_{i} + \unitvector_{j}} 
  = \amatrix{T}_{j} z_{\alpha + \unitvector_{i}}
  = \amatrix{T}_{j} \amatrix{T}_{i} z_{\alpha}
  \qquad \forall \alpha \in \aset{A}_{\ell}.
\end{equation}
Indeed, the first and fifth equalities follow as $\alpha \in \aset{A}^{(j)}_{\ell + 1} \cap \aset{A}^{(i)}_{\ell + 1}$, the second equality as $\alpha + \unitvector_{j} \in \aset{A}^{(i)}_{\ell + 1}$, and the fourth equality as $\alpha \in \aset{A}^{(i)}_{\ell + 1}$.
Summarizing, we have shown that the shift operators are unitary and pairwise commute.
These properties imply that the operators are \emph{normal} and therefore the spectral theorem gives one common eigenbasis for all of them.
        
\paragraph{Step 3: Building an atomic representing measure.} Since the shift operators are simultaneously diagonalizable, there exists a unitary $\amatrix{P} \in \complex^{R \times R}$, and diagonal \emph{unitary} matrices $\amatrix{D}_{1}$, \dots, $\amatrix{D}_{M} \in \complex^{R \times R}$, such that $\amatrix{T}_{j} = \amatrix{P} \amatrix{D}_{j} \amatrix{P}^{\A}$ for all $j \in \range{M}$.
Let $\amatrix{P}_{\colon\!, k}$, for $k \in \range{R}$ denote the columns of $\amatrix{P}$. 

Now, define the operators $\amatrix{T}^{\alpha} \from \complex^{R} \to \complex^{R}$ for $\alpha \in \aset{A}_{\ell + 1}$ by
\begin{equation}
  \amatrix{T}^{\alpha}
    \eqdef \amatrix{T}_{1}^{\alpha_{1}} \cdots \amatrix{T}_{M}^{\alpha_{M}},
\end{equation}
which are well defined as the shift operators pairwise commute.
Then, by the definition of the shift operators $\amatrix{T}_{1}, \ldots, \amatrix{T}_{M}$, we have that
\begin{equation}
  \amatrix{T}^{\alpha} z_{0}
    = z_{\alpha},
    \qquad \forall \alpha \in \aset{A}_{\ell + 1}.
\end{equation}
In other words, $\amatrix{T}^{\alpha}$ shifts the vector $z_{0}$ to the vector $z_{\alpha}$ by applying the shift operators $\amatrix{T}_{1}, \ldots, \amatrix{T}_{M}$ according to the multi-index $\alpha$.

Similarly, define the matrices $\amatrix{D}^{\alpha} \in \complex^{R \times R}$ for $\alpha \in \integers^{M}$ by
\begin{equation}
  \amatrix{D}^{\alpha} 
    \eqdef \amatrix{D}_{1}^{\alpha_{1}} \cdots \amatrix{D}_{M}^{\alpha_{M}},
\end{equation}
which are well defined as the diagonal matrices $\amatrix{D}_{1}, \ldots, \amatrix{D}_{M}$ pairwise commute.
Observe that:
\begin{enumerate}
  \item Since $\amatrix{D}_{1}$, \dots, $\amatrix{D}_{M}$ are diagonal matrices, it follows that $\amatrix{D}^{\alpha}$ is diagonal. 
  In particular, its \param{k}-th diagonal entry is $\prod_{j = 1}^{M} (\amatrix{D}_{j})_{k, k}^{\alpha_{j}}$.
  \item Since $\amatrix{D}_{1}$, \dots, $\amatrix{D}_{M}$ are unitary, it follows that $\amatrix{D}^{\alpha}$ is unitary, and therefore $\xpar[\big]{\amatrix{D}^{\alpha}}^{\A} = (\amatrix{D}^{-\alpha})$.
\end{enumerate}
Thus, by combining the above, we have
\begin{equation}
\label{eq:DiagonalUnitaryOperatorProperty}
  \xpar[\big]{\amatrix{D}^{\alpha}}^{\A} \amatrix{D}^{\beta} 
    = \amatrix{D}^{\beta - \alpha},
    \qquad \forall \alpha, \beta \in \integers^{M}.
\end{equation}
Finally, since $\amatrix{P}$ is also unitary, we have that
\begin{equation}
\label{eq:GeneralizedShiftOperatorDiagonalization}
  \amatrix{T}^{\alpha} 
    = \amatrix{P} \amatrix{D}_{1}^{\alpha_{1}} \cdots \amatrix{D}_{M}^{\alpha_{M}} \amatrix{P}^{\A}
     = \amatrix{P} \amatrix{D}^{\alpha} \amatrix{P}^{\A},
    \qquad \forall \alpha \in \aset{A}_{\ell + 1}.
\end{equation}

Then, for each $\alpha$, $\beta \in \aset{A}_{\ell + 1}$, it follows that
\begin{subequations}
\begin{alignat}{2}
  y_{\alpha, \beta}
    &= z_{\alpha}^{\A} z_{\beta} \\
    &= \xpar[\big]{\amatrix{T}^{\alpha} z_{0}}^{\A} \xpar[\big]{\amatrix{T}^{\beta} z_{0}} \\
    &= z_{0}^{\A} \xpar[\big]{\amatrix{T}^{\alpha}}^{\A} \amatrix{T}^{\beta} z_{0} \\
    &= z_{0}^{\A} \amatrix{P} \xpar[\big]{\amatrix{D}^{\alpha}}^{\A} \amatrix{D}^{\beta} \amatrix{P}^{\A} z_{0}
      &&\qquad \text{by \cref{eq:GeneralizedShiftOperatorDiagonalization}} \\
    &= z_{0}^{\A} \amatrix{P} \amatrix{D}^{\beta - \alpha} \amatrix{P}^{\A} z_{0}
      &&\qquad \text{by \cref{eq:DiagonalUnitaryOperatorProperty}} \\
    &= \sum_{k = 1}^{R} z_{0}^{\A} \amatrix{P}_{\colon\!, k} \amatrix{P}_{\colon\!, k}^{\A} z_{0}\, \amatrix{D}^{\beta - \alpha}_{k, k} \\
    &= \sum_{k = 1}^{R} \abs[\big]{\inner{z_{0}}{\amatrix{P}_{\colon\!, k}}}^{2}\, \amatrix{D}^{\beta - \alpha}_{k, k}.
\end{alignat}
\end{subequations}

For each $k \in \range{R}$, define the vector $d^{(k)} \in \complex^{M}$ whose \param{j}-th entry is the conjugate of the \param{k}-th diagonal entry of $\amatrix{D}_{j}$, i.e., $d^{(k)}_{j} \eqdef \conj{(\amatrix{D}_{j})_{k, k}}$.
Then, define the convex combination of Dirac measures 
\begin{equation}
  \mu
    \eqdef \sum_{k = 1}^{R} \abs{\inner{z_{0}}{\amatrix{P}_{\colon\!, k}}}^{2} \delta_{d^{(k)}}.
\end{equation}
This is indeed a convex combination as $\sum_{k = 1}^{R} \abs{\inner{z_{0}}{\amatrix{P}_{\colon\!, k}}}^{2} = \norm{z_{0}}_{2}^{2} = y_{0, 0} = 1$, which holds at a feasible solution of the trigonometric moment relaxation.
Therefore, the coefficients $\abs{\inner{z_{0}}{\amatrix{P}_{\colon\!, k}}}^{2}$ are nonnegative and sum to one, so $\mu$ is a probability measure supported on at most $R$ torus points.
In fact, $\mu$ is a representing measure for the truncated pseudo-moment sequence $(y_{\alpha, \beta})_{\alpha, \beta \in \aset{A}_{\ell + 1}}$, i.e., it is a measure that satisfies 
\begin{equation}
  y_{\alpha, \beta} 
    = \int_{\complex^{M}} z^{\alpha} \conj{z}^{\beta} d\,\mu 
    \qquad \forall \alpha, \beta \in \aset{A}_{\ell + 1}.
\end{equation}    
 
\paragraph{Step 4: Certifying global optimality of the atoms.} The last step is a convexity argument.
In particular, we show that the relaxation value $f^{\star}_{\ell + 1}$ is a convex combination of objective values at the extracted atoms $d^{(k)}$.
The argument concludes by showing that each atom $d^{(k)}$ in the support of $\mu$ corresponds to a global minimizer of the \gls{HTP} optimization problem.

First, observe that $\mu$ is supported on the complex \param{M}-torus $\complextorus^{M}$.
It suffices to show that $d^{(k)} \in \complextorus^{M}$ for each $k \in \range{R}$, i.e., $\abs{d^{(k)}_{j}} = 1$ for each $j \in \range{M}$.
Indeed, for each $k \in \range{R}$ and each $j \in \range{M}$, we have
\begin{equation}
  \abs{d^{(k)}_{j}}^{2}
    = \abs[\big]{\conj{(\amatrix{D}_{j})_{k, k}}}^{2}
    = \abs[\big]{(\amatrix{D}_{j})_{k, k}}^{2}
     = 1,
\end{equation}
because $\amatrix{D}_{j}$ is unitary.

It remains to show that each atom $d^{(k)}$ in the support of $\mu$ corresponds to a global minimizer of the \gls{HTP} optimization problem.%
For each $k \in \range{R}$, since $d^{(k)}$ is a feasible point of the \gls{HTP} optimization problem $\min_{z \in \complextorus^{M}} f(z, \conj{z})$, we have $f(d^{(k)}, \conj{d^{(k)}}) \geq f^{\star}$.
Furthermore, by the optimality of $y$ for the order\nobreakdash-$(\ell + 1)$ trigonometric moment relaxation, we have that
\begin{equation}
  f^{\star}_{\ell + 1}
    = \sum_{\alpha, \beta \in \aset{A}_{\ell + 1}} f_{\alpha, \beta} y_{\alpha, \beta}
    = \sum_{\alpha, \beta \in \aset{A}_{\ell + 1}} f_{\alpha, \beta} \int_{\complex^{M}} z^{\alpha} \conj{z}^{\beta} d\,\mu
    = \sum_{k = 1}^{R} \abs{\inner{z_{0}}{\amatrix{P}_{\colon\!, k}}}^{2} f(d^{(k)}, \conj{d^{(k)}}).
\end{equation}
Thus, since $f^{\star}_{\ell + 1} \leq f^{\star}$, it follows that $\sum_{k = 1}^{R} \abs{\inner{z_{0}}{\amatrix{P}_{\colon\!, k}}}^{2} f(d^{(k)}, \conj{d^{(k)}}) \leq f^{\star}$.
Therefore, because each atom is feasible and the convex combination is at most $f^{\star}$, every support atom must satisfy $f(d^{(k)}, \conj{d^{(k)}}) = f^{\star}$.
In other words, each atom $d^{(k)}$ in the support of $\mu$ corresponds to a global minimizer of the \gls{HTP} optimization problem. \qed{}

\section{Expressivity limitations of poly-depth constant-parameter \glsfmtshortpl{PQC}}
\label{se:ExpressivityLimitations}

We now address a couple of complexity considerations related to the existence of the additive \gls{FPRAS} in \cref{alg:EfficientApproximationAlgorithm} for the class of \gls{PQC} optimization problems satisfying \lcnamecrefs{ass:IntegerSpectralDifferences} in \cref{sec:Assumptions}.
In particular, let us address the following pressing questions:
\begin{enumerate}
  \item\label[question]{qst:Expressivity} How expressive is the class of \gls{PQC} optimization problems with a fixed parameter count?
  \item\label[question]{qst:Complexity} What does the existence of the \gls{FPRAS} imply, in conjunction with the hardness results of \citet{bittel_2021_training}, about the complexity of \gls{PQC} optimization problems?
\end{enumerate}

Our discussion began with the existence of quantum hardware that is capable of efficiently estimating the expectation value $f(\theta)$ of an observable $\amatrix{O}$ by performing measurements on a quantum state prepared by a parameterized quantum circuit $\amatrix{U}(\theta)$.
Nowhere in this discussion were we interested in representing $\amatrix{U}(\theta)$ or $\amatrix{O}$ explicitly.
Therefore, in order to discuss a problem instance $\aset{I}$ in a complexity-theoretic sense, we need to introduce standard assumptions on the representation of $\amatrix{U}(\theta)$ and $\amatrix{O}$ (\cref{ass:PQCPolynomialDescription,ass:ObservablePolynomialDescription}), and on the realization of the quantum query routine for $f(\theta)$, which in layman's terms corresponds to the existence of a classical procedure that, given $\amatrix{U}$, $\amatrix{O}$, and $\theta$, outputs parameterized quantum hardware capable of producing samples $X(\theta)$ with $\mean[\big]{X(\theta)} = f(\theta)$ (\cref{ass:QuantumQueryRoutineRealization}).
Formally, we have the following assumptions.

\begin{assumption}
\label{ass:PQCOPolynomialDescription}
  Consider an instance of the \gls{PQC} optimization problem over an \param{n}-qubit system.
  We assume that:
  \begin{enumerate}[ref=\theassumption.\arabic*]
    \item\label[assumption]{ass:PQCPolynomialDescription} The description of the \gls{PQC} $\amatrix{U}$ is $\bigtheta\xpar[\big]{\polytime(n)}$.
    \item\label[assumption]{ass:ObservablePolynomialDescription} The description of the observable $\amatrix{O}$ is $\bigtheta\xpar[\big]{\polytime(n)}$.
    \item\label[assumption]{ass:QuantumQueryRoutineRealization} Given $\theta \in \rationals^{M}$ with \emph{polynomial bit-length}, there exists a deterministic polynomial-time classical procedure that, given $(\amatrix{U}, \amatrix{O}, \theta)$, outputs a quantum query routine producing samples $X(\theta)$ with $\mean[\big]{X(\theta)} = f(\theta)$.
  \end{enumerate}
\end{assumption}

\noindent We can now formally define the class \gls{PQCO} problems satisfying our assumptions.

\paragraph{The \gls{PQCO} problem class.} Let \gls{PQCO} denote the class of \gls{PQC} optimization problems satisfying \cref{ass:IntegerSpectralDifferences,ass:PQCComplexity,ass:ObservableComplexity,ass:PQCOPolynomialDescription}.
By definition, \pgls{PQCO} instance $\aset{I}$ is identified by \pgls{PQC} $\amatrix{U}$ and an observable $\amatrix{O}$, i.e., $\aset{I} \equiv (\amatrix{U}, \amatrix{O})$.
Therefore, by \cref{ass:PQCPolynomialDescription,ass:ObservablePolynomialDescription}, the description of $\aset{I}$ is $\bigtheta\xpar[\big]{\polytime(n)}$, i.e., 
\begin{equation}
  \abs{\aproblem{I}} 
    = \bigtheta\xpar[\big]{\polytime(n)}.
\end{equation}
Furthermore, by \cref{ass:QuantumQueryRoutineRealization}, there exists a deterministic polynomial-time classical procedure that, given $(\aproblem{I}, \theta)$, outputs a quantum query routine producing samples $X_{\aproblem{I}}(\theta)$ with $\mean[\big]{X_{\aproblem{I}}(\theta)} = f_{\aproblem{I}}(\theta)$, where $f_{\aproblem{I}}$ denotes the objective function of $\aproblem{I}$.

To prove the expressiveness of the \gls{PQCO} problem class, we are going to analyze the induced class \gls{GAPPQCO} of promise problems (see, e.g., \citet{goldreich_2006_promise}) defined below.
This allows us to show the relationship between \gls{PQCO} class and the \gls{BPPBQP} class~\citep{bernstein_1997_quantum,bennett_1997_strengths}.
To make the discussion self-contained, let us now recall the definitions of the complexity classes \gls{BQP} and \gls{BPPBQP}.

\begin{definition}[The complexity classes \gls{BQP} and \gls{BPPBQP}]
\label{def:BPPBQP}
  \Gls{BQP} is the class of promise problems solvable by a polynomial-time quantum Turing machine with bounded error, and \gls{BPPBQP} is the class of promise problems solvable by a probabilistic polynomial-time classical oracle Turing machine with bounded error and access to \pgls{BQP} oracle.
\end{definition}

\noindent Clearly, $\gls{BQP} \subseteq \gls{BPPBQP}$, since \pgls{BPPBQP} machine can simulate \pgls{BQP} machine by making a single query to its \gls{BQP} oracle.
It's nontrivial, but also known, that $\gls{BPPBQP} \subseteq \gls{BQP}$.
That is, \pgls{BQP} machine can simulate \pgls{BPPBQP} machine with only polynomial overhead, and therefore $\gls{BPPBQP} \equiv \gls{BQP}$~\citep{bennett_1997_strengths}.

With these definitions in place, let us now define the class \gls{GAPPQCO} of promise problems induced by \gls{PQCO} optimization problems, and show that they are in \gls{BPPBQP} (and therefore in \gls{BQP}).
Later on, we exploit this relationship to establish the limitations to the expressivity of \gls{PQCO}.

\begin{definition}[The \gls{GAPPQCO} problem]
\label{def:GAPPQCO}
  Given a positive polynomial $p$, a $\gls{GAPPQCO}(p)$ problem is a promise problem with YES set
  \begin{equation}
    \set[\bigg]{
      (\aproblem{I}, a, b) 
      \given 
        f_{\aproblem{I}}^{\star} 
          \leq a,\ 
        b - a 
          \geq \frac{1}{p\xpar[\big]{\abs{\aproblem{I}}}}
    },
  \end{equation}
  and NO set
  \begin{equation}
    \set[\bigg]{
      (\aproblem{I}, a, b) 
      \given 
        f_{\aproblem{I}}^{\star} \geq b,\ 
        b - a \geq \frac{1}{p\xpar[\big]{\abs{\aproblem{I}}}}
    },
  \end{equation}
  where $a$, $b \in \rationals$, and $\aproblem{I} \in \gls{PQCO}$.
  Here, $f_{\aproblem{I}}^{\star}$ denotes the minimum value of $\aproblem{I}$.
  The condition 
  \begin{equation}
    b - a 
      \geq \frac{1}{p\xpar[\big]{\abs{\aproblem{I}}}}
  \end{equation}
  is known as the promise gap.
  When the polynomial $p$ is irrelevant, we simply write $\gls{GAPPQCO}$ to denote the class of problems.
\end{definition}

We are now ready to prove that \gls{GAPPQCO} is in \gls{BPPBQP} (for all polynomial $p$).
Essentially, this means that solving the promise problem associated with \gls{PQCO} optimization problems is no harder than solving a problem in \gls{BPPBQP}, and therefore in \gls{BQP}.
To show this, we use the existence of the \gls{FPRAS} in \cref{alg:EfficientApproximationAlgorithm} for the class of \gls{PQCO} optimization problems, and the guarantees of \cref{thm:EfficientApproximationAlgorithm}.

\begin{corollary}
\label{thm:GAPPQCOinBPPBQP}
  \Gls{GAPPQCO} is in \gls{BPPBQP} (and therefore in \gls{BQP}).
\end{corollary}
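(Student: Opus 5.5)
Given an input $(\aproblem{I}, a, b)$ satisfying the promise, we will build a \gls{BPPBQP} machine that runs \cref{alg:EfficientApproximationAlgorithm} with accuracy $\epsilon \eqdef (b - a)/3$ and failure probability $\delta \eqdef 1/3$. It then outputs YES iff the returned estimate satisfies $\hat{f}^{\star} \le (a + b)/2$.

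Correctness follows from \cref{thm:EfficientApproximationAlgorithm}. With probability at least $2/3$ we have $\abs{\hat{f}^{\star} - f_{\aproblem{I}}^{\star}} \le \epsilon < (b-a)/2$.
\begin{enumerate*}[label=(\roman*)]
  \item On a YES instance, $\hat{f}^{\star} \le a + \epsilon < (a+b)/2$.
  \item On a NO instance, $\hat{f}^{\star} \ge b - \epsilon > (a+b)/2$.
\end{enumerate*}
So the bounded-error requirement holds. Since $b - a \ge 1/p(\abs{\aproblem{I}})$, we get $1/\epsilon \le 3p(\abs{\aproblem{I}})$. Because $\abs{\aproblem{I}} = \bigtheta(\polytime(n))$ by \cref{ass:PQCPolynomialDescription,ass:ObservablePolynomialDescription}, $1/\epsilon$ is polynomial in $n$. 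Hence the running time and the number of quantum queries guaranteed by \cref{thm:EfficientApproximationAlgorithm} are polynomial in the input length. A degenerate case is when $b-a$ is huge relative to $\norm{\amatrix{O}}$; this is harmless, and one can cap $\epsilon$ at $1$.

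The main step is to check that the quantum part of \cref{alg:EfficientApproximationAlgorithm} is realizable by a classical probabilistic machine with a \gls{BQP} oracle.
\begin{itemize}
  \item The grid points $\phi \in \aset{S}$ have the form $2\pi t/N_j$, and $\pi$ is irrational. I would rescale so that the parameters are rationals of polynomial bit-length. This is possible because $f$ is $2\pi$-periodic and the grid only needs to be a uniform grid. Alternatively, I would argue that a polynomial-precision rational approximation of each grid point perturbs $f(\phi)$ by at most $\epsilon/(4\abs{\aset{S}})$. The latter uses the Lipschitz bound $\abs{\partial_j f} \le \sum_\alpha \abs{\alpha_j}\abs{f_\alpha}$, which is polynomial by \cref{thm:HTPRepresentation} and the coefficient bound \eqref{eq:CoefficientSumBound}.
  \item By \cref{ass:QuantumQueryRoutineRealization}, a deterministic polynomial-time classical procedure produces the query routine for each such $\theta$.
  \item Each shot is one polynomial-size quantum computation returning a sample $X(\theta)$. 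To express it as an oracle call, I would let the \gls{BQP} oracle answer bit-queries of a quantum sampling procedure: a random bit whose bias encodes $(X(\theta) + O_{\max})/(2O_{\max})$. Alternatively, I would allow the oracle to compute the whole empirical mean of $N$ shots directly, since $N$ is polynomial. The latter is simpler: a single \gls{BQP}-style computation outputs the averaged estimate to the needed precision with high probability, and repetition plus median amplifies the success probability.
\end{itemize}

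The classical FFT and the SDP at level $\ell$ from \eqref{eq:RequiredSOSLevel} run in polynomial time. The SDP is solved only approximately, to additive accuracy $\epsilon$, via the ellipsoid or interior-point method. This costs an extra $\epsilon$ that I absorb by taking $\epsilon = (b-a)/4$ instead. Finally, I would cite $\gls{BPPBQP} = \gls{BQP}$ \citep{bennett_1997_strengths}. I expect the oracle formalization and the rational-approximation bookkeeping, not the error analysis, to be the only delicate points.
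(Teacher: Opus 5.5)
Your proposal is correct and follows the paper's proof: the same $\epsilon = (b-a)/3$, $\delta = 1/3$, threshold $(a+b)/2$ and bound $1/\epsilon \le 3p(\abs{\aproblem{I}})$, plus quantizing the grid points. Your rounding argument via the Lipschitz bound $\abs{\partial_j f} \le \sum_\alpha \abs{\alpha_j}\abs{f_\alpha}$ makes explicit what the paper leaves to DFT perturbation bounds; your rescaling alternative does not by itself help, because the routine of \cref{ass:QuantumQueryRoutineRealization} is still handed the irrational point $\theta = 2\pi t/N_j$.

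There is one slip in the SDP bookkeeping. With an extra SDP error of $\epsilon$ and $\epsilon = (b-a)/4$, the total error is exactly $(b-a)/2$, so a NO instance can land on the threshold $(a+b)/2$ and be accepted; take $\epsilon = (b-a)/5$ or solve the SDP to accuracy $\epsilon/2$.
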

\begin{proof}
  Let $(\aproblem{I}, a, b)$ be an input satisfying the promise gap, i.e., $b - a \geq 1 / p\xpar[\big]{\abs{\aproblem{I}}}$, and set
  \begin{equation}
    \epsilon 
      \equiv \frac{b - a}{3}.
  \end{equation}
  Run the \gls{FPRAS} of \cref{alg:EfficientApproximationAlgorithm} on $\aproblem{I}$ with error parameter $\epsilon$ and failure probability $\delta = 1 / 3$, and let $\hat{f}_{\aproblem{I}}^{\star}$ be the returned estimate of $f_{\aproblem{I}}^{\star}$.

  \paragraph{Step 1: Bounding the runtime.} By \cref{thm:EfficientApproximationAlgorithm}, this algorithm runs in time polynomial in $n$, $1 / \epsilon$, $\log(1 / \delta)$, and $\abs{\aproblem{I}}$ (see the technical note below), and uses only a polynomial number of queries in $n$, $1 / \epsilon$, $\log(1 / \delta)$.
  Since
  \begin{equation}
    \frac{1}{\epsilon}
      = \frac{3}{b - a} 
      \leq 3p\xpar[\big]{\abs{\aproblem{I}}},
  \end{equation}
  it follows that $1 / \epsilon$ is polynomial in $\abs{\aproblem{I}}$. 
  Moreover, since $\abs{\aproblem{I}} = \bigtheta\xpar[\big]{\polytime(n)}$ (\cref{ass:PQCPolynomialDescription,ass:ObservablePolynomialDescription}), the overall runtime and the number of queries are polynomial in $\abs{\aproblem{I}}$.
  Hence, the above procedure is a probabilistic polynomial-time classical oracle Turing machine with bounded error and access to \pgls{BQP} oracle, which is exactly \pgls{BPPBQP} machine.

  \paragraph{Step 2: Defining the output.} To complete the proof, it remains to define the output of the machine.
  In particular, output YES if and only if
  \begin{equation}
    \hat{f}_{\aproblem{I}}^{\star}
      \leq \frac{a + b}{2};
  \end{equation}
  and output NO otherwise.
  If $f_{\aproblem{I}}^{\star} \leq a$, then, by \cref{thm:EfficientApproximationAlgorithm}, we guarantee that with probability at least $1 - \delta = 2 / 3$, we have
  \begin{equation}
    \abs{\hat{f}_{\aproblem{I}}^{\star} - f_{\aproblem{I}}^{\star}} 
      \leq \epsilon
    \implies 
    \hat{f}_{\aproblem{I}}^{\star} 
      \leq f_{\aproblem{I}}^{\star} + \epsilon
      \leq a + \frac{b - a}{3}
      < \frac{a + b}{2}.
  \end{equation}
  On the other hand, if $f_{\aproblem{I}}^{\star} \geq b$, then, by \cref{thm:EfficientApproximationAlgorithm}, we guarantee that with probability at least $2 / 3$, we have
  \begin{equation}
    \abs{\hat{f}_{\aproblem{I}}^{\star} - f_{\aproblem{I}}^{\star}} 
      \leq \epsilon
    \implies 
    \hat{f}_{\aproblem{I}}^{\star} 
      \geq f_{\aproblem{I}}^{\star} - \epsilon
      \geq b - \frac{b - a}{3}
      > \frac{a + b}{2}.
  \end{equation}
  So, in either case, the machine outputs the correct answer with probability at least $2 / 3$.

  \paragraph{Step 3: Technical note on the bit-length of the queried points.} As a minor technical note, \cref{alg:EfficientApproximationAlgorithm} queries the quantum oracle for $\aproblem{I}$ at $\theta \in \reals^{M}$ in a sampling grid of polynomially many equispaced points along each parameter $\theta_{j}$.
  Since these points are not guaranteed to have finite polynomial-bit encodings, they are not directly compatible with \cref{ass:QuantumQueryRoutineRealization}, which requires $\theta$ to have polynomial bit-length.
  However, since the \gls{DFT} is Lipschitz~\citep{schatzman_1996_accuracy}, we can absorb the overhead by quantizing this grid in the choice of $\epsilon$.
  To see this, choose a quantization step $\eta = \epsilon / q\xpar[\big]{\abs{\aproblem{I}}}$, for polynomial $q$ to be fixed, so that each queried parameter $\tilde{\theta}$ has polynomial bit-length in $1 / \epsilon$ and $\abs{\aproblem{I}}$. 
  Then standard \gls{DFT}/\gls{FFT} perturbation bounds (see, e.g., \citet{schatzman_1996_accuracy,gentleman_1966_fast}) imply that the induced reconstruction error is at most
  \begin{equation}
    r\xpar[\big]{\abs{\aproblem{I}}}\, \eta
    \qquad
    \text{for appropriate polynomial $r$}. 
  \end{equation}
  Therefore, by choosing $q \equiv 3 r$ we can absorb the induced quantization error into the additive error budget $\epsilon / 3$.

\end{proof}

The final piece of machinery we need to address the complexity question is the notion of polynomial-time Karp reduction, which is a standard notion of reduction in classical complexity theory~\citep{arora_2009_computational}.
For promise problems, the definition is as follows.

\begin{definition}[Polynomial-time Karp reduction]
  Consider two promise problems $\aset{P}$ and $\aset{Q}$.
  We say $\aset{P}$ is Karp-reducible to $\aset{Q}$ if there exists a deterministic polynomial-time classical procedure that maps each instance of $\aset{P}$ to an instance of $\aset{Q}$ such that a YES (resp. NO) instance of $\aset{P}$ is mapped to a YES (resp. NO) instance of $\aset{Q}$.
\end{definition}

It is well-known that the \gls{GAPMAXCUT}, i.e., the class of promise problems induced by unweighted \textsc{Max-Cut} problems, is \gls{NP}\nobreakdash-hard under polynomial-time Karp reductions.
This means that every problem in \gls{NP} is polynomial-time Karp reducible to \gls{GAPMAXCUT}, and therefore \gls{GAPMAXCUT} is at least as hard as any problem in \gls{NP}.
Thus, we can now prove the following \lcnamecref{thm:PQCOExpressivity}.

\begin{theorem}
\label{thm:PQCOExpressivity}
  Suppose \gls{GAPMAXCUT} is polynomial-time Karp reducible to \gls{GAPPQCO}. 
  Then $\gls{NP} \subseteq \gls{BQP}$, i.e., every problem in \gls{NP} is solvable by a quantum Turing machine with bounded error.
  Equivalently, if $\gls{NP} \nsubseteq \gls{BQP}$, then there does not exist a polynomial-time Karp reduction from \gls{GAPMAXCUT} to \gls{GAPPQCO}, i.e., there cannot exist a single polynomial-time deterministic mapping from all instances of \gls{GAPMAXCUT} to instances of \gls{GAPPQCO} that maps YES (resp. NO) instances of \gls{GAPMAXCUT} to YES (resp. NO) instances of \gls{GAPPQCO}.
\end{theorem}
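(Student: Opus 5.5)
The plan is a standard composition-of-reductions argument, using \cref{thm:GAPPQCOinBPPBQP} as the key input. Let $L \in \gls{NP}$ be arbitrary. Since \gls{GAPMAXCUT} is \gls{NP}-hard under polynomial-time Karp reductions, there is a deterministic polynomial-time map $R_{1}$ that sends YES instances of $L$ to YES instances of \gls{GAPMAXCUT}, and NO instances to NO instances. By hypothesis, there is a second deterministic polynomial-time map $R_{2}$ from \gls{GAPMAXCUT} instances to \gls{GAPPQCO} instances that likewise preserves YES/NO membership. The composition $R_{2} \circ R_{1}$ is then a polynomial-time Karp reduction from $L$ to $\gls{GAPPQCO}(p)$ for some polynomial $p$.

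To see that the composition runs in polynomial time, note that $\abs{R_{1}(x)} \le \polytime(\abs{x})$ and $R_{2}$ runs in time polynomial in its input length. I will also check that the output $(\aproblem{I}, a, b)$ of $R_{2}$ carries a promise gap $b - a \ge 1/p(\abs{\aproblem{I}})$ for a single fixed polynomial $p$. This is implicit in the reduction targeting the class $\gls{GAPPQCO}(p)$. Fixing $p$ is also what lets \cref{thm:GAPPQCOinBPPBQP} be applied uniformly, since its proof sets $\epsilon = (b-a)/3$ and needs $1/\epsilon \le 3p(\abs{\aproblem{I}})$.

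Next I decide $L$ by a \gls{BPPBQP} machine. On input $x$, it deterministically computes $(\aproblem{I}, a, b) = R_{2}(R_{1}(x))$. It then runs the procedure from the proof of \cref{thm:GAPPQCOinBPPBQP}: call \cref{alg:EfficientApproximationAlgorithm} with $\epsilon = (b-a)/3$ and $\delta = 1/3$, and accept iff $\hat{f}^{\star}_{\aproblem{I}} \le (a+b)/2$. Because the reductions preserve YES/NO membership, correctness with probability at least $2/3$ on $(\aproblem{I}, a, b)$ transfers directly to $x$. The total runtime is a polynomial in $\abs{\aproblem{I}}$, which is itself a polynomial in $\abs{x}$, so the machine runs in time polynomial in $\abs{x}$. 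Hence $L \in \gls{BPPBQP}$. By $\gls{BPPBQP} = \gls{BQP}$ (\citet{bennett_1997_strengths}), $L \in \gls{BQP}$, and since $L$ was arbitrary, $\gls{NP} \subseteq \gls{BQP}$. The second sentence of the theorem is the contrapositive.

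The main obstacle is making the uniformity precise rather than any computation. The parameter $n$ (number of qubits) of the produced instance must be polynomially related to $\abs{\aproblem{I}}$, which is guaranteed by \cref{ass:PQCPolynomialDescription,ass:ObservablePolynomialDescription}. The instance must genuinely belong to \gls{PQCO}, so that \cref{thm:EfficientApproximationAlgorithm} applies with its known polynomial bounds $K$, $\Delta_{\max}$ and $O_{\max}$. The first thing I would do is stipulate that these polynomial bounds are fixed for the class, so that the polynomial runtime bound of the \gls{FPRAS} does not depend on the particular instance. A minor additional point is that the rational $a, b$ have polynomial bit-length because $R_{2}$ runs in polynomial time, so the threshold comparison is efficient.
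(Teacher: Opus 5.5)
Your proposal is correct and follows essentially the same route as the paper: compose the NP-hardness reduction to GAP-MAXCUT with the assumed reduction to GAP-PQCO, invoke the corollary placing GAP-PQCO in $\mathsf{BPP}^{\mathsf{BQP}}$, and conclude via $\mathsf{BPP}^{\mathsf{BQP}} = \mathsf{BQP}$. (The paper first places GAP-MAXCUT in $\mathsf{BPP}^{\mathsf{BQP}}$ and then composes, which is the same argument with the steps grouped differently.) Your insistence that the reduction target a single $\mathrm{GAP\text{-}PQCO}(p)$, and that the class-wide bounds on $M$, $K$, $\Delta_{\max}$ and $O_{\max}$ be fixed, makes explicit uniformity conditions the paper leaves implicit.
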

\begin{proof}
  By \cref{thm:GAPPQCOinBPPBQP}, we have that $\gls{GAPPQCO}$ is in $\gls{BPPBQP}$.
  Therefore, if \gls{GAPMAXCUT} is polynomial-time Karp reducible to \gls{GAPPQCO}, then $\gls{GAPMAXCUT} \in \gls{BPPBQP}$.
  Since \gls{GAPMAXCUT} is \gls{NP}-hard under polynomial-time Karp reductions, every problem in $\gls{NP}$ polynomial-time Karp reduces to \gls{GAPMAXCUT}.
  By composing this reduction with the reduction from \gls{GAPMAXCUT} to \gls{GAPPQCO}, it follows that every problem in \gls{NP} belongs to \gls{BPPBQP}, and therefore in \gls{BQP}.
  Hence, $\gls{NP} \subseteq \gls{BQP}$.
\end{proof}

Since most assumptions in our framework are standard, other than $M = \bigoh(1)$, \cref{thm:PQCOExpressivity} essentially implies a concrete complexity-theoretic barrier.
Unless one expects an implausible collapse such as $\gls{NP} \subseteq \gls{BPPBQP}$, arbitrary \gls{NP}\nobreakdash-hard optimization problems should not be expected to be solvable using \gls{PQC} with a constant number of trainable parameters.
In this sense, constant-parameter \glspl{PQC} form a tractable special case rather than a universal optimization model.
In conjunction with the results of \citet{bittel_2021_training}, which show that \gls{PQC} training is, in general, \gls{NP}\nobreakdash-hard, this suggests that the barrier to tractability lies in the number of trainable parameters.
Taken together, these results suggest that the problem is tractable for $M = \bigoh(1)$, and non-tractable for $M = \bigoh(n)$, while the case of $M = \bigoh(\log n)$ remains unknown.

\section{Discussion}
\label{sec:Discussion}

This work studies \gls{PQC} optimization for poly-depth constant-parameter ansätze under standard assumptions.
The main contribution is \pgls{FPRAS} for this problem, which is the first such result to our knowledge.
The \gls{FPRAS} consists of a three-step classical--quantum pipeline consisting of a data-acquisition phase where queries to the quantum device are performed, a reconstruction phase where the optimization problem is formulated by using multidimensional \gls{FFT} on the queried data, and a classical optimization phase where the resulting \gls{HTP} optimization problem is solved by a max-degree trigonometric moment/\gls{SOS} hierarchy.
The key insight is that the optimization problem can be solved efficiently by using the structure of the generator spectra and the parameter-sharing matrix, which allows for a low-degree trigonometric moment/\gls{SOS} relaxation to be exact and for the optimization problem to be solved in polynomial time.

This result leads to several practical implications.
First, \cref{alg:EfficientApproximationAlgorithm} can always be used (under natural assumptions) for convergence to a global $\epsilon$-approximation of the optimal value with high probability, which is a stronger guarantee than what is typically provided by \gls{VQA} optimizers, which are local descent methods and can get stuck in local minima.
Second, given some \gls{PQC} $\amatrix{U}$ with integer spectral differences, possibly after rescaling, and with observable $\amatrix{O}$ fixed, our result can be used to bound the number of queries and the time complexity for reaching any target error $\epsilon$ and success probability $1 - \delta$.
This is because, for fixed $\amatrix{U}$ and $\amatrix{O}$, the assumptions are trivially satisfied at the instance level, and the complexity of solving that \gls{PQC} optimization problem can be computed directly from \cref{thm:EfficientApproximationAlgorithm}.
Finally, as the method is non-adaptive and all queries to the quantum hardware are performed in bulk before the optimization step, the total number of queries is known a priori, which allows for scheduling access to the quantum device and for the classical optimization task to run separately or be outsourced.
Taken together, these practical implications provide a useful tool for computing the difficulty of optimizing \glspl{PQC}.


\subsection*{Disclaimer}

This paper was prepared for information purposes and is not a product
of HSBC Bank Plc. or its affiliates. Neither HSBC Bank Plc. nor any
of its affiliates make any explicit or implied representation or warranty
and none of them accept any liability in connection with this paper,
including, but not limited to, the completeness, accuracy, reliability
of information contained herein and the potential legal, compliance,
tax or accounting effects thereof.

\clearpage
\appendix
\numberwithin{equation}{section}
\setcounter{equation}{0}
\renewcommand{\theequation}{\thesection\arabic{equation}}%

\clearpage
\bibliography{bibliography/bibliography-project}

\end{document}